\documentclass[11pt]{article}
\usepackage{amssymb}
\usepackage{amsthm}
\usepackage{amsmath}
\usepackage[dvips]{epsfig}
\usepackage{graphicx}
\usepackage{color}
\usepackage{url}
\usepackage{tikz}
\usepackage{caption}
\usepackage{subcaption}
\usepackage{epstopdf}
\usepackage{float}
\usepackage{geometry}
\geometry{verbose,a4paper,tmargin=30mm,bmargin=40mm,lmargin=30mm,rmargin=30mm}
\newtheorem{theorem}{Theorem}
\newtheorem{lemma}{Lemma}[section]
\newtheorem{proposition}[lemma]{Proposition}

\newtheorem{conjecture}[lemma]{Conjecture}
\newtheorem{remark}[lemma]{Remark}

\numberwithin{equation}{section}

\newcommand{\RE}{\mathbb R}

\newcommand{\ome}{\omega}

\DeclareMathOperator{\cn}{cn}

\newenvironment{proof1}%
{\begin{trivlist} \item[]{\em Proof }}%
{\hspace*{\fill}$\rule{.3\baselineskip}{.35\baselineskip}$\end{trivlist}}

\begin{document}

\title{\bf Bifurcations and stability of standing waves \\ in the nonlinear Schr\"{o}dinger equation \\ on the tadpole graph}

\author{Diego Noja$^{1}$, Dmitry Pelinovsky$^{2,3}$, and Gaukhar Shaikhova$^{2,4}$ \\
{\small $^{1}$ Dipartimento di Matematica e Applicazioni, Universit\`a di Milano Bicocca, }\\
{\small via R. Cozzi 55, 20125 Milano, Italy} \\
{\small $^{2}$ Department of Mathematics, McMaster University, Hamilton, Ontario, L8S 4K1, Canada} \\
{\small $^{3}$ Department of Applied Mathematics, Nizhny Novgorod State Technical University,} \\
{\small Nizhny Novgorod, Russia}\\
{\small $^{4}$ Department of General and Theoretical Physics, Eurasian National University,} \\
{\small Astana 010000, Kazakhstan}
}

\date{\today}
\maketitle

\begin{abstract}
We develop a detailed rigorous analysis of edge bifurcations of standing waves in
the nonlinear Schr\"odinger (NLS) equation on a tadpole graph
(a ring attached to a semi-infinite line subject to the Kirchhoff boundary conditions at the junction).
It is shown in the recent work \cite{CFN} by using explicit Jacobi elliptic functions that
the cubic NLS equation on a tadpole graph admits a rich structure of standing waves. Among these,
there are different branches of localized waves bifurcating from the edge of the essential spectrum
of an associated Schr\"{o}dinger operator.

We show by using a modified Lyapunov-Schmidt reduction method that the bifurcation of localized standing waves occurs
for every positive power nonlinearity. We distinguish a primary branch of never vanishing standing waves
bifurcating from the trivial solution and an infinite sequence of higher branches with
oscillating behavior in the ring. The higher branches bifurcate from
the branches of degenerate standing waves with vanishing tail outside the ring.

Moreover, we analyze stability of bifurcating standing waves. Namely, we show that the primary branch
is composed by orbitally stable standing waves for subcritical power nonlinearities, while all
nontrivial higher branches are linearly unstable near the bifurcation point. The stability character of
the degenerate branches remains inconclusive at the analytical level,
whereas heuristic arguments based on analysis of embedded eigenvalues of negative Krein signatures
support the conjecture of their linear instability at least near the bifurcation point.
Numerical results for the cubic NLS equation show that this conjecture is valid and that the degenerate
branches become spectrally stable far away from the bifurcation point.
\end{abstract}

\maketitle

\noindent
{\bf Keywords:} nonlinear Schr\"{o}dinger equation, quantum graphs, standing wave solutions, existence and stability, edge bifurcations.

\vspace{10pt}

\section{Introduction}

The study of existence and properties of standing waves of the nonlinear Schr\"odinger  (NLS) equation
constitutes a continuously developing subject. The NLS equation has potential applications to many realistic problems
such as signal propagation in optical fibers or Bose--Einstein condensation.
Standing waves are usually considered in unbounded homogeneous media \cite{Caz} or
in the periodically modulated media \cite{Pel}. Nevertheless, real systems can exhibit
strong inhomogeneities, due to different nonlinear coefficients in different regions
of the spatial domain or a specific geometry of the spatial domain.

A problem of general interest is the interaction between standing waves in spatially confined systems
and those in large or unbounded reservoirs.  Here we develop a rigorous analysis of bifurcation and stability
of standing waves for the NLS equation with power nonlinearity in the simplest geometry given by a ring attached to a semi-infinite
line. We refer to this model geometry as to the {\it tadpole graph}. At the junction
between the ring and the half line, suitable boundary conditions (referred typically to as the
{\em Kirchhoff boundary conditions})
are given to define the coupling. These boundary conditions ensure conservation
of the current flow through the network junction. The tadpole graph is an example of quantum graphs,
a much studied subject in the last decades (see \cite{BK} and references therein) with
many relevant physical applications.

The linear counterpart of this model, even in the presence
of a magnetic field, was studied by Exner \cite{E}. If the ring is placed on the interval $[-L,L]$
and the semi-infinite interval is $[L,\infty)$, then we define the Laplacian operator
by
\begin{equation}
\label{op-Delta}
\Delta \Psi = \left[ \begin{array}{l} u''(x), \quad x \in (-L,L) \\ v''(x), \quad x \in (L,\infty) \end{array} \right],
\end{equation}
acting on functions in the form
$$
\Psi = \left[ \begin{array}{l} u(x), \quad x \in (-L,L) \\ v(x), \quad x \in (L,\infty)\end{array} \right],
$$
where the primes stand for spatial derivatives. We equip the Laplacian operator (\ref{op-Delta})
with the domain
\begin{equation}
\label{domain-Delta}
\mathcal{D}(\Delta) = \Big\{ \begin{array}{l} (u,v) \in H^2(-L,L)\times H^2(L,\infty): \\
u(L)=u(-L)=v(L), \quad u'(L)-u'(-L)=v'(L) \end{array} \Big\},
\end{equation}
where the Kirchhoff boundary conditions have been used.

Let us show that $\Delta$ is a symmetric operator
from $\mathcal{D}(\Delta)$ to $L^2(-L,L) \times L^2(L,\infty)$.
Indeed, let $\Psi_1 = (u_1,v_1)$ and $\Psi_2 = (u_2,v_2)$ be two elements in $\mathcal{D}(\Delta)$.
The bilinear form for the Laplacian operator satisfies
\begin{eqnarray*}
\langle u_1, u_2'' \rangle_{L^2(-L,L)} + \langle v_1, v_2'' \rangle_{L^2(L,\infty)}
= \langle u_1'', u_2 \rangle_{L^2(-L,L)} + \langle v_1'', v_2 \rangle_{L^2(L,\infty)}
\end{eqnarray*}
if
\begin{eqnarray*}
u_1(L) u_2'(L) - u_1'(L) u_2(L) + v_1'(L) v_2(L) - v_1(L) v_2'(L) \\
 = u_1(-L) u_2'(-L) + u_1'(-L) u_2(-L).
\end{eqnarray*}
The above constraint is indeed satisfied under the Kirchhoff boundary conditions in (\ref{domain-Delta}).
It further follows (see, e.g., Theorem 1.4.4 in \cite{BK}) that the operator $\Delta$ is in fact 
self-adjoint on its domain $\mathcal{D}(\Delta)$.

The linear Schr\"odinger equation on the tadpole graph can be written in the compact form
\begin{equation}\label{eq00}
i\frac{\partial}{\partial t} \Psi = \Delta \Psi,
\end{equation}
where $\Psi = \Psi(t,x)$. We are interested in the nonlinear Schr\"odinger equation on the tadpole graph,
which is the natural generalization of the linear Schr\"{o}dinger equation \eqref{eq00},
\begin{equation}\label{eq0}
i\frac{\partial}{\partial t} \Psi = \Delta \Psi + (p+1)|\Psi|^{2p} \Psi\ ,
\end{equation}
where the nonlinear term $|\Psi|^{2p}\Psi$ is interpreted as a symbol for
$(|u|^{2p} u, |v|^{2p} v)$ defined piecewise on $(-L,L)$ and $(L,\infty)$.
For $p > 0$, the power nonlinearity is of the focusing type and it supports existence
of localized waves on infinite or semi-infinite lines.

Standard application of the fixed point theory shows that local well posedness
holds in the energy space
\begin{equation}
\label{energy-Delta}
{\mathcal E}(\Delta) = \{ (u,v)\in H^1(-L,L)\times H^1(L,\infty) : \quad u(L) = u(-L) = v(L) \}
\end{equation}
and in the operator domain space $\mathcal{D}(\Delta)$ (see \cite{Caz} for the classical theory and
\cite{ACFN2} for applications to the NLS equation on quantum graphs). In the case of subcritical
power nonlinearities
with $p \in (0,2)$, local solutions can also be extended to global solutions either in $\mathcal{E}(\Delta)$ or
in $\mathcal{D}(\Delta)$.

Standing waves of the focusing NLS equation (\ref{eq0}) on the tadpole graph
are given by the solutions of the form
$$
\Psi(t,x) = e^{i\ome t} \Phi(x),
$$
where $\omega$ and $\Phi \in \mathcal{D}(\Delta)$ are considered to be real.
This pair satisfies the stationary NLS equation
\begin{equation}\label{eq}
-\Delta \Phi - (p+1) |\Phi|^{2p} \Phi =  \ome \Phi \qquad \ome \in\RE\,,\;\Phi\in \mathcal{D}(\Delta).
\end{equation}
More explicitly, using $u$ and $v$ as components of the vector $\Phi$, we can write the
stationary NLS equation (\ref{eq}) as a system of two NLS equations, one on the ring and
the other one on the half line,
coupled by the Kirchhoff boundary conditions:
\begin{equation}
\label{stat-nls}
\left\{ \begin{array}{l} - u''(x) - (p+1) |u|^{2p} u = \omega u, \quad x \in (-L,L)\ , \\
- v''(x) - (p+1) |v|^{2p} v = \omega v, \quad x \in (L,\infty)\ ,\\
u(L) = u(-L) = v(L)\ ,\\ u'(L) - u'(-L) = v'(L)\ .
\end{array} \right.
\end{equation}

The subject of NLS equations on quantum graphs has seen many developments in the recent years.
From the physical point of view, the most promising interest is in the experimental creation
and management of various kinds of traps for Bose-Einstein condensates (see \cite{D+,GMM+,RAC+,SMKJ} and reference therein).
Various types of junctions have been modeled to show formation and trapping of localized waves and
existence of coherent structures with symmetry breaking \cite{[BS],[HTM],[Sob],[SSBM],[ZS]}.

At the rigorous mathematical level, the emphasis has been placed on the case of Y-junctions or more generally
on star graphs, where existence, variational properties, stability of standing waves, and
scattering of localized waves have been studied, e.g. in \cite{ACFN0,ACFN2,N}.
Very little is known about propagation or formation of standing waves in more complex structures,
where oscillations of waves can be present. For example, the authors of \cite{GnSD} demonstrate
numerically that a complex set of broad and narrow resonances shows up after inserting
a single nonlinear edge in a network, where linear Schr\"odinger propagation occurs.
A result on the absence of nonlinear ground states in networks with closed cycles is given in \cite{AST}
under a set of certain topological conditions.

A classification of standing waves in the present model (\ref{stat-nls}) of the NLS equation on the tadpole graph
is given in \cite{CFN} for the cubic case $p = 1$. The authors of \cite{CFN} showed
a rather unexpected and rich structure of the nonlinear waves of the system.
Several interesting bifurcations appear, giving rise to nonlinear standing waves
embedded in the essential spectrum of $\Delta$, a countable set of families of localized waves
bifurcating from the edge of the essential spectrum of $\Delta$,
and a wealth of families of standing waves which have no linear analogues.
Standing waves were constructed explicitly in \cite{CFN}, thanks to the known properties of Jacobi elliptic functions
related to the cubic NLS equation.

The task of the present work is to extend these results to the NLS equation with the generalized power
nonlinearity (\ref{eq0}). Our particular emphasis is on
the existence and stability of standing localized waves bifurcating from the edge of the essential spectrum
of $\Delta$. In contrast to the previous work in \cite{CFN}, we prove the existence
and bifurcation results without relying on the known exact solutions but using a modification
of the Lyapunov--Schmidt reduction method.
We also address the linear and orbital stability of the bifurcating standing waves near the bifurcation threshold,
which was not considered in \cite{CFN} even in the case of cubic nonlinearities.

As is well-known (see, e.g., Chapter 4 in \cite{Pel} for review), spectral and orbital stability of
the stationary solutions in the time evolution of the NLS equation (\ref{eq0})
is determined by the spectra of the self-adjoint operators $L_+$ and $L_-$,
which defines the energy quadratic form near the stationary solution.
In our context, the spectral problem for operator $L_-$ is defined by the boundary-value problem
\begin{equation}
\label{L-minus-nls}
\left\{ \begin{array}{l} - U''(x) - \omega U - (p+1) |u|^{2p} U = \lambda U, \quad x \in (-L,L), \\
- V''(x) - \omega V - (p+1) |v|^{2p} V = \lambda V, \quad x \in (L,\infty), \\
U(L) = U(-L) = V(L), \\
U'(L)-U'(-L) = V'(L),
\end{array} \right.
\end{equation}
where $(u,v)$ represents any stationary solution of the boundary-value problem (\ref{stat-nls}).
The spectral problem for operator $L_+$ is defined by the boundary-value problem
\begin{equation}
\label{L-plus-nls}
\left\{ \begin{array}{l} - U''(x) - \omega U - (2p+1)(p+1) |u|^{2p} U = \lambda U, \quad x \in (-L,L), \\
- V''(x) - \omega V - (2p+1)(p+1) |v|^{2p} V = \lambda V, \quad x \in (L,\infty), \\
U(L) = U(-L) = V(L), \\
U'(L)-U'(-L) = V'(L).
\end{array} \right.
\end{equation}
Similarly to the operator $\Delta$, operators $L_{\pm}$ are self-adjoint in $L^2(-L,L) \times L^2(L,\infty)$
with the domain $\mathcal{D}(\Delta)$ given in (\ref{domain-Delta}).

Our main result is that a countable set of standing localized waves bifurcates from the end point of the essential
spectrum of $\Delta$ at $\omega = 0$. This bifurcation, known as {\em the edge bifurcation},
was previously studied in the context of linear eigenvalue problems \cite{KS04}.
A {\em primary branch} of standing waves has no nodes
and it bifurcates from the vanishing state. The subsequent families of standing waves, or {\em higher branches},
can be ordered according to the increasing number of nodes and each family bifurcates from one of the standing waves which are identically zero outside the ring $(-L,L)$.

For small $\omega < 0$ and subcritical power nonlinearities with $p \in (0,2)$,
we show that the primary branch is composed by orbitally stable standing waves, whereas the higher branches
with non-vanishing localized tails are spectrally unstable. The spectral analysis is inconclusive
for the higher branches with the vanishing tail outside the ring. Numerical analysis
is developed for the cubic NLS equation with $p = 1$ to show linear instability of these higher branches
near the bifurcation point $\omega = 0$ and their spectral stability in the limit of large negative $\omega$.
At the same time, conclusions on the orbital stability of the standing waves along the primary branch
and the spectral instability of the higher branches with nonvanishing localized tails remain true for all
negative $\omega$. We note that orbital stability of the nodeless primary branch supports the idea
that it represents the ground state of the system, that is, it minimizes energy at constant mass,
as conjectured in \cite{CFN}.

The paper is organized as follows. In Section 2 we prove existence of a countable set
of standing waves which vanish on the tail of the tadpole graph.
This set is the basis for the subsequent analysis of bifurcation and stability of new standing waves.

In Section 3 we show the existence of the primary branch of
standing waves, which bifurcates from the trivial solution at $\omega=0$.
In Section 4 we construct the higher branches of standing waves, which bifurcate at
$\omega=0$ from each solution in the countable set constructed in Section 2, the bifurcating solutions
have non-vanishing localized tail outside the ring. Note that both perturbative results
are based on a non-trivial adaptation of the Lyapunov--Schmidt reduction method,
which holds near the edge bifurcation.

In Section 5 we consider orbital stability of primary branch near the bifurcation point,
using the general theory of Grillakis-Shatah-Strauss \cite{GSS1,GSS2} and Grillakis \cite{Gr}.
We count multiplicities of
the negative and zero eigenvalues of the linearized operators $L_-$ and $L_+$ and
verify the slope condition, which indicates if the number of negative eigenvalues of
$L_+$ is reduced by one on an orthogonal complement of the eigenvector of $L_-$.

In Section 6 we prove the linear instability of higher branches of standing waves
with non-vanishing localized tails outside the ring.
We note that the count of negative and zero eigenvalues of the linearized operators $L_-$ and $L_+$ is more difficult for the
higher branches and it involves eigenvalue problems with nonlinear dependence of the spectral
problem on the spectral parameter. We also show that the eigenvalue count is inconclusive
for the higher branches with vanishing tails outside the ring. Using heuristic arguments, we conjecture
that these higher branches are also unstable at least for small values of $\omega$.

Finally, in Section 7, we develop numerical approximations for the cubic case $p = 1$.
Numerical findings illustrate our analytical results near $\omega = 0$ and give
a complete picture on existence and stability of standing waves on the tadpole graphs
for larger values of the parameter $\omega$.

We use the following notations throughout the paper: 
\begin{itemize}
\item $H^2$ denotes the usual Sobolev space of square integrable functions with square integrable second derivatives;
\item $C^m$ denotes the space of $m$-times continuously differentiable functions with bounded derivatives up to the order $m$; 
\item ${\mathcal O}(\epsilon)$ denotes a quantity that converges to zero at the same rate as $\epsilon$  as $\epsilon \to 0$; 
\item ${\rm o}(\epsilon)$ denotes a quantity that converges to zero faster than $\epsilon$ as $\epsilon \to 0$; 
\item primes always denote derivatives with respect to the spatial variable.
\end{itemize}

\section{Standing waves prior to bifurcations}

Here we consider standing waves of the stationary NLS equation \eqref{stat-nls},
which are identically zero on the tail of the tadpole graph.
The decoupled stationary solutions with $v(x) \equiv 0$ for all $x \in [L,\infty)$
satisfy the nonlinear boundary value problem on the ring:
\begin{equation}
\label{stat-nls-scalar}
\left\{ \begin{array}{l} - u''(x) - (p+1) |u|^{2p} u = \omega u, \quad x \in (-L,L), \\
u(L) = u(-L) = 0, \\
u'(L) = u'(-L).
\end{array} \right.
\end{equation}
Although the boundary conditions over-determine the second-order boundary-value problem,
we can look for odd $2L$-periodic solutions in $H^2_{\rm per, odd}(-L,L)$, which satisfy the Dirichlet
boundary conditions at $x = \pm L$ as well as at $x = 0$. For each solution with $u'(0) > 0$,
we have also another solution with $u'(0) < 0$ because both $u$ and $-u$ satisfy the
same boundary-value problem (\ref{stat-nls-scalar}).

First, we characterize trajectories of the second-order differential equation
in the system (\ref{stat-nls-scalar}) on the phase plane $(u,u')$.
For any $\omega \in \mathbb{R}$, the differential equation is integrable
thanks to the energy invariant
\begin{equation}
\label{energy-invariant}
E = \left( \frac{d u}{dx} \right)^2 + \left( \omega + |u|^{2p} \right) u^2 = {\rm const}.
\end{equation}
For every $\omega \geq 0$, the level set (\ref{energy-invariant}) determines
closed trajectories on the phase plane $(u,u')$, which correspond to periodic solutions
with a minimal period, say $2T$. We claim the following.

\begin{lemma}
For every $p > 0$ and every $\omega \geq 0$, the period-to-energy map
\begin{equation}
\label{map-1}
\left(0,\frac{\pi}{\sqrt{\omega}} \right) \ni T \to E \in (0,\infty)
\end{equation}
associated with the closed periodic trajectories that surrounds the zero critical point
of the energy invariant (\ref{energy-invariant}) is a $C^1$
diffeomorphism with
\begin{equation}
\label{period-4}
E'(T) < 0, \quad \mbox{\rm for all } \; T \in \left(0,\frac{\pi}{\sqrt{\omega}} \right).
\end{equation}
\label{lemma-period}
\end{lemma}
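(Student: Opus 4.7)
The plan is to parametrize each closed orbit around the origin by its positive amplitude $a > 0$ (the unique turning point on the positive $u$-axis, where $u' = 0$), to express both the half-period $T$ and the energy $E$ as explicit functions of $a$, and then to show by differentiating under the integral sign that $a \mapsto T(a)$ is a strictly decreasing $C^1$ bijection of $(0,\infty)$ onto $(0,\pi/\sqrt{\omega})$. Combined with the obviously strictly increasing smooth map $a \mapsto E(a) = \omega a^2 + a^{2p+2}$, which comes from (\ref{energy-invariant}) evaluated at a turning point, the inverse function theorem then delivers the desired $C^1$ diffeomorphism (\ref{map-1}) and the sign condition (\ref{period-4}).

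Concretely, the conservation law (\ref{energy-invariant}) gives the quarter-period integral
\[
\frac{T(a)}{2} \;=\; \int_0^a \frac{du}{\sqrt{\omega(a^2-u^2) + a^{2p+2} - u^{2p+2}}},
\]
and the scaling $u = as$ converts this into
\[
T(a) \;=\; 2 \int_0^1 \frac{ds}{\sqrt{\omega(1-s^2) + a^{2p}(1 - s^{2p+2})}}.
\]
Since $1-s^2$ and $1-s^{2p+2}$ both vanish linearly at $s = 1$, the integrand has only an integrable $(1-s)^{-1/2}$ singularity, locally uniformly in $a \in (0,\infty)$. This gives $T \in C^1((0,\infty))$ together with the endpoint behaviour $T(a) \to \pi/\sqrt{\omega}$ as $a \to 0^+$ (read as $+\infty$ when $\omega = 0$) and $T(a) \to 0$ as $a \to \infty$. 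Differentiating under the integral sign then yields
\[
T'(a) \;=\; -\, 2 p\, a^{2p-1} \int_0^1 \frac{1 - s^{2p+2}}{\bigl[\omega(1-s^2) + a^{2p}(1 - s^{2p+2})\bigr]^{3/2}}\, ds,
\]
which is strictly negative for every $a > 0$ and every $p > 0$.

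The only delicate point is justifying the exchange of derivative and integral: after the $a$-derivative the denominator carries a $(1-s)^{3/2}$ factor at $s = 1$, but the extra numerator $1 - s^{2p+2}$ again vanishes linearly there, so the differentiated integrand still has only an integrable $(1-s)^{-1/2}$ singularity, dominated locally uniformly in $a$ via the elementary bound $1 - s^k \leq k(1-s)$ for $s \in [0,1]$. I expect this domination estimate to be the only technical obstacle, and it is routine. Once it is in hand, $a \mapsto T(a)$ is a $C^1$ bijection onto $(0,\pi/\sqrt{\omega})$ with nowhere vanishing derivative, so the inverse function theorem provides a $C^1$ inverse $T \mapsto a(T)$; composition with $a \mapsto E(a)$ yields the $C^1$ diffeomorphism (\ref{map-1}), and the chain rule $E'(T) = E'(a)/T'(a)$ gives (\ref{period-4}).
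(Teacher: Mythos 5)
Your proof is correct and follows essentially the same route as the paper's: parametrize the orbit by the turning-point amplitude ($a$, the paper's $u_0$), derive the rescaled quarter-period integral, differentiate under the integral sign to get strict monotonicity of $T$, note that $E(a)=\omega a^2+a^{2p+2}$ is strictly increasing, and conclude by the chain rule and inverse function theorem. The only difference is that you spell out the domination estimate justifying the differentiation under the integral, which the paper leaves implicit.
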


\begin{proof}
By integrating the trajectory of the first-order equation (\ref{energy-invariant})
from the point $u(-T) = 0$ and $u'(-T) = \sqrt{E}$ to the point
$u(-T/2) = u_0$ and $u'(-T/2) = 0$, where $u_0$ is the positive root of the algebraic equation
\begin{equation}
\label{period-1}
( \omega + u_0^{2p} ) u_0^2  = E,
\end{equation}
we obtain the explicit representation of the half period $T$ in terms of $E$,
\begin{equation}
\label{period-2}
T = 2 \int_0^{u_0} \frac{du}{\sqrt{E - ( \omega + u^{2p}) u^2}}.
\end{equation}
Substituting $E$ from (\ref{period-1}) and using the change of variables $u = u_0 x$ with $x \in (0,1)$,
we rewrite (\ref{period-2}) in the equivalent form
\begin{equation}
\label{period-3}
T = 2 \int_0^{1} \frac{dx}{\sqrt{\omega (1-x^2) + u_0^{2p} (1 - x^{2p+2})}}.
\end{equation}
Thus, $E$ and $T$ are parameterized by $u_0$ in equations (\ref{period-1}) and (\ref{period-3}).
For every $\omega \geq 0$, we use these equations and obtain the asymptotic representations of $E$ and $T$
in the limits $u_0 \to 0$ and $u_0 \to \infty$, namely
$$
T = \frac{\pi}{\sqrt{\omega}} + \mathcal{O}(u_0^{2p}), \quad E = \omega u_0^2 + \mathcal{O}(u_0^{2p+2}),
\quad \mbox{\rm as} \quad u_0 \to 0,
$$
and
$$
T = \frac{2}{u_0^p} \int_0^1 \frac{dx}{\sqrt{1-x^{2p+2}}} + \mathcal{O}\left(\frac{1}{u_0^{3p}}\right), \quad
E = u_0^{2p+2} + \mathcal{O}(u_0^{2}), \quad \mbox{\rm as} \quad u_0 \to \infty,
$$
where the integral returns a finite value and we have used the fact that $x = 1$ is a simple root of both
$1-x^2$ and $1 - x^{2p+2}$.

Now we show that the period-to-energy map (\ref{map-1})
is $C^1$ and satisfies (\ref{period-4}). We use the chain rule, since
the map $(0,\infty) \ni u_0 \mapsto E \in (0,\infty)$
is monotonically increasing for every $\omega \geq 0$. Therefore, we only need to show that
the map $(0,\infty) \ni u_0 \mapsto T \in \left( 0,\frac{\pi}{\sqrt{\omega}}\right)$
is monotonically decreasing for every $\omega \geq 0$. This follows from the explicit computation,
\begin{equation}
\label{period-5}
\frac{d T}{d u_0} = -2 p u_0^{2p-1} \int_0^{1} \frac{(1 - x^{2p+2}) dx}{\sqrt{(\omega (1-x^2) + u_0^{2p} (1 - x^{2p+2}))^3}} < 0,
\end{equation}
where the integral returns a finite value because again $x = 1$ is a simple root of both
$1-x^2$ and $1 - x^{2p+2}$. The assertion of the lemma is proved.
\end{proof}

\begin{lemma}
For every $p > 0$ and every $\omega < 0$, the period-to-energy map
\begin{equation}
\label{map-2}
\left(0,\infty \right) \ni T \to E \in (0,\infty)
\end{equation}
associated with the closed periodic trajectories
that surround all critical points of the energy invariant (\ref{energy-invariant}) is a $C^1$ diffeomorphism with
\begin{equation}
E'(T) < 0, \quad \mbox{\rm for all } \; T \in \left(0,\infty \right).
\end{equation}
\label{lemma-period-2}
\end{lemma}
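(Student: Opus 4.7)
The plan is to follow the blueprint of Lemma~\ref{lemma-period} essentially verbatim, after first sorting out the more intricate phase portrait that arises for $\ome<0$. Write the effective potential $V(u):=(\ome+u^{2p})u^2$. For $\ome<0$ it has a local maximum at $u=0$ with $V(0)=0$, and two symmetric minima at $u=\pm u_c$, where $u_c=\bigl(-\ome/(p+1)\bigr)^{1/(2p)}$. Thus the level sets $E=V(u)+(u')^2$ of the energy invariant (\ref{energy-invariant}) produce three families of closed orbits: small loops around each minimum ($V(\pm u_c)<E<0$), a homoclinic/heteroclinic structure through the origin at $E=0$, and the large orbits $E>0$ that surround all three critical points. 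The lemma concerns the last family, for which $E=(\ome+u_0^{2p})u_0^2$ with a turning point $u_0>u_\star:=(-\ome)^{1/(2p)}$.

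By the symmetry $V(-u)=V(u)$, each such orbit is symmetric in both axes, so the quarter-period from $(0,\sqrt{E})$ to $(u_0,0)$ gives, after the same substitution $u=u_0x$ as in Lemma~\ref{lemma-period},
\begin{equation*}
T \;=\; 2\int_0^{u_0}\frac{du}{\sqrt{E-(\ome+u^{2p})u^2}}
   \;=\; 2\int_0^{1}\frac{dx}{\sqrt{\ome(1-x^2)+u_0^{2p}(1-x^{2p+2})}},
\end{equation*}
and $E$, $T$ are again parameterized by $u_0\in(u_\star,\infty)$. The first task is to verify that the integrand $f(x;u_0):=\ome(1-x^2)+u_0^{2p}(1-x^{2p+2})$ is strictly positive on $[0,1)$ with a simple zero at $x=1$. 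From $f(0;u_0)=\ome+u_0^{2p}>0$, $f(1;u_0)=0$, and $f_x(x;u_0)=2x\bigl(-\ome-(p+1)u_0^{2p}x^{2p}\bigr)$, one sees that $f_x$ has a unique sign change on $(0,1)$, so $f$ is unimodal and hits zero only at the endpoint; moreover $f_x(1;u_0)=-2(\ome+(p+1)u_0^{2p})<0$ since $u_0>u_\star>u_c$, which gives the simple-zero property needed for integrability of $f^{-1/2}$ and $(1-x^{2p+2})f^{-3/2}$.

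With this in hand, the asymptotics are immediate. As $u_0\downarrow u_\star$, $f(0;u_0)\to 0$ while the endpoint singularity at $x=1$ persists, forcing $T\to\infty$ (this is the infinite period of the separatrix). As $u_0\to\infty$, the same rescaling argument as in Lemma~\ref{lemma-period} yields
\begin{equation*}
T \;=\; \frac{2}{u_0^p}\int_0^1\frac{dy}{\sqrt{1-y^{2p+2}}} + \mathcal{O}\!\bigl(u_0^{-3p}\bigr)\;\to\;0.
\end{equation*}
Monotonicity of $E$ in $u_0$ follows from $E'(u_0)=2u_0\bigl[\ome+(p+1)u_0^{2p}\bigr]>0$ on $(u_\star,\infty)$, and differentiating the second representation of $T$ under the integral gives the analogue of (\ref{period-5}),
\begin{equation*}
\frac{dT}{du_0}\;=\;-2p\,u_0^{2p-1}\int_0^1\frac{(1-x^{2p+2})\,dx}{\bigl(\ome(1-x^2)+u_0^{2p}(1-x^{2p+2})\bigr)^{3/2}}\;<\;0,
\end{equation*}
the integral being finite thanks to the simple-zero analysis. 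Together these show that $u_0\mapsto T$ is a $C^1$ diffeomorphism of $(u_\star,\infty)$ onto $(0,\infty)$ and, composing with $u_0\mapsto E$, that (\ref{map-2}) is a $C^1$ diffeomorphism with $E'(T)<0$.

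The only genuinely new obstacle compared with Lemma~\ref{lemma-period} is the positivity and simple-zero structure of $f(\,\cdot\,;u_0)$: when $\ome<0$ the two terms in $f$ have opposite signs, so one must really verify via the unimodality argument above that no interior zero of $f$ appears and that the zero at $x=1$ remains simple throughout the parameter range. Once this is settled, the rest of the proof transcribes line-by-line from that of Lemma~\ref{lemma-period}.
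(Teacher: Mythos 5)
Your proof is correct and follows essentially the same route as the paper, which simply describes the changed phase-plane topology for $\omega<0$, notes that $T\to\infty$ as $E\to 0$, and states that the rest repeats the proof of Lemma \ref{lemma-period}. Your added verification that the integrand $\omega(1-x^2)+u_0^{2p}(1-x^{2p+2})$ stays positive on $[0,1)$ with a simple zero at $x=1$ when the two terms have opposite signs is exactly the detail the paper leaves implicit, and you handle it correctly.
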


\begin{proof}
For $\omega < 0$, the phase-plane topology of the energy invariant (\ref{energy-invariant})
changes near the origin (for small values of $E$), where the zero critical point of $E$
becomes a saddle point and a pair of homoclinic orbits connecting
the zero critical point arise. The homoclinic orbits correspond to the zero value of $E$.
Outside of the homoclinic orbits, for $E > 0$, a family of closed periodic trajectories
surrounding all three critical points of $E$ exists, which correspond to the periodic solution with a minimal period $2T$.
As $E \to 0$, we have $T \to \infty$. The rest of the proof repeats the proof of Lemma \ref{lemma-period}.
\end{proof}

Using Lemmas \ref{lemma-period} and \ref{lemma-period-2}, we prove
the existence of suitable solutions to the homogeneous problem (\ref{stat-nls-scalar}).
The following proposition gives the relevant result.

\begin{proposition}
For every $p > 0$, the boundary-value problem (\ref{stat-nls-scalar}) with $\omega \in (-\infty,\omega_n)$
admits two solutions $u^{\pm}_{n,\omega}$ in $H^2_{\rm per, odd}(-L,L)$, where
$\omega_n := \frac{\pi^2 n^2}{L^2}$ and $n \in \mathbb{N}$.
Each pair of $u^{\pm}_{n,\omega}$ is uniquely defined by the conditions
$u_{n,\omega}^{+ \prime}(0) > 0$ and $u_{n,\omega}^{- \prime}(0) < 0$.
In fact, $u_{n,\omega}^- = -u_{n,\omega}^+$. Moreover,
the map $\omega \mapsto
u^{\pm}_{n,\omega} \in H^2_{\rm per, odd}(-L,L)$ is $C^1$ in $\omega$.
\label{proposition-countable}
\end{proposition}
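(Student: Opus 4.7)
The plan is to reduce the boundary-value problem (\ref{stat-nls-scalar}) to a period-matching problem for closed orbits of the planar ODE
\[ u'' + \omega u + (p+1) |u|^{2p} u = 0, \]
and then to invoke Lemmas \ref{lemma-period} and \ref{lemma-period-2} to select a unique orbit for each admissible $\omega$.

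First I would exploit the ansatz $u \in H^2_{\rm per,odd}(-L,L)$. Oddness forces $u(0)=0$; combined with $2L$-periodicity ($u(L)=u(-L)$) and oddness ($u(-L)=-u(L)$) one gets $u(\pm L)=0$, while $u'(L)=u'(-L)$ is automatic by periodicity. Hence the over-determined boundary conditions in (\ref{stat-nls-scalar}) are already built into the function space, and the problem reduces to producing a nontrivial ODE solution with initial data $u(0)=0$, $u'(0)=a\neq 0$ that vanishes at $x=L$.

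Next, such a solution has energy $E=a^2>0$ by (\ref{energy-invariant}) and sits on a closed orbit of minimal period $2T$ described by Lemma \ref{lemma-period} (for $\omega\ge 0$) or Lemma \ref{lemma-period-2} (for $\omega<0$). Since the successive zeros of $u$ occur at $x=kT$, $k\in\mathbb{Z}$, the condition $u(L)=0$ is equivalent to $T=L/n$ for some $n\in\mathbb{N}$. For $\omega\ge 0$ this requires $L/n\in(0,\pi/\sqrt{\omega})$, i.e. $\omega<\pi^2 n^2/L^2=\omega_n$; for $\omega<0$ it holds for every $n\in\mathbb{N}$. In either range, the strict monotonicity $E'(T)<0$ from the Lemmas delivers a unique $E=E_n(\omega)>0$, hence a unique orbit in the phase plane.

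From the initial data $u(0)=0$, $u'(0)=\pm\sqrt{E_n(\omega)}$, standard ODE theory produces two global $C^2$ solutions $u^\pm_{n,\omega}$. Oddness of $u^+_{n,\omega}$ follows from the reflection symmetry of the equation together with $u^+_{n,\omega}(0)=0$: the function $x\mapsto -u^+_{n,\omega}(-x)$ solves the same Cauchy problem and therefore equals $u^+_{n,\omega}$. Periodicity with period $2L=2nT$ is built into the choice $T=L/n$, while $u^-_{n,\omega}=-u^+_{n,\omega}$ is immediate from the $u\mapsto -u$ invariance of the ODE. The $C^1$-dependence of $\omega\mapsto u^\pm_{n,\omega}$ is obtained by implicit differentiation: the integral (\ref{period-3}) and its counterpart for $\omega<0$ show that $T(E,\omega)$ is jointly $C^1$ in $(E,\omega)$, and $\partial_E T\neq 0$ by (\ref{period-4}); the implicit function theorem then yields $E_n\in C^1$ and smooth dependence of the ODE flow on parameters lifts this to $C^1$-dependence of $u^\pm_{n,\omega}$ in $H^2_{\rm per,odd}(-L,L)$.

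The main obstacle I anticipate is the global regularity across the regime change at $\omega=0$: Lemmas \ref{lemma-period} and \ref{lemma-period-2} are stated separately with different phase-plane pictures on either side, so one must verify that the two representations for $T(E,\omega)$ patch together smoothly as $\omega$ crosses $0$, and that the inversion $E=E_n(\omega)$ inherits this smoothness. This is essentially an inspection of a parameter-dependent integral with an integrable square-root singularity at the endpoint, but it is the one place where some care is needed to rule out spurious singularities at $\omega=0$.
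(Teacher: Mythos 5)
Your proposal is correct and follows essentially the same route as the paper: both reduce the over-determined problem to finding a closed phase-plane orbit with half-period $T = L/n$ and invoke the monotone period-to-energy maps of Lemmas \ref{lemma-period} and \ref{lemma-period-2} to select a unique energy level, hence the unique pair $\pm u$. The only cosmetic difference is that the paper first performs a local Lyapunov--Schmidt bifurcation from the linear eigenfunction $\sin(\pi n x/L)$ at $\omega = \omega_n$ before using the period map for global continuation, whereas you apply the period-map argument directly for every $\omega < \omega_n$; your worry about patching at $\omega = 0$ is harmless because the single representation (\ref{period-3}) is jointly smooth across $\omega = 0$ whenever $E > 0$.
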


\begin{proof}
For $\omega = \omega_n := \frac{\pi^2 n^2}{L^2}$, there exists a solution of
the linear version of the boundary-value problem (\ref{stat-nls-scalar}) in $H^2_{\rm per, odd}(-L,L)$
given by
\begin{equation}
\label{solution-bifurcation}
u(x) = u_n(x) := \sin\left(\frac{\pi n x}{L}\right).
\end{equation}
The eigenvalue $\omega = \omega_n$ is simple
in the space of odd $(2L)$-periodic solutions. By the standard Lyapunov--Schmidt reduction method,
there exists a unique $C^1$ continuation of the solution $u \in H^2_{\rm per, odd}(-L,L)$ of the nonlinear boundary-value problem
(\ref{stat-nls-scalar}) with $u'(0) > 0$ with respect to the parameter $\omega$ near $\omega_n$. Moreover,
a relatively straightforward computation shows that the continuation exists
for $\omega \lesssim \omega_n$. Let us denote this solution as $u_{n,\omega}^+$.
By the invariance of the boundary-value problem (\ref{stat-nls-scalar}) with respect
to the transformation $u \mapsto -u$, we can also construct another solution
$u_{n,\omega}^- = - u_{n,\omega}^+$, which satisfies the condition $u'(0) < 0$.

In order to prove that the solution branches $u^{\pm}_{n,\omega}$ persist for any $\omega < \omega_n$
and remain $C^1$ in $\omega$, we use the result of Lemmas \ref{lemma-period} and \ref{lemma-period-2}.
If $\omega \in [0,\omega_n)$, then $T_n := L/n$ belongs to the
range $\left(0,\frac{\pi}{\sqrt{\omega}}\right)$, hence, two odd $2L$-periodic
solutions $u^{\pm}_{n,\omega}$ exist for the energy level $E_n = E(T_n) > 0$.
If $\omega \in (-\infty,0)$, we can still find two odd $2L$-periodic solutions
$u^{\pm}_{n,\omega}$ for the given energy level $E_n = E(T_n) > 0$.
Consequently, the solutions $u^{\pm}_{n,\omega}$ are uniquely continued with respect to parameter $\omega$
for every $\omega \in (-\infty,\omega_n)$. Moreover, from monotonicity
of the period-to-energy maps (\ref{map-1}) and (\ref{map-2}), $C^1$ smoothness of the nonlinear terms in the
boundary-value problem
(\ref{stat-nls-scalar}) for $p > 0$, and $C^1$ smoothness of the energy invariant (\ref{energy-invariant})
with respect to $\omega$, it follows that the map $(-\infty,\omega_n) \ni \omega \mapsto
u^{\pm}_{n,\omega} \in H^2_{\rm per, odd}(-L,L)$ is $C^1$ in $\omega$.
\end{proof}

\begin{remark}
In the cubic case $p = 1$, one can obtain the explicit form of the family $u^{\pm}_{n,\omega}$.
The solutions are expressed by the Jacobian elliptic functions with parameters depending on $\omega$.
See \cite{CFN} and Section 7 below.
\end{remark}

\begin{remark}
The above construction of the countable double set of solutions $\{u^{\pm}_{n,\omega}\}_{n \in \mathbb{N}}$
can be extended to more general nonlinearities. The only property needed is the topological structure of
the level set $E$ for trajectories on the phase plane $(u,u')$.
\end{remark}

\begin{remark}
The smooth families of solutions $\{u^{\pm}_{n,\omega}\}_{n \in \mathbb{N}}$
bifurcate from the linear eigenstates of the operator $\Delta$
in the space of odd $(2L)$-periodic solutions. The bifurcation branches so obtained are
then globally extended to $\omega\in (-\infty,\omega_n)$.
\end{remark}

\begin{remark}
The boundary-value problem (\ref{stat-nls-scalar}) in the space of odd $2L$-periodic solutions
gives a subset of all solutions of the stationary NLS equation (\ref{eq})
on the interval $[-L,L]$ subject to the Dirichlet boundary conditions at $x = \pm L$.
Existence and stability of the latter solutions are considered in \cite{FHK}.
\end{remark}

In what follows, we are concerned with bifurcations of other branches of solutions of
the coupled boundary--value problem \eqref{stat-nls}. First, we observe that the spectrum of the linear operator
$-\Delta$ with the domain $\mathcal{D}(\Delta)$ given by (\ref{op-Delta}) and (\ref{domain-Delta}) is located on $[0,\infty)$
and includes the essential spectrum and the set of embedded simple eigenvalues $\{ \omega_n \}_{n \in \mathbb{N}}$.
As a result, for $\omega \geq 0$, the only square integrable solution $(u,v)$ of the coupled boundary--value problem \eqref{stat-nls}
is the solution with vanishing $v = 0$, that is, the solution of Proposition \ref{proposition-countable}.

At $\omega = 0$, the {\em edge bifurcation} takes place, when
new branches of solutions can bifurcate off the trivial solution $u = 0$ or
the countable double set of solutions $\{ u_{n,\omega}^{\pm} \}_{n \in \mathbb{N}}$
satisfying the scalar boundary--value problem (\ref{stat-nls-scalar}).
The new solutions occur for small negative $\omega$.
Correspondingly, we refer to the {\em primary branch} for
solutions of the coupled problem \eqref{stat-nls} bifurcating from the trivial solution
and to the {\em higher branches} for solutions of the coupled problem \eqref{stat-nls} bifurcating
from the nontrivial solutions in the set $\{ u_{n,\omega}^{\pm} \}_{n \in \mathbb{N}}$.

To study bifurcations of both the primary and higher branches,
we can set $\omega = -\epsilon^2$ and consider small values of $\epsilon$.  Without loss of
generality, we restrict $\epsilon$ to positive values.
For the $v$ component, we use the scaling transformation for the stationary NLS equation (\ref{stat-nls})
and express explicitly the dependence of the solution on $\epsilon$:
\begin{equation}
\label{scaling-v}
v(x) = \epsilon^{\frac{1}{p}} \phi(z), \quad z = \epsilon (x-L),
\end{equation}
where $\phi$ is a decaying solution of the second-order equation
\begin{equation}
\label{soliton}
-\phi''(z) + \phi - (p+1) |\phi|^{2p} \phi = 0, \quad z > 0.
\end{equation}

Let $\phi_0$ be a unique solitary wave of the second-order equation (\ref{soliton}) such that
$\phi_0(0) > 0$ and $\phi_0'(0) = 0$. Recall that $\phi_0(z) > 0$ for all $z > 0$.
Then, there exists a one-parameter family of positive
decaying solutions $\phi(z) = \phi_0(z+a)$, parameterized by the translation $a \in \mathbb{R}$.
Note that the family is known in the simple explicit form,
\begin{equation}
\label{soliton-sech}
\phi_0(z) = {\rm sech}^{\frac{1}{p}}(pz).
\end{equation}

The bifurcation problem for stationary solutions with nonzero $v$ can now be reduced to the closed, over-determined system of equations
\begin{equation}
\label{bif-nls}
\left\{ \begin{array}{l} - u''(x) + \epsilon^2 u - (p+1) |u|^{2p} u = 0, \quad x \in (-L,L), \\
u(L) = u(-L) = \epsilon^{\frac{1}{p}} \phi_0(a), \\
u'(L) - u'(-L) = \epsilon^{1+\frac{1}{p}} \phi'_0(a).
\end{array} \right.
\end{equation}
The additional boundary condition on the tadpole graph is supposed to specify
uniquely the additional parameter $a$ in terms of $\epsilon$.
Solutions to the boundary-value problem (\ref{bif-nls}) are different between the primary and
higher branches. Therefore, we proceed differently with the primary and higher
branches in the next two sections.

\section{Primary branch}

The following theorem specifies a unique primary branch of stationary solutions to
the boundary-value problem (\ref{bif-nls}) with $\epsilon > 0$ such that $u = 0$ as $\epsilon = 0$.

\begin{theorem}
For every $p > 0$ and every $\epsilon > 0$ sufficiently small, there exists a unique positive solution
$u \in C^{\infty}(-L,L)$ and $a \in \mathbb{R}$
of the boundary-value problem (\ref{bif-nls}) such that
$$
\| u \|_{L^{\infty}(-L,L)} = \mathcal{O}(\epsilon) \quad \mbox{\rm and} \quad
a = \mathcal{O}(\epsilon) \quad \mbox{\rm as} \quad \epsilon \to 0.
$$
Moreover, the following asymptotic expansions hold
\begin{equation}
u = {\epsilon}^{\frac{1}{p}} (1+ \mathcal{O}_{C^{\infty}(-L,L)}(\epsilon^2))  \quad
\mbox{\rm and} \quad a = 2L \epsilon + \mathcal{O}(\epsilon^3).
\end{equation}
\label{theorem-primary}
\end{theorem}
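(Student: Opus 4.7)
The plan is to rescale the unknown on the ring so that the boundary-value problem becomes a regular perturbation of a well-understood degenerate limit, and then to apply the implicit function theorem to a finite-dimensional shooting reduction. Setting $u(x) = \epsilon^{1/p} U(x)$, the ODE in \eqref{bif-nls} becomes
\begin{equation*}
-U''(x) + \epsilon^2 \bigl[ U - (p+1) |U|^{2p} U \bigr] = 0, \qquad x \in (-L,L),
\end{equation*}
with boundary conditions $U(\pm L) = \phi_0(a)$ and $U'(L) - U'(-L) = \epsilon\, \phi_0'(a)$. Motivated by the evenness of $\phi_0$ and the expected nodeless profile of the primary branch, I would restrict attention to even $U$, so the derivative condition reduces to $2U'(L) = \epsilon\, \phi_0'(a)$. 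For $(\beta,\eta) \in \mathbb{R}^2$ let $U(\cdot;\beta,\eta)$ denote the unique even solution of the above ODE with $\epsilon^2$ replaced by $\eta$ and initial data $U(0) = \beta$, $U'(0) = 0$; by standard ODE theory $U$ is jointly $C^\infty$ in its arguments.

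The main subtlety, which I consider the central obstacle of the proof, is that at $\eta = 0$ the ODE collapses to $U'' = 0$, so $U \equiv \beta$ and the boundary conditions admit a one-parameter family ($\beta = \phi_0(a)$) rather than an isolated point, while the derivative condition becomes vacuous. To extract the correct pair $(\beta,a)$ I would introduce the regularized shooting map
\begin{equation*}
\Phi(\beta, a, \epsilon) := \Bigl( U(L;\beta,\epsilon^2) - \phi_0(a),\ 2\epsilon^{-1} U'(L;\beta,\epsilon^2) - \phi_0'(a) \Bigr).
\end{equation*}
The crux is that $\Phi$ extends smoothly to $\epsilon = 0$: since $U'(L;\beta,0) \equiv 0$ for every $\beta$, one has a factorization $U'(L;\beta,\eta) = \eta\, V(\beta,\eta)$ with $V \in C^\infty$, so that $2\epsilon^{-1} U'(L;\beta,\epsilon^2) = 2\epsilon V(\beta,\epsilon^2)$ vanishes smoothly. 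At the base point $(\beta,a,\epsilon) = (1,0,0)$ one has $\Phi = 0$ because $\phi_0(0) = 1$ and $\phi_0'(0) = 0$, and the Jacobian in $(\beta,a)$ is lower triangular with diagonal entries $\partial_\beta U(L;1,0) = 1$ and $-\phi_0''(0) = p$, where $\phi_0''(0) = -p$ comes from evaluating \eqref{soliton} at $z=0$. Since its determinant $p$ is nonzero, the implicit function theorem yields a unique $C^\infty$ branch $(\beta(\epsilon), a(\epsilon))$ with $\beta(0) = 1$, $a(0) = 0$ for small $\epsilon > 0$; the $C^\infty$-regularity of $u$ in $x$ is automatic from the ODE.

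To sharpen this to the expansions stated in the theorem, I would combine an involution-based parity argument with one explicit Taylor coefficient. Since the ODE depends on $\epsilon$ only through $\epsilon^2$, $\phi_0$ is even, and $\phi_0'$ is odd, the symmetries $\Phi_1(\beta,-a,-\epsilon) = \Phi_1(\beta,a,\epsilon)$ and $\Phi_2(\beta,-a,-\epsilon) = -\Phi_2(\beta,a,\epsilon)$ hold, so the uniqueness clause of IFT forces $\beta(\epsilon)$ to be even and $a(\epsilon)$ to be odd in $\epsilon$; this already yields $\beta(\epsilon) = 1 + \mathcal{O}(\epsilon^2)$ and $a(\epsilon) = a_1\epsilon + \mathcal{O}(\epsilon^3)$. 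The coefficient $a_1 = 2L$ is pinned down by differentiating the rescaled ODE in $\eta$ at $(\beta,\eta) = (1,0)$: the derivative $W := \partial_\eta U|_{\eta=0}$ solves $-W'' = p$ with $W(0) = W'(0) = 0$, yielding $V(1,0) = W'(L) = -pL$, so that balancing the $\mathcal{O}(\epsilon)$ terms in $\Phi_2 = 0$ (using $\phi_0''(0) = -p$) gives $a_1 = 2L$. The expansion $u = \epsilon^{1/p}\bigl(1 + \mathcal{O}_{C^\infty}(\epsilon^2)\bigr)$ is then read off from $U = \beta(\epsilon) + \mathcal{O}_{C^\infty}(\epsilon^2)$, which holds because $U(x;\beta,\eta) - \beta = \eta\, U_1(x;\beta,\eta)$ for a $C^\infty$ remainder $U_1$. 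Uniqueness of the positive solution within the stated size class reduces to IFT uniqueness after noting that the first integral $(u')^2 - \epsilon^2 u^2 + u^{2p+2}$ forces $(u'(L))^2 = (u'(-L))^2$, and the branch with $u'(L) = u'(-L)$ would give $\phi_0'(a) = 0$, which is inconsistent with the non-constant leading-order behavior $a = 2L\epsilon$.
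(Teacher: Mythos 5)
Your argument is correct and, once the cosmetic differences are stripped away, it is essentially the paper's proof: your substitution $u=\epsilon^{1/p}U(x)$ is the paper's $u(x)=\epsilon^{1/p}\psi(\epsilon x)$ written in the unscaled variable, and both proofs reduce to an even shooting solution launched from the centre of the ring followed by the implicit function theorem applied to the two Kirchhoff conditions. Your packaging of the degenerate derivative condition is slightly cleaner than the paper's: you factor $U'(L;\beta,\eta)=\eta\,V(\beta,\eta)$ and run a single nondegenerate two-dimensional IFT, whereas the paper expands $\psi(\epsilon L)$ and $\psi'(\epsilon L)$ in powers of $\epsilon$ and solves the two conditions sequentially; likewise your parity argument ($\beta$ even, $a$ odd in $\epsilon$) delivers the structure of the expansions for free where the paper computes the error terms by hand. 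Your coefficient computation ($W'(L)=-pL$, $\phi_0''(0)=-p$, hence $a_1=2L$) agrees with the paper's $a=2\epsilon L\,\psi''(0)/\phi_0''(0)+\mathcal{O}(\epsilon^3)$.

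The one step that needs repair is the final uniqueness sentence, which is circular. The first integral does force $(u'(L))^2=(u'(-L))^2$, but the alternative $u'(L)=u'(-L)$ cannot be dismissed by pointing out that \emph{your} branch has $a=2L\epsilon\neq 0$: a hypothetical second solution in that class would simply come with its own value $a=0$, and nothing you have said rules it out. The correct exclusion is dynamical: $u(L)=u(-L)$ together with $u'(L)=u'(-L)$ means the phase-plane orbit returns to the same point after length $2L$, so $u$ is either constant or periodic with period dividing $2L$; after the rescaling $u=\epsilon^{1/p}\tilde u$, $x=\tilde x/\epsilon$, every nonconstant positive periodic orbit of the relevant amplitude has $x$-period of order $\epsilon^{-1}\gg 2L$, while the constant orbit $u\equiv(\epsilon^2/(p+1))^{1/(2p)}$ forces $\phi_0(a)=(p+1)^{-1/(2p)}<1$, hence $a\neq 0$ and $\phi_0'(a)\neq 0$, contradicting the derivative condition. (The paper is equally terse here, asserting evenness from phase-plane symmetry on the narrow rescaled interval, so this is a shared and easily filled gap rather than a flaw in your strategy.) A second, minor omission: for non-integer $p$ the map $U\mapsto |U|^{2p}U$ is smooth only away from $U=0$, so the claimed joint $C^\infty$ dependence of $U(\cdot;\beta,\eta)$ requires the (easy) observation that $U$ stays uniformly close to $\beta\approx 1$, hence positive — a point the paper makes explicitly.
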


\begin{proof}
We make use of the scaling transformation
\begin{equation}
\label{scal-transf}
u(x) = \epsilon^{\frac{1}{p}} \psi(z), \quad z = \epsilon x,
\end{equation}
and rewrite the boundary-value problem (\ref{bif-nls}) in the form
\begin{equation}
\label{bif-nls-scale}
\left\{ \begin{array}{l} - \psi''(z) + \psi - (p+1) |\psi|^{2p} \psi = 0, \quad z \in (-\epsilon L, \epsilon L), \\
\psi(\epsilon L) = \psi(-\epsilon L) = \phi_0(a), \\
\psi'(\epsilon L) - \psi'(-\epsilon L) = \phi'_0(a).
\end{array} \right.
\end{equation}

Let us consider the initial-value problem for the second-order differential equation
in system (\ref{bif-nls-scale}) starting with the initial data $(\psi,\psi') = (\psi_0,\psi_0')$,
where $\psi_0 > 0$. By bootstrapping arguments, $\psi \in C^3$ near $z = 0$ if $p > 0$.
From the boundary condition $\psi(\epsilon L) = \psi(- \epsilon L)$, we realize
that $\psi_0' \to 0$ as $\epsilon \to 0$.
Now, for fixed $L > 0$ and small $\epsilon > 0$, the interval $(-\epsilon L,\epsilon L)$ is narrow
and the initial point is close to $(\psi_0,0)$, where $\psi_0 > 0$.
From symmetry of trajectories on the phase plane $(\psi,\psi')$, it follows
that for every small $\epsilon > 0$, the boundary condition $\psi(\epsilon L) = \psi(- \epsilon L)$ can be satisfied if and only if
the function $\psi$ is even in $z$. Therefore, we can consider
an initial-value problem starting with the initial data at $(\psi,\psi') = (\psi_0,0)$
parameterized by $\psi_0 > 0$.

By the existence and uniqueness theory, for any $\psi_0 > 0$, there exists a finite $z_0 > 0$
and a unique local solution $\psi \in C^1(-z_0,z_0)$ such that $\psi(0) = \psi_0$ and $\psi'(0) = 0$.
Note that $z_0$ depends on $\psi_0$ and it can be chosen to guarantee that
$\psi(z) > 0$ for all $z \in (-z_0,z_0)$. Since $L > 0$ is fixed, we have $\epsilon L < z_0$
for sufficiently small $\epsilon$, so that
the unique positive local solution exists on the interval $[-\epsilon L,\epsilon L]$. By the bootstrapping arguments,
because the nonlinear vector field in system (\ref{bif-nls-scale}) is smooth for positive $\psi$,
the unique local solution is $\psi \in C^{\infty}(-\epsilon L,\epsilon L)$.

The boundary condition $\psi(\epsilon L) = \phi_0(a)$ in system (\ref{bif-nls-scale}) yields
the following algebraic equation for the solution $\psi$ parameterized by $\psi_0 = \psi(0)$:
\begin{equation}
\label{alg-eq-1}
\phi_0(a) = \psi(\epsilon L) = \psi_0 + \frac{1}{2} \psi''(0) \epsilon^2 L^2 + \mathcal{O}(\epsilon^4),
\end{equation}
where $\psi''(0)$ is expressed in terms of $\psi_0$ by the differential equation in system (\ref{bif-nls-scale}).
For $\epsilon = 0$ and any $a \in \mathbb{R}$,
the algebraic equation (\ref{alg-eq-1}) has a unique solution $\psi_0 = \phi_0(a)$.
Moreover, the derivative of $\psi(\epsilon L)$ with respect to $\psi_0$
is $1 + \mathcal{O}(\epsilon^2)$. By the implicit function theorem,
for $\epsilon \in \mathbb{R}$ sufficiently small and any $a \in \mathbb{R}$,
there exists a unique root of the algebraic equation (\ref{alg-eq-1}) for $\psi_0$  such that
$\psi_0 = \phi_0(a) + \mathcal{O}(\epsilon^2)$.

The last boundary condition $\psi'(\epsilon L) - \psi'(-\epsilon L) = \phi_0'(a)$ in system (\ref{bif-nls-scale}) yields
another algebraic equation
\begin{eqnarray}
\label{alg-eq-2}
\phi'_0(a) = 2 \psi'(\epsilon L) = 2 \psi''(0) \epsilon L + \mathcal{O}(\epsilon^3).
\end{eqnarray}
Since $\phi_0'(0) = 0$ and $\phi_0''(0) \neq 0$, the implicit function theorem
implies that for every $\epsilon \in \mathbb{R}$ sufficiently small,
there exists a unique root $a$ of the algebraic equation (\ref{alg-eq-2}) such that
\begin{eqnarray*}
a = \frac{2 \epsilon L \psi''(0)}{\phi_0''(0)} + \mathcal{O}(\epsilon^3)
= \frac{2 \epsilon L (1 - (p+1) \psi_0^{2p}) \psi_0}{(1 - (p+1) \phi_0(0)^{2p})\phi_0(0)} + \mathcal{O}(\epsilon^3)
= 2 \epsilon L + \mathcal{O}(\epsilon^3),
\end{eqnarray*}
where the second-order differential equations (\ref{soliton}) and (\ref{bif-nls-scale}) are used
as well as the expansion $\psi_0 = \phi_0(a) + \mathcal{O}(\epsilon^2) = 1 + \mathcal{O}(\epsilon^2)$ since
$\phi_0(0)=1$ and $\phi_0'(0) = 0$. The theorem is proved by using (\ref{scal-transf}) with
$\psi(z) = 1 + \mathcal{O}_{C^{\infty}(-\epsilon L, \epsilon L)}(\epsilon^2)$.
\end{proof}

\section{Higher branches}

Here we consider bifurcations of stationary solutions of the perturbed problem (\ref{bif-nls})
from nontrivial solutions of the homogeneous boundary-value problem (\ref{stat-nls-scalar}) with $\omega = -\epsilon^2$.
By Proposition \ref{proposition-countable}, there exists a countable double set $\{u^{\pm}_{n,\omega}\}_{n \in
\mathbb{N}}$ of these solutions for every $\omega < 0$, that is, for every $\epsilon > 0$.  In what follows, we take one solution from
the countable double set and denote it by $u_{\epsilon}$.

Note that $u_{\epsilon}$ is odd and $(2L)$-periodic.
The scaling transformation (\ref{scal-transf}) cannot be used because $u_{\epsilon}$ does not
vanish in the limit $\epsilon \to 0$.  Nevertheless, we can immediately
construct a suitable solution of the inhomogeneous boundary-value problem (\ref{bif-nls})
from the solution $u_{\epsilon}$ of the homogeneous boundary-value problem (\ref{stat-nls-scalar})
with $\omega = -\epsilon^2$. The following theorem gives the relevant result.

\begin{theorem}
Let $p > 0$ be fixed and $\epsilon > 0$ be sufficiently small.
For each $u_{\epsilon} \in H^2_{\rm per, odd}(-L,L)$ that solves
the homogeneous problem (\ref{stat-nls-scalar}) with $\omega = -\epsilon^2$, there exists a
solution $u \in H^2_{\rm per}(-L,L)$ to the boundary-value problem
(\ref{bif-nls}) with $a = 0$ and $u(x) = u_{\epsilon}(x+b)$, where
$b$ is uniquely determined from the boundary condition
\begin{equation}
\label{bc-b}
u_{\epsilon}(L+b) = \epsilon^{\frac{1}{p}}.
\end{equation}
In particular, the following asymptotic expansion holds
\begin{equation}
b = \epsilon^{\frac{1}{p}} \left[ \frac{1}{u_0'(L)} + \mathcal{O}\left(\epsilon^{\min\left\{2,\frac{2}{p} \right\}}\right) \right].
\label{expansion-0}
\end{equation}
\label{theorem-higher-order}
\end{theorem}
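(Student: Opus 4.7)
The proof will exploit the symmetry of $u_\epsilon$ (odd, $2L$-periodic) to show that the shifted function $u(x) := u_\epsilon(x+b)$ automatically solves the ODE and satisfies the Kirchhoff jump condition, reducing the whole over-determined system to a single scalar equation for $b$.

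First I would observe that since the ODE in (\ref{bif-nls}) is autonomous and coincides with the ODE in (\ref{stat-nls-scalar}) for $\omega = -\epsilon^2$, any translate $u(x) = u_\epsilon(x+b)$ of the $2L$-periodic solution $u_\epsilon$ solves the same second-order equation on $(-L,L)$. The boundary conditions then read
\begin{equation*}
u_\epsilon(L+b) = u_\epsilon(-L+b) = \epsilon^{1/p}\phi_0(0), \qquad u_\epsilon'(L+b) - u_\epsilon'(-L+b) = \epsilon^{1+1/p}\phi_0'(0).
\end{equation*}
Using $\phi_0(0)=1$, $\phi_0'(0)=0$ from (\ref{soliton-sech}), and $2L$-periodicity of $u_\epsilon$, the derivative condition is automatically satisfied, the two Dirichlet conditions collapse to the single scalar equation (\ref{bc-b}), and we may set $a = 0$ as claimed.

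Next I would solve (\ref{bc-b}) by the implicit function theorem. Because $u_\epsilon$ is odd and $2L$-periodic, it must vanish at $L$; also $u_\epsilon'(L) \neq 0$, otherwise uniqueness for the Cauchy problem of the second-order ODE would force $u_\epsilon\equiv 0$. Define $F(b,\epsilon) := u_\epsilon(L+b) - \epsilon^{1/p}$. Then $F(0,0) = u_0(L) = 0$ and $\partial_b F(0,0) = u_0'(L) \neq 0$, where $u_0 = u_{n,0}^\pm$ is the limit profile guaranteed by Proposition \ref{proposition-countable} (recall that the branches persist at $\omega = 0$ since $0 < \omega_n$). A standard implicit function argument, applied after regularizing the $\epsilon^{1/p}$ dependence by treating $\mu := \epsilon^{1/p}$ as the bifurcation parameter, yields a unique small root $b=b(\epsilon)$ with $b \to 0$ as $\epsilon \to 0^+$.

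For the asymptotic expansion (\ref{expansion-0}), I would combine two ingredients. First, evaluating the ODE at $x=L$ and using $u_\epsilon(L)=0$ gives $u_\epsilon''(L)=0$, so Taylor expansion in $b$ yields $u_\epsilon(L+b) = b\, u_\epsilon'(L) + \mathcal{O}(b^3)$. Second, the $C^1$ dependence of $u_\epsilon$ on $\omega=-\epsilon^2$ from Proposition \ref{proposition-countable} supplies $u_\epsilon'(L) = u_0'(L) + \mathcal{O}(\epsilon^2)$. Substituting into (\ref{bc-b}),
\begin{equation*}
\epsilon^{1/p} = b\,u_0'(L)\bigl[1 + \mathcal{O}(\epsilon^2)\bigr] + \mathcal{O}(b^3),
\end{equation*}
and since the leading balance forces $b = \mathcal{O}(\epsilon^{1/p})$, the remainder terms become $\mathcal{O}(\epsilon^{1/p}\epsilon^2)$ and $\mathcal{O}(\epsilon^{3/p})$ respectively, giving the combined error $\mathcal{O}(\epsilon^{1/p+\min\{2,2/p\}})$ and hence (\ref{expansion-0}).

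The whole argument is essentially forced by the symmetry of $u_\epsilon$, so there is no serious analytic obstacle; the only subtle point is keeping track of the non-integer power $\epsilon^{1/p}$ when invoking the implicit function theorem and when balancing remainders to obtain the sharp error exponent $\min\{2,2/p\}$.
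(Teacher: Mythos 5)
Your proposal is correct and follows essentially the same route as the paper: translate $u_\epsilon$, use oddness/periodicity of $u_\epsilon$ together with $\phi_0(0)=1$, $\phi_0'(0)=0$ to reduce the over-determined boundary conditions to the single equation (\ref{bc-b}) with $a=0$, solve for $b$ by the implicit function theorem using $u_\epsilon(L)=0$, $u_\epsilon'(L)\neq 0$, and extract the expansion from $u_\epsilon''(L)=0$ (which kills the quadratic Taylor term, leaving an $\mathcal{O}(b^3)=\mathcal{O}(\epsilon^{3/p})$ remainder) combined with $u_\epsilon'(L)=u_0'(L)+\mathcal{O}(\epsilon^2)$ from the $C^1$ dependence on $\omega=-\epsilon^2$. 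The only point worth making explicit is the $C^3$ regularity of $u_\epsilon$ near $x=L$ (obtained by bootstrapping, as in the paper), which is needed to justify the cubic Taylor remainder.
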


\begin{proof}
By Proposition \ref{proposition-countable}, $u_{\epsilon} \in H^2_{\rm per, odd}(-L,L)$
exists and is $C^1$ in $\epsilon^2$. By bootstrapping arguments, $u_{\epsilon} \in C^3(-L,L)$ if $p > 0$.
The translation of this solution $u = u_{\epsilon}(x+b)$ for every $b \in \mathbb{R}$
satisfies the second-order differential equation in system (\ref{bif-nls}). If $a = 0$,
it also satisfies the boundary conditions in system (\ref{bif-nls}) if and only if $b$
can be found from the boundary condition (\ref{bc-b}), where we recall that $\phi_0(0) = 1$ and $\phi_0'(0) = 0$.
Since $u_{\epsilon}(L) = 0$, $u'_{\epsilon}(L) \neq 0$, and $\epsilon$ is small,
a unique solution for $b$ exists by the implicit function theorem such that
\begin{equation}
\label{expansions-1}
b = \epsilon^{\frac{1}{p}} \left[ \frac{1}{u_{\epsilon}'(L)} + \mathcal{O}\left( \epsilon^{\frac{2}{p}}\right) \right].
\end{equation}
where we have used $u''_{\epsilon}(L) = 0$, and
the $C^3$ smoothness of $u_{\epsilon}$ in $x$. Furthermore,
from the $C^1$ smoothness of $u_{\epsilon}$ in $\epsilon^2$, we have
\begin{equation}
\label{expansions-2}
u'_{\epsilon}(L) = u_0'(L) + \mathcal{O}(\epsilon^2).
\end{equation}
Expansions (\ref{expansions-1}) and (\ref{expansions-2}) yield
the asymptotic expansion (\ref{expansion-0}).
\end{proof}

\begin{remark}
The sign of $b$ coincides with the sign of $u'_0(L)$, which is different between the two members
of the double family $\{u^{\pm}_{n,\omega}\}_{n \in \mathbb{N}}$. More precisely, we have
$\text{sign}\  b^+_n = (-1)^{n+1}$ and $\text{sign}\ b^-_n = (-1)^{n}$. Correspondingly,
having fixed the sign of the solution $\phi$ of the differential equation (\ref{soliton})
on the half line as positive by convention,
the two different solutions on the tadpole graph are approximately related for small negative $\omega$
as follows:
$$
(u^{-}_{n,\omega}(x+b^{-}_n),\epsilon^{\frac{1}{p}}\phi_0) =
(-u^{+}_{n,\omega}(x+b^{-}_n),\epsilon^{\frac{1}{p}}\phi_0) \approx
(-u^{+}_{n,\omega}(x-b^{+}_n),\epsilon^{\frac{1}{p}}\phi_0).
$$
Notice that the two solutions in the set $\{u^{\pm}_{n,\omega}\}_{n \in \mathbb{N}}$ for a fixed $n \in \mathbb{N}$
belong to  the same $U(1)$ orbit of the stationary NLS equation (\ref{eq}), while
the two new solutions $(u,v)$ bifurcating from $(u^{\pm}_{n,\omega},0)$ do not belong to the same $U(1)$ orbit of
the stationary NLS equation (\ref{eq}) because ${\rm sign}(b^{+}_n) = - {\rm sign}(b^{-}_n)$.
\label{remark-4-1}
\end{remark}

\begin{remark}
For the same value of $\omega = -\epsilon^2 < 0$, all four solutions mentioned in Remark \ref{remark-4-1}
have the same $L^2(-L,L)$ norm for the component $u$ because the mean value of periodic functions does not depend
on the initial point of integration over the period. At the same time,
the $L^2(L,\infty)$ norm for the component $v$ is zero for the two solutions in
$\{u^{\pm}_{n,\omega}\}_{n \in \mathbb{N}}$
and nonzero for the two bifurcating solutions in Theorem \ref{theorem-higher-order}.
\end{remark}

In the rest of this section, we will prove that the solution to the perturbed
problem (\ref{bif-nls}) near $u_{\epsilon}$ is uniquely continued for small values
of $\epsilon$. By uniqueness, this continuation coincides with the solution given in
Theorem \ref{theorem-higher-order}.

Let us consider the associated linearized operator
$$
M_{\epsilon} := -\frac{d^2}{d x^2} + \epsilon^2 - (p+1)(2p+1) |u_{\epsilon}(x)|^{2p} :
\quad H^2_{\rm per}(-L,L) \to L^2_{\rm per}(-L,L),
$$
where $u_{\epsilon} \in H^2_{\rm per, odd}(-L,L)$ is a solution of the
boundary-value problem (\ref{stat-nls-scalar}) with $\omega = -\epsilon^2$.
For every $\epsilon \geq 0$ sufficiently small, let us continue $u_{\epsilon}$
in a family of odd functions with $u'(0) > 0$, which are parameterized
by the energy level $E$ given by the energy invariant (\ref{energy-invariant}),
that is,
\begin{equation}
\label{energy-invariant-higher}
E = \left( \frac{d u}{dx} \right)^2 - \epsilon^2 u^2 + |u|^{2p} u^2 = {\rm const}.
\end{equation}
Denote the continuation by $U_{\epsilon}(x;E)$ and the half-period
of this family by $T_{\epsilon}(E)$. By Lemma \ref{lemma-period-2},
both $U_{\epsilon}$ and $T_{\epsilon}$ are $C^1$ in $E$ and
$T_{\epsilon}'(E) < 0$ for every small $\epsilon \geq 0$.
Let $E_{\epsilon}$ be the level such that $L = T_{\epsilon}(E_{\epsilon})$
and $u_{\epsilon}(x) = U_{\epsilon}(x;E_{\epsilon})$. The level $E_{\epsilon}$
is unique due to the monotonicity of the period-to-energy map (\ref{map-2}).

By taking the derivatives of the second-order equation
$$
- U_{\epsilon}''(x;E) + \epsilon^2 U_{\epsilon}(x;E) - (p+1) |U_{\epsilon}(x;E)|^{2p} U_{\epsilon}(x;E) = 0,
$$
with respect to $x$ and $E$ at $E = E_{\epsilon}$, we verify that
$$
M_{\epsilon} u'_{\epsilon} = 0 \quad \mbox{\rm and} \quad
M_{\epsilon} \partial_E U_{\epsilon}|_{E = E_{\epsilon}} = 0,
$$
where the prime denotes the derivative of $u_{\epsilon}$ in $x$.
We note that $u'_{\epsilon}$ is even and $(2L)$-periodic, whereas
$\partial_E U_{\epsilon} |_{E = E_{\epsilon}}$ is
odd but not $(2L)$-periodic if $T_{\epsilon}'(E_{\epsilon}) < 0$, since
\begin{equation}
\label{bc-solutions}
\partial_E U_{\epsilon}(\pm L;E_{\epsilon})= \mp T_{\epsilon}'(E_{\epsilon}) u'_{\epsilon}(L) \neq 0.
\end{equation}
Since the Wronskian between the two particular solutions
of the homogeneous equation $M_{\epsilon} u = 0$ is constant, we have
\begin{equation}
\label{Wronskian}
W_{\epsilon} := \left| \begin{array}{cc} u'_{\epsilon} & \partial_E U_{\epsilon} |_{E = E_{\epsilon}} \\
u''_{\epsilon} & \partial_E U_{\epsilon}' |_{E = E_{\epsilon}} \end{array} \right| = {\rm const}.
\end{equation}
Because the two particular solutions of $M_{\epsilon} u = 0$
are linearly independent, we also have $W_{\epsilon} \neq 0$ for every $\epsilon \geq 0$.

Let us now decompose solution of the perturbed problem (\ref{bif-nls}) near $u_{\epsilon}$ by posing $u = u_{\epsilon} + w$, where the perturbation
$w$ satisfies the nonlinear boundary-value problem
\begin{equation}
\label{per-nls}
\left\{ \begin{array}{l} M_{\epsilon} w =
(p+1)\left(|u_{\epsilon}+w|^{2p}(u_{\epsilon}+w)-|u_{\epsilon}|^{2p}u_{\epsilon}-(2p+1)|u_{\epsilon}|^{2p} w\right)\\
w(L) = w(-L) = {\epsilon}^{\frac{1}{p}}\phi_0(a), \\
w'(L) - w'(-L) = \epsilon^{1+\frac{1}{p}} \phi'_0(a).
\end{array} \right.
\end{equation}
For $\epsilon = 0$, there exists a trivial solution $w = 0$
of the boundary-value problem (\ref{per-nls}). The following
results specify a unique continuation of the small solution $w$
to the boundary-value problem (\ref{per-nls}) with respect to small $\epsilon$.

\begin{lemma}
Let $p > 0$ be fixed and $\epsilon > 0$ be sufficiently small.
There exists a unique solution
$w \in C^1(-L,L)$ and $a \in \mathbb{R}$
to the boundary-value problem (\ref{per-nls}) such that $a = 0$ and
\begin{equation}
w(x) = \epsilon^{\frac{1}{p}} \frac{u_{\epsilon}'(x)}{u_{\epsilon}'(L)}
+ {\rm o}_{C^1(-L,L)}\left(\epsilon^{\frac{1}{p}}\right).
\label{expansion-1}
\end{equation}
Consequently, the solution in Theorem \ref{theorem-higher-order} is unique.
\label{lemma-higher-order}
\end{lemma}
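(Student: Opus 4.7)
The strategy is a Lyapunov--Schmidt-style reduction of the over-determined boundary-value problem (\ref{per-nls}) to a single scalar equation for the parameter $a$, combined with the explicit solution from Theorem \ref{theorem-higher-order} to pin down $a=0$ as the unique small root.

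First, using $u_\epsilon(\pm L)=0$ in the ODE defining $u_\epsilon$, one has $u_\epsilon''(\pm L)=0$, so the normalized kernel element $\chi(x):=u_\epsilon'(x)/u_\epsilon'(L)$ satisfies $M_\epsilon\chi=0$ together with $\chi(\pm L)=1$ and $\chi'(\pm L)=0$. I would decompose
\[
w(x) = \epsilon^{\frac{1}{p}}\phi_0(a)\,\chi(x) + \tilde w(x),
\]
so that the Dirichlet part of the data is absorbed into the first summand and $\tilde w$ has to satisfy the homogeneous Dirichlet conditions $\tilde w(\pm L)=0$ together with the nonlinear equation $M_\epsilon\tilde w = (p+1)\mathcal N\!\bigl(u_\epsilon,\epsilon^{1/p}\phi_0(a)\chi+\tilde w\bigr)$, where $\mathcal N(u,W):=|u+W|^{2p}(u+W)-|u|^{2p}u-(2p+1)|u|^{2p}W$ vanishes quadratically at $W=0$.

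Second, I would show that $M_\epsilon$ restricted to the Dirichlet subspace $X_0:=\{h\in H^2(-L,L):h(\pm L)=0\}$ is an isomorphism onto $L^2(-L,L)$ with a norm bound uniform in small $\epsilon$. The fundamental solutions $u_\epsilon'$ and $\partial_E U_\epsilon|_{E=E_\epsilon}$ of $M_\epsilon u=0$ have the nonvanishing Wronskian (\ref{Wronskian}), while (\ref{bc-solutions}) forces any combination that vanishes at both endpoints to be trivial; the associated Green's function then furnishes the bounded inverse. Inserting this into the fixed-point form
\[
\tilde w = (p+1)\,M_\epsilon^{-1}\mathcal N\!\bigl(u_\epsilon,\epsilon^{1/p}\phi_0(a)\chi+\tilde w\bigr),
\]
the contraction mapping theorem on a ball of radius $\mathcal O\bigl(\epsilon^{\min(2/p,2)}\bigr)$ in an appropriate $C^1$-subspace of $X_0$ produces, for every $a$ in a fixed neighborhood of $0$, a unique small $\tilde w=\tilde w(a,\epsilon)$ depending smoothly on $(a,\epsilon)$.

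Third, using $\chi'(\pm L)=0$, the remaining boundary condition collapses to the scalar reduced equation
\[
F(a,\epsilon):=\tilde w'(L;a,\epsilon)-\tilde w'(-L;a,\epsilon)-\epsilon^{1+\frac{1}{p}}\phi_0'(a)=0.
\]
Theorem \ref{theorem-higher-order} delivers an explicit solution of (\ref{per-nls}) with $a=0$, so $F(0,\epsilon)=0$ for every small $\epsilon>0$. Differentiating the fixed-point equation at $a=0$ and using $\phi_0'(0)=0$ shows $\partial_a\tilde w(0,\epsilon)=0$, hence $\partial_a F(0,\epsilon)=-\epsilon^{1+1/p}\phi_0''(0)=p\,\epsilon^{1+1/p}\neq 0$. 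Strict monotonicity of $F(\cdot,\epsilon)$ in a fixed neighborhood of $0$ then identifies $a=0$ as the unique root, and uniqueness of $\tilde w$ follows from the contraction. Evaluating the decomposition at $a=0$ gives $w=\epsilon^{1/p}\chi+\mathcal O\bigl(\epsilon^{\min(2/p,2)}\bigr)=\epsilon^{1/p}u_\epsilon'/u_\epsilon'(L)+{\rm o}(\epsilon^{1/p})$ in $C^1$, which is (\ref{expansion-1}).

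The main obstacle is securing the uniform-in-$\epsilon$ invertibility of $M_\epsilon$ on $X_0$ together with the $C^1$ bound on $M_\epsilon^{-1}$, since the qualitative information in (\ref{Wronskian}) and (\ref{bc-solutions}) has to be promoted to quantitative estimates robust under the $\epsilon$-dependence of the fundamental solutions. A secondary subtlety is near the interior zeros of $u_\epsilon$ when $p<1/2$, where $\mathcal N$ is only Hölder in $W$, so the nonlinear estimates must be carried out in $C^1$ rather than in $H^2$; once these two points are addressed, the remaining steps are routine bookkeeping of orders.
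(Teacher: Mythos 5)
Your setup coincides with the paper's: the decomposition $w=\epsilon^{1/p}\phi_0(a)\chi+\tilde w$ with $\chi=u_\epsilon'/u_\epsilon'(L)$ and $\tilde w(\pm L)=0$ is exactly the paper's $w=c\,u_\epsilon'+\psi$ with $c=\epsilon^{1/p}\phi_0(a)/u_\epsilon'(L)$, and the Green's-function inversion of $M_\epsilon$ on the Dirichlet subspace (built from the fundamental solutions $u_\epsilon'$ and $\partial_E U_\epsilon|_{E=E_\epsilon}$ via (\ref{Wronskian}) and (\ref{bc-solutions})), followed by a contraction or implicit-function step, is how the paper produces $\psi$. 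The gap is in the final step, where you solve the reduced equation $F(a,\epsilon)=\tilde w'(L;a,\epsilon)-\tilde w'(-L;a,\epsilon)-\epsilon^{1+1/p}\phi_0'(a)=0$. You obtain $F(0,\epsilon)=0$ from Theorem \ref{theorem-higher-order} and $\partial_aF(0,\epsilon)=p\,\epsilon^{1+1/p}\neq0$, and then assert strict monotonicity of $F(\cdot,\epsilon)$ on a \emph{fixed} neighborhood of $a=0$. That assertion is not justified: $\partial_aF(0,\epsilon)$ is itself $o(1)$, so its sign does not propagate to an $\epsilon$-independent interval unless you control $\partial_a\bigl(\tilde w'(L)-\tilde w'(-L)\bigr)$ uniformly in $a$. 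Since $\tilde w$ depends on $a$ only through $s=\epsilon^{1/p}\phi_0(a)$, the contraction bounds give only $\partial_a\bigl(\tilde w'(L)-\tilde w'(-L)\bigr)=\mathcal O\bigl(\epsilon^{2/p}|\phi_0'(a)|\bigr)$, and for $p>1$ this is \emph{not} dominated by $\epsilon^{1+1/p}|\phi_0''(a)|$ once $|a|\gtrsim\epsilon^{(p-1)/p}$. Your argument therefore yields uniqueness of the root only in a neighborhood of $a=0$ that shrinks with $\epsilon$, and it says nothing about possible roots at $a$ of order one, where $\phi_0'(a)$ need not be bounded away from the small quantity $\bigl(\tilde w'(L)-\tilde w'(-L)\bigr)/\epsilon^{1+1/p}$.

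The missing idea is the paper's phase-plane observation, which makes the reduced equation exact rather than perturbative: $u=u_\epsilon+w$ solves the autonomous second-order equation and hence lies on a level curve of the invariant (\ref{energy-invariant-higher}); that level curve meets the vertical line $u=u(\pm L)$ in exactly two points, symmetric about the $u$-axis, so $u'(L)=\pm u'(-L)$, and the minus sign is excluded because $u'(\pm L)=u_\epsilon'(L)+o(1)\neq0$. Hence $\tilde w'(L)-\tilde w'(-L)\equiv0$ for every admissible $a$, the jump condition collapses to $\epsilon^{1+1/p}\phi_0'(a)=0$, and $a=0$ is the unique solution because $\phi_0'$ vanishes only at the origin. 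With this one extra step your construction closes for all $p>0$ and gives global, not merely local, uniqueness in $a$; without it the proof is incomplete.
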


\begin{proof}
Because ${\rm ker}(M_{\epsilon}) = {\rm span}\{u_{\epsilon}'\}$ in $L^2_{\rm per}(-L,L)$,
we consider the Lyapunov--Schmidt decomposition
\begin{equation}
\label{decomposition-w}
w(x) = c u_{\epsilon}'(x) + \psi(x),
\end{equation}
where $c \in \mathbb{R}$ and $\psi \in H^2(-L,L)$
are to be uniquely defined in what follows. In the standard Lyapunov--Schmidt
reduction method, the orthogonal projection $\langle u_{\epsilon}', \psi \rangle_{L^2_{\rm per}(-L,L)} = 0$
is typically used. However, because the boundary conditions in the problem (\ref{per-nls})
are not periodic, we will modify the conditions by requiring
\begin{equation}
\label{bc-psi}
\psi(L) = \psi(-L) = 0.
\end{equation}
Although it may seem that the two boundary conditions
for $\psi$ over-determine the decomposition (\ref{decomposition-w}) with only one parameter $c$, we shall recall here
that $w$ is required to satisfy the boundary condition $w(-L) = w(L)$, whereas
$u_{\epsilon}'$ is even in $x$. Therefore, $c$ is uniquely determined by
the boundary conditions (\ref{bc-psi}) if and only if $w$ is a solution of
the boundary-value problem (\ref{per-nls}). To be precise, for given small $a \in \mathbb{R}$
and $\epsilon \in \mathbb{R}$, parameter $c$ is uniquely determined by
\begin{equation}
\label{expression-c}
c =  \epsilon^{\frac{1}{p}} \frac{\phi_0(a)}{u'_{\epsilon}(L)},
\end{equation}
where we recall that $u'_{\epsilon}(L) \neq 0$.

There exists a unique solution of the inhomogeneous equation $M_{\epsilon} w = F$
subject to the boundary conditions (\ref{bc-psi}), where $F$ is a given function in $L^2(-L,L)$,
which does not need to be $2L$-periodic. Indeed, by the variation of constant formula,
we obtain
\begin{eqnarray*}
\psi(x) & = & c_1 u'_{\epsilon}(x) + c_2 \partial_E U_{\epsilon}(x;E_{\epsilon}) \\
& \phantom{t} & \phantom{text}
+ \frac{1}{W} \int_0^x F(y) \left[  u'_{\epsilon}(y)  \partial_E U_{\epsilon}(x;E_{\epsilon}) -
u'_{\epsilon}(x)  \partial_E U_{\epsilon}(y;E_{\epsilon}) \right] dy.
\end{eqnarray*}
The coefficients $c_1$ and $c_2$ are uniquely found from the boundary conditions
(\ref{bc-psi}). After
routine computations involving relations (\ref{bc-solutions}) and (\ref{Wronskian}), we obtain
a unique representation for $\psi$ in the form
\begin{eqnarray}
\nonumber
\psi(x) & = & \frac{1}{2 W_{\epsilon}}
\left( \int_{-L}^x - \int_x^L \right) F(y) \left[  u'_{\epsilon}(y)  \partial_E U_{\epsilon}(x;E_{\epsilon}) -
u'_{\epsilon}(x)  \partial_E U_{\epsilon}(y;E_{\epsilon}) \right] dy \\ \nonumber
& \phantom{t} & + \frac{T_{\epsilon}'(E_{\epsilon})}{2 W_{\epsilon}} \langle u'_{\epsilon}, F \rangle_{L^2(-L,L)} u'_{\epsilon}(x)\\
&\phantom{t} & - \frac{1}{2 W_{\epsilon} T_{\epsilon}'(E_{\epsilon})}
\langle \partial_E U_{\epsilon} |_{E=E_{\epsilon}}, F \rangle_{L^2(-L,L)} \partial_E U_{\epsilon}(x;E_{\epsilon}).
\label{integral-eq}
\end{eqnarray}
Note that $\psi$ is not a $(2L)$-periodic function in $H^2_{\rm per}(-L,L)$ unless $F$ satisfies the Fredholm
solvability condition $\langle u'_{\epsilon}, F \rangle_{L^2(-L,L)} = 0$.
Substituting the decomposition (\ref{decomposition-w})
to the differential equation in system (\ref{per-nls}), we obtain
$M_{\epsilon} \psi = F$ with
$$
F(c,\psi) := (p+1)\left(|u_{\epsilon} + c u_{\epsilon}' + \psi |^{2p}(u_{\epsilon}+ c u_{\epsilon}' + \psi)
- |u_{\epsilon}|^{2p} u_{\epsilon} - (2p+1)|u_{\epsilon}|^{2p} (c u_{\epsilon}' + \psi ) \right).
$$
Using this expression for $F = F(c,\psi)$, we can interpret (\ref{integral-eq}) as an integral
equation for $\psi$ for a given $c$. Note that $F$ is $C^1$ in $c$ and $\psi$ if $p > 0$
and that $F$ and its first partial derivatives are zero at $c = 0$ and $\psi = 0$.
By the implicit function theorem, for all $c \in \mathbb{R}$ sufficiently small, there exists a unique solution
$\psi \in C^1(-L,L)$ of the integral equation (\ref{integral-eq}), which is $C^1$ in $c$ and satisfies
$\psi = \partial_c \psi |_{c = 0} =0$.

It follows from (\ref{expression-c}) that for every $a \in \mathbb{R}$, we have $c \to 0$ as $\epsilon \to 0$.
Consequently, $\| \psi \|_{C^1(-L,L)} = {\rm o}(c) \to 0$ as $\epsilon \to 0$.

Let us now recall
that $u = u_{\epsilon} + c u'_{\epsilon} + \psi$ satisfies the homogeneous second-order
differential equation
with the energy invariant (\ref{energy-invariant-higher}).
Thanks to the boundary conditions (\ref{bc-psi}), we have
\begin{equation}
\label{bc-u-plus}
u(\pm L) = c u_{\epsilon}'(L), \quad u'(\pm L) = u'_{\epsilon}(L) + \psi'(\pm L),
\end{equation}
where $|c| + |\psi'(\pm L)| \to 0$ as $\epsilon \to 0$.
We shall now prove that $\psi'(L) = \psi'(-L)$.

Assume $\psi'(L) \neq \psi'(-L)$ so that $u'(L) \neq u'(-L)$ for small $\epsilon > 0$.
Thanks to the energy invariant (\ref{energy-invariant-higher}),
each orbit on the phase plane $(u,u')$ intersects any vertical curve $u = u_0$ for a fixed small $u_0$ only twice,
symmetrically in the upper and lower half planes. If $u'(L) \neq u'(-L)$, then $u'(L) = -u'(L)$.
However, this contradicts (\ref{bc-u-plus}) with $u_{\epsilon}'(L) \neq 0$ and $\psi'(L) \to 0$ as $\epsilon \to 0$.
Therefore, $u'(L) = u'(-L)$, which implies that $\psi'(L) = \psi'(-L)$.

Finally, the boundary conditions in system (\ref{per-nls}) yield expression
(\ref{expression-c}) for $c$ and the following equation for $a$:
\begin{equation}
\label{bc-4}
\epsilon^{1+\frac{1}{p}} \phi_0'(a) = \psi'(L) - \psi'(-L) = 0.
\end{equation}
There is only one solution for $a$ such that $\phi_0'(a) = 0$
and this is $a = 0$. Hence $c$ is uniquely defined by (\ref{expression-c}) with $\phi_0(0) = 1$,
after which $w$ is uniquely defined by the solution of the integral equation (\ref{integral-eq})
with $F = F(c,\psi)$. This yields the asymptotic expression (\ref{expansion-1}).
By uniqueness, this constructed solution
with small $c$ and $\psi$ corresponds to the solution of Theorem \ref{theorem-higher-order}.
\end{proof}

\begin{remark}
By the construction of $\psi$ in Lemma \ref{lemma-higher-order}, the parameters $b$ and $c$ in
Theorem \ref{theorem-higher-order}
and Lemma \ref{lemma-higher-order} are different from each other. However,
it follows from (\ref{expansions-1}) and (\ref{expression-c}) that
$b = c + \mathcal{O}\left(\epsilon^{\frac{3}{p}}\right)$,
where $c = \mathcal{O}\left(\epsilon^{\frac{1}{p}}\right)$. \label{remark-b-c}
\end{remark}

To illustrate Remark \ref{remark-b-c} with an example, let us consider the particular case of the cubic nonlinearity with
$p = 1$. Then, $F(c,\phi)$ is a smooth function near $c = 0$ and $\psi = 0$
with the expansion
$$
F(c,\psi) = 6 u_{\epsilon} (c u_{\epsilon}' + \psi)^2 +  2 c^3 (c u_{\epsilon}' + \psi)^3.
$$
In this case, $u_{\epsilon}$ is a smooth function in $x$ and the derivatives of $u_{\epsilon}$ satisfies
the linear inhomogeneous equations
$$
M_{\epsilon} u_{\epsilon}'' = 12 u_{\epsilon} (u_{\epsilon}')^2
$$
and
$$
M_{\epsilon} u_{\epsilon}''' = 36 u_{\epsilon} u_{\epsilon}' u_{\epsilon}'' + 12 (u_{\epsilon}')^3.
$$
Therefore, we can construct a near-identity transformation
for the solution $\psi$ of the integral equation (\ref{integral-eq}) with $F = F(c,\psi)$ such that
\begin{equation}
\label{near-identity-w}
\psi = \frac{1}{2} c^2 u_{\epsilon}'' + \frac{1}{6} c^3 (u_{\epsilon}''' - \epsilon^2 u_{\epsilon}') +
\tilde{\psi},
\end{equation}
where $\tilde{\psi} \in C^1(-L,L)$ is uniquely determined and
satisfies the bound $\| \tilde{\psi} \|_{C^1(-L,L)} = \mathcal{O}(c^4)$.
Note that we have used
$u_{\epsilon}''(\pm L) = 0$ and $u_{\epsilon}'''(\pm L) = \epsilon^2 u'_{\epsilon}(L)$
to satisfy the boundary conditions (\ref{bc-psi}) for the solution (\ref{near-identity-w}). By comparing
the solution $u(x) = u_{\epsilon}(x+b)$
and the solution given by (\ref{decomposition-w}) and (\ref{near-identity-w}),
we obtain the correspondence between $b$ and $c$:
$$
b = c - \frac{1}{6} c^3 \epsilon^2 + \mathcal{O}(c^4) = c + \mathcal{O}(\epsilon^4),
$$
because $c = \mathcal{O}(\epsilon)$.

In the next two sections, we consider spectral and orbital stability of the bifurcating standing wave
solutions of the NLS equation (\ref{eq}) along the primary and higher branches.

\section{Stability of the primary branch}

Here we consider the orbital stability of the primary branch, the existence of which
is given by Theorem \ref{theorem-primary} for $\omega = -\epsilon^2$ with $\epsilon > 0$ sufficiently small.
To this end, we shall count the number of negative eigenvalues in the
operators $L_-$ and $L_+$ given by the spectral problems (\ref{L-minus-nls}) and (\ref{L-plus-nls}), where $(u,v)$ is the solution of
the boundary-value problem (\ref{stat-nls}) along the primary branch. After counting of the number of
negative eigenvalues, it is straightforward to apply the orbital stability theory from \cite{GSS1}.
The main result of this section is formulated in the following theorem.

\begin{theorem}
\label{theorem-stability-primary}
For $\omega = -\epsilon^2$ with $\epsilon > 0$ sufficiently small,
the primary branch of Theorem \ref{theorem-primary} is orbitally stable with respect to
the time evolution of the NLS equation for every $p\in (0,2)$
and orbitally unstable for every $p \in (2,\infty)$.
\end{theorem}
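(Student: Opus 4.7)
The strategy is to invoke the orbital stability criterion of Grillakis--Shatah--Strauss \cite{GSS1,GSS2,Gr}, which in the present setting (only the $U(1)$ gauge symmetry $\Psi\mapsto e^{i\theta}\Psi$ is available, since the tadpole graph breaks translation invariance) reduces to three ingredients: (i) $n(L_-)=0$ with simple kernel spanned by $\Phi$; (ii) $n(L_+)=1$; and (iii) the sign of the slope $Q'(\omega)$, where $Q(\omega)=\|u\|_{L^2(-L,L)}^2+\|v\|_{L^2(L,\infty)}^2$ is evaluated along the primary branch. Orbital stability then follows whenever $Q'(\omega)<0$, whereas the combination $n(L_+)=1$ and $Q'(\omega)>0$ produces spectral and hence orbital instability by \cite{Gr}.

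For the eigenvalue count, Theorem \ref{theorem-primary} yields that $\Phi=(u,v)$ is strictly positive on the whole graph, and the stationary problem (\ref{stat-nls}) reads $L_-\Phi=0$. A Perron--Frobenius-type argument for the Schr\"odinger-type operator $L_-$ with Kirchhoff boundary conditions (via the ground-state characterization as the minimizer of the associated quadratic form on $\mathcal{E}(\Delta)$) shows that $0$ is the simple lowest eigenvalue, so $n(L_-)=0$ and $\ker(L_-)=\operatorname{span}\{\Phi\}$. Testing $L_+$ on $\Phi$ via $L_+=L_--2p(p+1)|\Phi|^{2p}$ gives $\langle L_+\Phi,\Phi\rangle=-2p(p+1)\|\Phi\|_{L^{2p+2}}^{2p+2}<0$, so $n(L_+)\geq 1$. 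To show equality, we note that at $\epsilon=0$ the operator $L_+$ collapses to $-\Delta$, whose spectrum is $[0,\infty)$ with embedded eigenvalues $\omega_n=\pi^2n^2/L^2$ and no negative part; for small $\epsilon>0$ the essential spectrum of $L_+$ is $[\epsilon^2,\infty)$, and the nonlinear potential $-(2p+1)(p+1)|\Phi|^{2p}$ is of size $\mathcal{O}(\epsilon^2)$. By an edge-bifurcation / Lyapunov--Schmidt analysis analogous to those of Sections 3--4 (and in the spirit of \cite{KS04}), exactly one new eigenvalue of $L_+$ detaches from the continuum edge and becomes negative of order $-\epsilon^2$, while the former embedded eigenvalues $\omega_n>\epsilon^2$ stay above the edge and no other negative eigenvalue is created. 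Hence $n(L_+)=1$ for $\epsilon$ sufficiently small.

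For the slope, insert the asymptotic expansions of Theorem \ref{theorem-primary} into $Q(\omega)$. The ring contribution is $\|u\|_{L^2(-L,L)}^2=2L\epsilon^{2/p}(1+\mathcal{O}(\epsilon^2))$; the change of variables $z=\epsilon(x-L)+a$ in the tail gives $\|v\|_{L^2(L,\infty)}^2=\epsilon^{2/p-1}\int_a^\infty\phi_0^2(z)\,dz$. Using that $\phi_0$ is even with $\phi_0(0)=1$ and $a=2L\epsilon+\mathcal{O}(\epsilon^3)$, a Taylor expansion yields $\int_a^\infty\phi_0^2=\tfrac12\|\phi_0\|_{L^2(\RE)}^2-2L\epsilon+\mathcal{O}(\epsilon^2)$, so the $2L\epsilon^{2/p}$ terms cancel between ring and tail and
\begin{equation*}
Q(\omega)=\tfrac12\|\phi_0\|_{L^2(\RE)}^2\,(-\omega)^{1/p-1/2}+{\rm o}\lf((-\omega)^{1/p-1/2}\ri),\qquad \omega=-\epsilon^2\to 0^-.
\end{equation*}
Therefore $\operatorname{sign}(Q'(\omega))=-\operatorname{sign}(1/p-1/2)$, which is negative for $p\in(0,2)$ and positive for $p\in(2,\infty)$. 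Combined with (i) and (ii), the GSS criterion delivers orbital stability in the subcritical range and spectral/orbital instability in the supercritical range.

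The main obstacle is the sharp count $n(L_+)=1$: one must rigorously verify both that no further negative eigenvalue bifurcates from the continuum edge at $\lambda=\epsilon^2$, and that the embedded eigenvalues $\omega_n$ of $-\Delta$ do not cross into the negative region under the $\mathcal{O}(\epsilon^2)$ nonlinear perturbation, despite the presence of the Kirchhoff junction. This is controlled by localizing the bifurcating eigenfunction on the tail at the length scale $\epsilon^{-1}$ where the bulk of $\Phi$ lives, together with the smallness of the potential on the ring, using a perturbative reduction analogous to those of Sections 3 and 4.
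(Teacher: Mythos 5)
Your overall strategy is the paper's: Grillakis--Shatah--Strauss with the counts $n(L_-)=0$, $\ker L_-=\operatorname{span}\{\Phi\}$, $n(L_+)=1$, and the sign of the slope $Q'(\omega)$. Your $L_-$ argument (strict positivity of $\Phi$ plus a ground-state characterization) is in substance the paper's Lemma~\ref{lemma-L-minus}, which implements it via the substitution $U=au$, $V=bv$ and integration by parts. Your slope computation also agrees with the paper's; your expansion of $\|\phi_0\|^2_{L^2(a,\infty)}$ exhibiting the cancellation of the $2L\epsilon^{2/p}$ terms is exactly what makes the case $p=2$ inconclusive in the paper's remark, and for $p\neq 2$ only the leading term $\tfrac12\|\phi_0\|^2_{L^2(\RE)}(-\omega)^{1/p-1/2}$ matters.

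The genuine gap is the count $n(L_+)=1$, which you assert and correctly flag as the main obstacle but do not prove, and the heuristic you offer is not the mechanism that decides it. The potential $-(2p+1)(p+1)|\Phi|^{2p}$ is pointwise $\mathcal{O}(\epsilon^2)$ but supported on the tail over a length scale $\mathcal{O}(\epsilon^{-1})$; it sits exactly at the critical scaling relative to the spectral gap below $\epsilon^2$, so no smallness or ``one eigenvalue detaches from the edge'' argument can by itself exclude that zero or two eigenvalues emerge. The paper settles this by the exact rescaling $z=\epsilon(x-L)$, $\lambda=\epsilon^2\Lambda$, under which the half-line part of $L_+$ becomes the $\epsilon$-independent soliton linearization $-V''+V-(2p+1)(p+1)\sech^2(pz)V=\Lambda V$ (one negative eigenvalue $\Lambda_0$ with even eigenfunction, a simple zero eigenvalue with the odd eigenfunction $\phi_0'$), while the ring shrinks to $(-\epsilon L,\epsilon L)$ and the Kirchhoff conditions reduce to the matching relation (\ref{root-Lambda}), whose right-hand side is $\mathcal{O}(\epsilon)$. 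The separate technical Lemma~\ref{lemma-Sturm} shows that the Weyl-type quotient $F(\Lambda)=V_\infty'(0;\Lambda)/V_\infty(0;\Lambda)$ has a unique \emph{simple} zero on $(-\infty,0)$ at $\Lambda_0$; the implicit function theorem then yields exactly one negative eigenvalue $\Lambda=\Lambda_0+\mathcal{O}(\epsilon)$, and the translational zero eigenvalue is expelled because its eigenfunction is odd, so $V_\infty(0;0)=0$ and the quotient cannot meet the $\mathcal{O}(\epsilon)$ matching condition near $\Lambda=0$. Without this (or an equivalent) quantitative argument your step (ii) is incomplete; with it, your proof becomes the paper's.
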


Recall that $L_{\pm}$ are self-adjoint operator on $L^2(-L,L) \times L^2(L,\infty)$
with the domain $D(\Delta)$ given by (\ref{domain-Delta}). Since $L_{\pm}$ differs from $L_0 := -\Delta - \omega$ by a
bounded potential with the exponential decay to zero as $x \to \infty$ (hence, it is a relatively compact
perturbation to $L_0$), the absolutely continuous spectra of $L_{\pm}$ and $L_0$ (denoted
by $\sigma_c$) coincide. Moreover, the spectrum of $L_0$ is purely continuous,
so that $\sigma_c(L_{\pm}) = \sigma_c(L_0) = \sigma(L_0) = [-\omega,\infty)$.
Since the primary branch is defined for $\omega < 0$, the absolutely continuous spectrum of
$L_{\pm}$ is bounded from below by the number $-\omega > 0$.
Thus, for every $\omega < 0$, negative and zero eigenvalues of $L_{\pm}$ are isolated
from $\sigma_c(L_{\pm})$, hence we can count the number of these eigenvalues
with the account of their multiplicity. The following lemma reports the corresponding
result for the operator $L_-$.

\begin{lemma}
Let $\omega = -\epsilon^2$ and $\epsilon > 0$ be sufficiently small.
Operator $L_-$ is positive and $0$ is a simple isolated eigenvalue
with eigenfunction $(U,V) = (u,v)$. \label{lemma-L-minus}
\end{lemma}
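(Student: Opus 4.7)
The plan is to identify $(u,v)$ as a strictly positive element of $\ker L_-$ and then invoke a ground-state (Perron--Frobenius-type) argument, adapted to the Kirchhoff setting via a multiplicative substitution. First, differentiating the stationary system (\ref{stat-nls}) in the phase parameter, or directly comparing (\ref{L-minus-nls}) with (\ref{stat-nls}), shows $L_-(u,v)=0$, so $0$ is an eigenvalue. Positivity of the eigenfunction follows from Theorem \ref{theorem-primary}, which gives $u=\epsilon^{1/p}(1+\mathcal{O}(\epsilon^2))>0$ on $(-L,L)$ for small $\epsilon$, combined with the explicit representation $v(x)=\epsilon^{1/p}\phi_0(\epsilon(x-L)+a)$ and the strict positivity of $\phi_0$ in (\ref{soliton-sech}), so $u>0$ on $[-L,L]$ and $v>0$ on $[L,\infty)$.

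Next, for arbitrary $(U,V)$ in the form domain of $L_-$, I would introduce the bijective substitution $U=u\psi$, $V=v\chi$, which is admissible because $u$ is bounded below by a positive constant on $[-L,L]$ and $v$ decays exponentially with the same rate governing admissible $V$. Using the equation satisfied by $(u,v)$ and integration by parts, the quadratic form collapses to
\begin{equation*}
q_-(U,V)\;=\;\int_{-L}^{L}u^2(\psi')^2\,dx+\int_{L}^{\infty}v^2(\chi')^2\,dx+\mathcal{B},
\end{equation*}
where $\mathcal{B}$ collects the boundary contributions at $x=\pm L$.

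The crux is showing $\mathcal{B}=0$. I would translate the Kirchhoff constraints $U(L)=U(-L)=V(L)$ and $U'(L)-U'(-L)=V'(L)$ into conditions on $(\psi,\chi)$, using the matching $u(L)=u(-L)=v(L)\neq 0$ and the evenness of $u$ on the primary branch (proved in Theorem \ref{theorem-primary}), which gives $u'(-L)=-u'(L)$ and hence $2u'(L)=v'(L)$. A short computation then yields $\psi(L)=\psi(-L)=\chi(L)$ and $\psi'(L)-\psi'(-L)=\chi'(L)$, and substituting these together with $u(L)=v(L)$ into the explicit form of $\mathcal{B}$ produces exact cancellation. This is the step I expect to be the most delicate, since it relies simultaneously on both Kirchhoff conditions \emph{and} the symmetry of the primary-branch profile.

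With $q_-(U,V)=\int u^2(\psi')^2+\int v^2(\chi')^2\geq 0$, positivity of $L_-$ is immediate. Equality forces $\psi'\equiv 0$, $\chi'\equiv 0$, and the continuity condition $\psi(L)=\chi(L)$ then identifies $\psi$ and $\chi$ with a common constant $c$, giving $(U,V)=c(u,v)$. Hence $\ker L_-=\mathrm{span}\{(u,v)\}$ is one-dimensional. Finally, since $\sigma_{\rm ess}(L_-)=[\epsilon^2,\infty)$ by the earlier discussion of relative compactness, the eigenvalue $0$ is isolated from the continuous spectrum, completing the proof.
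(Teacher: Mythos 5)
Your argument is essentially the paper's own proof: both identify $(u,v)$ as a positive element of $\ker L_-$ and use the substitution $U=u\psi$, $V=v\chi$ to reduce the quadratic form to $\int u^2(\psi')^2+\int v^2(\chi')^2$, with the boundary terms cancelling via the Kirchhoff conditions. One small remark: the cancellation of $\mathcal{B}$ does not actually require the evenness of $u$ — the continuity $U(L)=U(-L)=V(L)$ with $u(L)=u(-L)=v(L)$ forces $\psi(\pm L)=\chi(L)$, and then the Kirchhoff derivative condition on $(u,v)$ itself, $u'(L)-u'(-L)=v'(L)$, already kills the boundary contribution.
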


\begin{proof}
By comparing (\ref{stat-nls}) and (\ref{L-minus-nls}),
we find that $(U,V) = (u,v)$ is an eigenvector of
the spectral problem (\ref{L-minus-nls}) for $\lambda = 0$.
Theorem \ref{theorem-primary} implies that for $\epsilon > 0$
sufficiently small, $u(x) > 0$ for all $x \in [-L,L]$ and $v(x) > 0$ for all $x \geq L$.

To show that $0$ is a simple isolated eigenvalue at the bottom of the
spectrum of $L_-$, we consider the energy quadratic form
associated with $L_-$:
\begin{eqnarray*}
E(U,V) & = & \int_{-L}^L \left[ \left( \frac{dU}{dx} \right)^2 + \epsilon^2 U^2 - (p+1) |u|^{2p} U^2 \right] dx \\
& \phantom{t} & + \int_{L}^{\infty} \left[ \left( \frac{dV}{dx} \right)^2 + \epsilon^2 V^2 - (p+1) |v|^{2p} V^2 \right] dx.
\end{eqnarray*}
Let us consider the representation
\begin{equation}
\label{representation-U-V}
U(x) = a(x) u(x), \quad V(x) = b(x) v(x).
\end{equation}
It is well-defined because $u$ and $v$ are positive for all admissible $x$. If $(U,V) \in \mathcal{D}(\Delta)$ is an eigenvector
of $L_-$ for $\lambda < 0$, then $b(x)$ and $b'(x)$ decay exponentially to zero as $x \to \infty$,
whereas if $(U,V) \in \mathcal{D}(\Delta)$  is an eigenvector of $L_-$ for $\lambda \in [0,\epsilon^2)$, then
$b(x)$ and $b'(x)$ may grow but $b(x) v(x)$ and $b'(x) v(x)$ still decay exponentially to zero as $x \to \infty$.

Substituting (\ref{representation-U-V}) into $E(U,V)$,
integrating by parts for any $(U,V) \in \mathcal{D}(\Delta)$, and
using the stationary system (\ref{stat-nls}), we obtain
$$
E(U,V) = \int_{-L}^L \left( \frac{da}{dx} \right)^2 u^2 dx
+ \int_{L}^{\infty} \left( \frac{d b}{dx} \right)^2 v^2 dx \geq 0.
$$
Therefore, no negative eigenvalues of $L_-$ exists and
the zero eigenvalue occurs if and only if
$a$ and $b$ are constant in $x$. Thus,
the eigenvector $(U,V) = (u,v)$ for the zero eigenvalue is unique
up to the constant multiplication factor.
\end{proof}

To deal with the spectral problem (\ref{L-plus-nls}) for the operator $L_+$, we use the scaling
transformation $\omega = -\epsilon^2$ and $\lambda = \epsilon^2 \Lambda$
together with the representations (\ref{scaling-v}) and (\ref{scal-transf}) for
the stationary solution $(u,v)$. As a result, the spectral problem (\ref{L-plus-nls}) is rewritten
in the equivalent form
\begin{equation}
\label{L-plus-nls-scale}
\left\{ \begin{array}{l} - U''(z) + U(z) - (2p+1)(p+1) |\psi(z)|^{2p} U(z) = \Lambda U(z), \quad z \in (-\epsilon L,\epsilon L), \\
- V''(z) + V(z) - (2p+1)(p+1) |\phi(z)|^{2p} V(z) = \Lambda V(z), \quad z \in (0,\infty), \\
U(\epsilon L) = U(-\epsilon L) = V(0), \\
U'(\epsilon L)-U'(-\epsilon L) = V'(0),
\end{array} \right.
\end{equation}
where we use the same notations $(U,V)$ for rescaled functions
$U(\epsilon x)$ and $V(\epsilon (x-L))$.

The absolute continuous spectrum of the operator $L_+$ for $\sigma_c(L_+) = [\epsilon^2,\infty)$
is now scaled to the absolutely continuous spectrum of the spectral problem (\ref{L-plus-nls-scale})
for $\Lambda \in [1,\infty)$.
Therefore, we shall focus on isolated eigenvalues of the spectral problem (\ref{L-plus-nls-scale}) for $\Lambda < 1$.

Recall that $\phi(z) = \phi_0(z+a)$, where $\phi_0(z) = {\rm sech}^{\frac{1}{p}}(pz)$,
see (\ref{soliton-sech}).
It is well known that the scalar Schr\"{o}dinger spectral problem on the line
\begin{equation}
\label{L-plus-scalar}
- V''(z) + V(z) - (2p+1)(p+1) {\rm sech}^2(pz) V(z) = \Lambda V(z), \quad z  \in \mathbb{R},
\end{equation}
admits a finite number of isolated eigenvalues (see, e.g., pp.103--105 in \cite{Titch}). Because $V(z) = \phi_0'(z)$
is the eigenfunction of the spectral problem (\ref{L-plus-scalar}) for $\Lambda = 0$
and $\phi_0'$ has only one zero on the real line, Sturm's nodal theorem
(see, e.g., Lemma 4.2 on p. 201 in \cite{Pel})  implies that
the spectral problem (\ref{L-plus-scalar}) has
exactly one negative eigenvalue, say $\Lambda_0 < 0$, a simple zero eigenvalue, and the rest of the spectrum
is bounded from below by a positive number $\Lambda_1$ (which coincides
with either the next positive eigenvalue or the bottom of the absolutely continuous spectrum at $1$).
The eigenfunction for the negative eigenvalue $\Lambda_0$ is even and strictly
positive and the eigenfunction for the zero eigenvalue is odd.
Given these preliminary facts, we prove the following technical result.

\begin{lemma}
For every $\Lambda \in (-\infty,1)$, there exists a unique $C^{\infty}$ solution of
the differential equation (\ref{L-plus-scalar}) on $(z_0,\infty)$ for every $z_0 \in \mathbb{R}$
that decays to zero as $z \to \infty$ and satisfies the boundary condition
\begin{equation}
\label{bc-plus-scalar}
\lim_{z \to \infty} V(z) e^{\sqrt{1-\Lambda} z} = 1.
\end{equation}
Denote this solution by $V_{\infty}(z;\Lambda)$. Then, the function
\begin{equation}
\label{Evans-function}
F(\Lambda) := \frac{V'_{\infty}(0;\Lambda)}{V_{\infty}(0;\Lambda)},
\end{equation}
where $V'_{\infty}$ is the derivative of $V_{\infty}$ with respect to the first argument,
is $C^{\infty}$ for every $\Lambda \in (-\infty,0)$ and admits a unique simple zero on $(-\infty,0)$
at $\Lambda = \Lambda_0 < 0$. \label{lemma-Sturm}
\end{lemma}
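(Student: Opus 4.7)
The plan is to construct $V_\infty(\cdot;\Lambda)$ via a Volterra integral equation at $+\infty$, then use the parity of the potential $\sech^2(pz)$ to relate the vanishing of $V_\infty(0;\Lambda)$ and of $V_\infty'(0;\Lambda)$ to the existence of odd and even $L^2$-eigenfunctions, respectively, of the full-line problem \eqref{L-plus-scalar}. The spectral information already recalled in the paragraph preceding the lemma then pins down the unique zero of $F$ on $(-\infty,0)$, and a Fredholm-type identity obtained by differentiating in $\Lambda$ yields its simplicity.

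For existence and smoothness, I would set $\kappa := \sqrt{1-\Lambda}$, look for $V_\infty(z;\Lambda) = e^{-\kappa z}(1 + R(z;\Lambda))$ with $R \to 0$ as $z \to \infty$, and convert the ODE by variation of parameters against $-W'' + \kappa^2 W = 0$ into a Volterra integral equation for $R$ whose kernel inherits the exponential decay of $\sech^2(pz)$. A standard contraction argument on a weighted $L^\infty((z_0,\infty))$ space gives a unique $R(\cdot;\Lambda)$, smooth in $z$ by ODE regularity and $C^\infty$ (in fact analytic) in $\Lambda$ on $(-\infty,1)$ by smooth dependence of the kernel. The solution $V_\infty$ extends uniquely to all of $\mathbb{R}$ by the ODE, so $F(\Lambda)$ is smooth wherever $V_\infty(0;\Lambda)\neq 0$.

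To locate the zero, I would exploit parity: since $\sech^2(pz)$ is even, the decaying solution of \eqref{L-plus-scalar} at $-\infty$ is $V_\infty(-z;\Lambda)$, so any decaying $L^2$-eigenfunction of \eqref{L-plus-scalar} must be either even (requiring $V_\infty'(0;\Lambda)=0$) or odd (requiring $V_\infty(0;\Lambda)=0$). If $V_\infty(0;\Lambda^*)=0$ for some $\Lambda^* \in (-\infty,0)$, the odd reflection of $V_\infty(\cdot;\Lambda^*)$ across $z=0$ would be a $C^1$, hence smooth, odd $L^2$-eigenfunction of \eqref{L-plus-scalar} with negative eigenvalue $\Lambda^*$, contradicting the fact that the only negative eigenvalue $\Lambda_0$ carries an even (ground-state) eigenfunction. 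Therefore $V_\infty(0;\Lambda)\neq 0$ on $(-\infty,0)$ and $F\in C^\infty((-\infty,0))$. The same reflection argument, now applied to even extensions, shows that $F(\Lambda^*)=0$ for some $\Lambda^* \in (-\infty,0)$ is equivalent to $V_\infty(\cdot;\Lambda^*)$ extending evenly to an eigenfunction of \eqref{L-plus-scalar}; uniqueness of the negative eigenvalue forces $\Lambda^* = \Lambda_0$, with $V_\infty(\cdot;\Lambda_0)$ proportional to the ground state.

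For simplicity, set $\tilde V(z) := V_\infty(z;\Lambda_0)$ and $\dot V(z) := \partial_\Lambda V_\infty(z;\Lambda_0)$; differentiating the ODE in $\Lambda$ at $\Lambda_0$ gives
$$
\bigl[-\partial_z^2 + 1 - \Lambda_0 - (2p+1)(p+1)\sech^2(pz)\bigr]\dot V = \tilde V \qquad \text{on } (0,\infty).
$$
Multiplying by $\tilde V$, integrating from $0$ to $N$ and integrating by parts twice, using the eigenvalue equation for $\tilde V$ together with $\tilde V'(0)=0$, yields
$$
\int_0^N \tilde V(z)^2\, dz = \tilde V(0)\,\dot V'(0) + \bigl[\tilde V'(z)\dot V(z) - \tilde V(z)\dot V'(z)\bigr]_{z=N}.
$$
Sending $N \to \infty$ and using the decay $|\tilde V|,|\tilde V'| = \mathcal{O}(e^{-\kappa_0 z})$ and $|\dot V|,|\dot V'| = \mathcal{O}(z e^{-\kappa_0 z})$ with $\kappa_0 = \sqrt{1-\Lambda_0}$, obtained by differentiating the Volterra representation in $\Lambda$, kills the boundary term at infinity and gives $\dot V'(0) = \|\tilde V\|_{L^2(0,\infty)}^2/\tilde V(0) \neq 0$. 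Since $F'(\Lambda_0) = \dot V'(0)/\tilde V(0)$ (the other product in the quotient rule drops because $V_\infty'(0;\Lambda_0)=0$), simplicity follows. The main obstacle I anticipate is precisely this uniform $\Lambda$-differentiated decay control at $+\infty$: it reduces to rereading the fixed-point construction of the first step, but it must be tracked carefully enough to legitimately discard the boundary term.
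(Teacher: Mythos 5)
Your proof is correct, and the first two steps (the Volterra integral equation at $+\infty$ for existence and smoothness in $z$ and $\Lambda$, and the parity/reflection argument identifying zeros of $V_\infty(0;\cdot)$ with odd eigenfunctions and zeros of $V_\infty'(0;\cdot)$ with even eigenfunctions) coincide with the paper's argument. The only genuine divergence is in the simplicity step: the paper argues by contradiction, assuming $F'(\Lambda_0)=0$, showing that $\Psi=\partial_\Lambda V_\infty(\cdot;\Lambda_0)$ would then extend to an even $L^2(\mathbb{R})$ solution of the inhomogeneous equation $(\mathcal{L}-\Lambda_0)\Psi = V_\infty(\cdot;\Lambda_0)$, and invoking the Fredholm alternative for the simple eigenvalue $\Lambda_0$; you instead compute $F'(\Lambda_0)$ directly via Green's identity on $(0,N)$ and obtain $F'(\Lambda_0)=\|\tilde V\|_{L^2(0,\infty)}^2/\tilde V(0)^2>0$. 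These are two faces of the same mechanism (your integration by parts is exactly the concrete form of the Fredholm solvability obstruction $\langle \tilde V,\tilde V\rangle\neq 0$), but your version buys the sign of $F'(\Lambda_0)$ in addition to its nonvanishing, at the modest cost of having to track the $\mathcal{O}\bigl((1+z)e^{-\kappa_0 z}\bigr)$ decay of $\partial_\Lambda V_\infty$ and its derivative to discard the boundary term at $N\to\infty$ — which, as you note, follows by differentiating the fixed-point representation in $\Lambda$ and is routine.
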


\begin{proof}
Using Green's function, we look for the decaying solution of
the differential equation (\ref{L-plus-scalar}) satisfying the boundary condition (\ref{bc-plus-scalar})
for any $\Lambda < 1$ from a suitable solution of the inhomogeneous integral equation
\begin{equation}
V(z) = e^{-\sqrt{1-\Lambda} z} - \frac{(2p+1)(p+1)}{\sqrt{1-\Lambda}} \int_{z}^{+\infty}
\sinh\left( \sqrt{1-\Lambda} (z-y) \right) {\rm sech}^2(py) V(y) dy.
\end{equation}
Denoting $W(z):= V(z) e^{\sqrt{1-\Lambda} z}$, we rewrite the integral equation
in the form
\begin{equation}
W(z) = 1 - \frac{(2p+1)(p+1)}{2 \sqrt{1-\Lambda}} \int_{z}^{+\infty}
\left( e^{2 \sqrt{1-\Lambda} (z-y)} - 1 \right) {\rm sech}^2(py) W(y) dy.
\end{equation}
Since the kernel of the integral equation is bounded for every $y \geq z$
and the potential term ${\rm sech}^2(py)$ is absolutely integrable,
existence and uniqueness of a bounded solution $W \in L^{\infty}(z_0, \infty)$
for every fixed $z_0 \in \mathbb{R}$ follows by the standard methods (see, e.g., Lemma 4.1 on pp.
199-200 in \cite{Pel}). The solution is $C^{\infty}$ for all $z$ on $(z_0,\infty)$ and
all $\Lambda$ on $(-\infty,1)$. Therefore, the unique smooth solution $V_{\infty}$  of
the differential equation (\ref{L-plus-scalar}) satisfying the boundary condition (\ref{bc-plus-scalar})
exists.

Next, we consider the function $F(\Lambda)$ defined by (\ref{Evans-function}).
This function is $C^{\infty}$ on $(-\infty,0)$ if and only if $V_{\infty}(0;\Lambda)$ is nonzero.
Assume that $V_{\infty}(0;\Lambda) = 0$ for some $\Lambda < 1$. Since the differential
equation (\ref{L-plus-scalar}) has even potential, the decaying
function $V_{\infty}(z;\Lambda)$ for $z \in [0,\infty)$ is extended as the odd
solution of the spectral problem (\ref{L-plus-scalar}) decaying at both $z \to \pm \infty$. Therefore,
it is an odd eigenfunction. However, as explained above, the smallest eigenvalue
with odd eigenfunction is located at $\Lambda = 0$. Therefore, $V_{\infty}(0;\Lambda) \neq 0$
for every $\Lambda \in (-\infty,0)$ and $F \in C^{\infty}(-\infty,0)$.

Finally, we prove that $F(\Lambda) = 0$ has only one simple zero on $(-\infty,0)$ and this zero coincides
with the negative eigenvalue $\Lambda_0$. Assume that $V_{\infty}'(0;\Lambda) = 0$ for some $\Lambda \in (-\infty,0)$.
Then, the decaying function $V_{\infty}(z;\Lambda)$ for $z \in [0,\infty)$ is extended as the even
solution of the spectral problem (\ref{L-plus-scalar}) decaying at both $z \to \pm \infty$. Therefore,
it is an even eigenfunction and $\Lambda$ is an eigenvalue. As explained above, there is only one negative eigenvalue
$\Lambda_0$ of the spectral problem (\ref{L-plus-scalar}).
Therefore, the zero of $V_{\infty}'(0;\Lambda) = 0$ occurs at $\Lambda = \Lambda_0$.

To prove that $\Lambda_0$ is a simple zero of $F$, we assume that $F'(\Lambda_0) = 0$ and obtain a contradiction.
Since $F(\Lambda_0) = 0$, the condition $F'(\Lambda_0) = 0$ is true if and only if $\partial_{\Lambda} V_{\infty}'(0;\Lambda_0) = 0$.
Define $\Psi(z) := \partial_{\Lambda} V_{\infty}(z;\Lambda_0)$. From the boundary condition at $z = 0$ and
the decay behavior (\ref{bc-plus-scalar}), we have $\Psi'(0) = 0$
and $\Psi(z;\Lambda_0) \to 0$ as $z \to \infty$. Simultaneously, differentiating
the spectral problem (\ref{L-plus-scalar}) in $\Lambda$, we obtain the inhomogeneous problem
for $\Psi$:
\begin{equation}
\label{L-plus-scalar-Psi}
- \Psi''(z) + \Psi(z) - (2p+1)(p+1) {\rm sech}^2(pz) \Psi(z) = \Lambda_0 \Psi(z) + V_{\infty}(z;\Lambda_0), \quad z  \in \mathbb{R}.
\end{equation}
Since $V_{\infty}(z;\Lambda_0)$ is even and $\Psi'(0) = 0$, $\Psi$ is extended as the even
solution of the inhomogeneous equation (\ref{L-plus-scalar-Psi}) decaying at both $z \to \pm \infty$.
Therefore, $\Psi \in L^2(\mathbb{R})$. However, existence of such solution contradicts
to the Fredholm theory for the self-adjoint spectral problem (\ref{L-plus-scalar})
with a simple eigenvalue $\Lambda_0$. Therefore, no $\Psi \in L^2(\mathbb{R})$
exists, and $F'(\Lambda_0) = 0$ is impossible. Thus, $\Lambda_0$ is a simple zero of $F$.
\end{proof}

We are now ready to count the negative and zero eigenvalue of the operator $L_+$
in the spectral problem (\ref{L-plus-nls}), which is rescaled as the spectral
problem (\ref{L-plus-nls-scale}).

\begin{lemma}
Let $\omega = -\epsilon^2$ and $\epsilon > 0$ be sufficiently small.
Operator $L_+$ has exactly one negative eigenvalue and no zero
eigenvalues. \label{lemma-L-plus}
\end{lemma}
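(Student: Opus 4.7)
My plan is to reduce the rescaled spectral problem (\ref{L-plus-nls-scale}) to a scalar equation in $\Lambda$ by eliminating the $U$-component on the shrinking ring $(-\epsilon L, \epsilon L)$ and then applying the function $F$ from Lemma \ref{lemma-Sturm}. Since the essential spectrum starts at $\Lambda = 1$ in the rescaled variable and the potential of $L_+$ is uniformly $O(\epsilon^2)$ in the original variable (because $\|u\|_\infty = O(\epsilon^{1/p})$ and $\|v\|_\infty = O(\epsilon^{1/p})$), the rescaled spectrum lies in a fixed compact set $[-C_0, 1]$ independent of $\epsilon$, so I only need to search for $\Lambda$ in a bounded range. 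On the short interval, $\psi \to 1$ uniformly, so Taylor-expanding $U$ around $z = 0$, integrating the $U$-equation from $-\epsilon L$ to $\epsilon L$, and inserting the Kirchhoff conditions $U(\pm\epsilon L) = V(0)$ and $U'(\epsilon L) - U'(-\epsilon L) = V'(0)$ gives the effective Robin-type relation
\begin{equation}
V'(0) = -2\epsilon L \bigl[\Lambda + 2p^2 + 3p\bigr] V(0) + o(\epsilon) V(0) \notag
\end{equation}
uniformly for $\Lambda$ in compacts of $(-\infty, 1)$. On $(0,\infty)$ the decaying $V$-solution is a scalar multiple of a shifted analogue $\tilde V_\infty(\cdot;\Lambda, a)$ of $V_\infty$, and since $a = O(\epsilon)$ by Theorem \ref{theorem-primary} the perturbation $\tilde V_\infty - V_\infty$ is $O(\epsilon)$ in $C^1$ on compacts. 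Dividing by $V(0)$ where nonzero and recalling (\ref{Evans-function}) converts the eigenvalue condition into the scalar equation
\begin{equation}
F(\Lambda) = -2\epsilon L\bigl[\Lambda + 2p^2 + 3p\bigr] + o(\epsilon). \notag
\end{equation}

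By Lemma \ref{lemma-Sturm}, $F$ is smooth on $(-\infty, 0)$ with a unique simple zero at $\Lambda_0 < 0$, and $|F(\Lambda)| \to \infty$ as $\Lambda \to 0^-$ (because $V_\infty(0;0) = 0$ while $V'_\infty(0;0) \neq 0$). The implicit function theorem applied at $(\Lambda, \epsilon) = (\Lambda_0, 0)$ then produces a unique root $\Lambda_*(\epsilon) = \Lambda_0 + O(\epsilon)$ of the reduced equation for every small $\epsilon > 0$, yielding exactly one negative eigenvalue of $L_+$. To rule out any further negative root, on any compact subset of $[-C_0, 0] \setminus (\Lambda_0 - \delta, \Lambda_0 + \delta)$ bounded away from $0$ the quantity $|F|$ has a positive lower bound $m > 0$, while the right-hand side is $O(\epsilon)$; for $\epsilon$ small the equation is inconsistent. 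Near $\Lambda = 0^-$ the left-hand side blows up whereas the right-hand side stays bounded, so again no root exists.

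The absence of a zero eigenvalue requires a separate argument because $V_\infty(0; 0) = 0$ invalidates the division by $V(0)$. If $V \not\equiv 0$ at $\Lambda = 0$, the unique decaying solution on $(0,\infty)$ is a multiple of $\tilde V_\infty(\cdot; 0, a) = -e^a\, 2^{-1/p}\phi_0'(\cdot + a)$, giving $\tilde V_\infty(0;0,a) = p \cdot 2^{-1/p}\, a + O(a^2) \neq 0$ for $a = O(\epsilon)$ small but nonzero. Consequently $\tilde F(0, a, \epsilon) = V'(0)/V(0) = O(1/a) = O(\epsilon^{-1})$, which cannot balance the $O(\epsilon)$ right-hand side coming from the ring. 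If instead $V \equiv 0$, then $U$ solves a Dirichlet problem on the shrinking interval $(-\epsilon L, \epsilon L)$ at $\Lambda = 0$, and the first Dirichlet eigenvalue there is of order $\epsilon^{-2}$, forcing $U \equiv 0$ for small $\epsilon$; combined with the jump condition, this would make the candidate eigenfunction trivial. The main obstacle I anticipate is establishing the uniform $O(\epsilon)$ bounds on $\tilde V_\infty - V_\infty$ and on the short-interval reduction to $V(0)$, valid uniformly in $\Lambda$ on the compact range $[-C_0, 1]$; once these perturbative estimates are in place, the IFT together with the blow-up of $F$ at $\Lambda = 0$ delivers the exact count.
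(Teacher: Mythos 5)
Your proposal is correct and follows essentially the same route as the paper: both reduce the rescaled problem (\ref{L-plus-nls-scale}) to the scalar matching condition $F(\Lambda)=\mathcal{O}(\epsilon)$ at the junction (your integrated Robin relation $-2\epsilon L(\Lambda+2p^2+3p)$ is exactly the paper's expansion $-2L(\Lambda-1+(p+1)(2p+1))\epsilon$ obtained from the even solution $U_1$), and both then invoke Lemma \ref{lemma-Sturm} and the implicit function theorem at the simple zero $\Lambda_0$. Your treatment of the exclusion of a zero eigenvalue (the blow-up of $F$ as $\Lambda\to 0^-$, the explicit shifted solution $-e^{a}2^{-1/p}\phi_0'(\cdot+a)$, and the $V\equiv 0$ case) is somewhat more explicit than the paper's, but fills the same logical role.
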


\begin{proof}
We prove that the negative eigenvalue of the scalar spectral problem
(\ref{L-plus-scalar}) on the line persists in the spectral problem (\ref{L-plus-nls-scale}),
whereas the zero eigenvalue of (\ref{L-plus-scalar})
disappears for any $\epsilon > 0$ sufficiently small. Our proof relies on several claims.\\

{\bf Claim 1:} For every $\Lambda \in \mathbb{R}$, there exists a unique
even solution of the first equation in system (\ref{L-plus-nls-scale})
normalized by $U(0) = 1$. Denote it by $U_1(z;\Lambda)$. The solution $U_1$ is
$C^{\infty}$ both in $z$ and $\Lambda$.\\

{\bf Proof of Claim 1:} Because (\ref{L-plus-nls-scale}) is linear and $|\psi(z)|^{2p}$ is even in $z$,
the boundary condition $U(\epsilon L) = U(-\epsilon L)$ can be satisfied if and only if $U$ is even in $z$.
The even solution is uniquely determined by the initial value $U(0) = 1$ and $U'(0) = 0$.
Since $\psi(z) > 0$ for all $z \in [-\epsilon L,\epsilon L]$, as it follows
from the proof of Theorem \ref{theorem-primary}, the linear equation has smooth coefficients,
so that the unique even solution $U$ is smooth in $z$, that is, $U \in C^{\infty}(-\epsilon L, \epsilon L)$.
In particular, from $\psi(z) = 1 + \mathcal{O}(z^2)$ as $z \to 0$,
we can find the quadratic approximation for the solution:
\begin{equation}
\label{root-expansion}
U_1(z;\Lambda) = 1 - \frac{1}{2} (\Lambda - 1 + (p+1)(2p+1)) z^2 + \mathcal{O}(z^4) \quad \mbox{\rm as} \quad z \to 0.
\end{equation}
The solution $U_1$ is also smooth in $\Lambda$ because the linear equation is smooth in $\Lambda$. \\

{\bf Claim 2:} For every $\Lambda \in (-\infty,1)$, there exists a unique solution of the second equation
in system (\ref{L-plus-nls-scale}) that decays to zero as $z \to +\infty$ and satisfies
the boundary condition (\ref{bc-plus-scalar}). Denote it by $V_1(z;\Lambda)$. The solution is
$C^{\infty}$ both in $z$ and $\Lambda$.\\

{\bf Proof of Claim 2:} The existence of the unique smooth solution $V_1(z;\Lambda)$
for all $z \in \mathbb{R}_+$ that decays to zero
as $z \to +\infty$ and satisfies
(\ref{bc-plus-scalar}) follows by Lemma \ref{lemma-Sturm} since $\phi(z) = \phi_0(z+a)$
and $z_0$ in Lemma \ref{lemma-Sturm} is arbitrary. \\

{\bf Claim 3:}  For every $\Lambda \in (-\infty,1)$, there exists a unique square-integrable solution of the
spectral problem (\ref{L-plus-nls-scale}) in the form
\begin{equation}
\label{root-construction}
U = \frac{V_1(0;\Lambda)}{U_1(\epsilon L;\Lambda)} U_1(z;\Lambda), \quad V = V_1(z;\Lambda),
\end{equation}
if and only if the value of $\Lambda$ satisfies the algebraic equation
\begin{equation}
\label{root-Lambda}
\frac{V_1'(0;\Lambda)}{V_1(0;\Lambda)} = \frac{2 U_1'(\epsilon L;\Lambda)}{U_1(\epsilon L;\Lambda)}.
\end{equation}

{\bf Proof of Claim 3:} The solution of the first three equations of system (\ref{L-plus-nls-scale}) in the form (\ref{root-construction})
follows from Claims 1 and 2. It follows from expansion (\ref{root-expansion}) that
$U_1(\epsilon L; \Lambda) = 1 + \mathcal{O}(\epsilon^2) \neq 0$ as $\epsilon \to 0$,
hence the solution (\ref{root-construction}) is bounded and exponentially decaying as $z \to \infty$,
that is, it is square integrable. Finally, the algebraic equation (\ref{root-Lambda})
is obtained from the last equation in system (\ref{L-plus-nls-scale}).\\

We shall now use the construction in Claim 3 and prove that the
spectral problem (\ref{L-plus-nls-scale}) has a unique negative eigenvalue
and no zero eigenvalues. It follows from expansion (\ref{root-expansion}) and
the algebraic equation (\ref{root-Lambda}) that
\begin{equation}
\label{root-Lambda-expansion-1}
\frac{V_1'(0;\Lambda)}{V_1(0;\Lambda)} =  - 2L (\Lambda - 1 + (p+1)(2p+1)) \epsilon + \mathcal{O}(\epsilon^3) \quad
\mbox{\rm as} \quad \epsilon \to 0.
\end{equation}
Therefore, $V_1'(0;\Lambda) \to 0$ as $\epsilon \to 0$. Also recall that $\phi(z) = \phi_0(z+a)$
and $a = 2L \epsilon + \mathcal{O}(\epsilon^3)$ from Theorem \ref{theorem-primary} so that $a \to 0$
as $\epsilon \to 0$.
In the limit $\epsilon \to 0$, the condition $V_1'(0;\Lambda) = 0$ is satisfied
for the only value of $\Lambda$ on $(-\infty,\Lambda_1)$, where $\Lambda_1 \in (0,1)$ is defined above,
and this value coincides with the negative eigenvalue $\Lambda_0 < 0$ of the reduced spectral problem (\ref{L-plus-scalar})
on the line (in which case, the eigenfunction of (\ref{L-plus-scalar}) denoted by $V_0$ is even in $z$
and strictly positive for all $z \in \mathbb{R}$). Hence, no zero eigenvalue exists in the spectral
problem (\ref{L-plus-nls-scale}).

To prove persistence of the negative eigenvalue, we note again that
$\phi(z) = \phi_0(z+a)$, therefore, there exists a positive constant $C(a)$ such that
\begin{equation}
\label{representation-V-1}
V_1(z;\Lambda_0) = C(a) V_0(z+a),
\end{equation}
where $V_0$ is the eigenfunction of (\ref{L-plus-scalar}) for $\Lambda = \Lambda_0$.
The constant $C(a)$ is determined from the normalization condition (\ref{bc-plus-scalar}) for $V_1(z;\Lambda_0)$.
Since $V_0(z) > 0$ for every $z \in \mathbb{R}$, we note that
for any $a_0 > 0$ there is $C_0 > 0$ such that $C(a) \geq C_0$ for all $a \in [-a_0,a_0]$.

Now, using smoothness of the unique solution $V_1$ in Claim 2 in $\Lambda$
and the representation (\ref{representation-V-1}), we obtain
\begin{equation}
\label{root-Lambda-expansion-2}
\frac{V_1'(0;\Lambda)}{V_1(0;\Lambda)} = \frac{V_0'(a)}{V_0(a)} + (\Lambda - \Lambda_0) \frac{\partial}{\partial \Lambda}
\frac{V_1'(0;\Lambda_0)}{V_1(0;\Lambda_0)}
+ \mathcal{O}((\Lambda - \Lambda_0)^2) \quad \mbox{\rm as} \quad \Lambda \to \Lambda_0,
\end{equation}
where $V_0(a) > 0$ and $V_0'(a) = \mathcal{O}(a) = \mathcal{O}(\epsilon)$ as $\epsilon \to 0$.
By Lemma \ref{lemma-Sturm}, we have
$$
\lim_{\epsilon \to 0} \frac{\partial}{\partial \Lambda}
\frac{V_1'(0;\Lambda_0)}{V_1(0;\Lambda_0)} = F'(\Lambda_0) \neq 0.
$$
By the implicit function theorem, for $\epsilon > 0$
sufficiently small, there exists a unique root of
the algebraic equation (\ref{root-Lambda-expansion-1})
in $\Lambda$ such that $\Lambda = \Lambda_0 + \mathcal{O}(\epsilon)$.
\end{proof}

We can now proceed with the proof of Theorem \ref{theorem-stability-primary}.

\begin{proof1}{\em of Theorem \ref{theorem-stability-primary}.}
We apply the standard orbital stability theory from \cite{GSS1}.
The eigenvalue count in Lemmas \ref{lemma-L-minus} and \ref{lemma-L-plus}
gives exactly one negative eigenvalue of operator $L_+$
and a simple zero eigenvalue of operator $L_-$. The gauge invariance
of the NLS equation (\ref{eq0}) is used to construct a constrained $L^2$-space,
where the negative eigenvalue of $L_+$ becomes a positive eigenvalue if
$\partial_{\omega} (\| u \|_{L^2(-L,L)}^2 + \| v \|_{L^2(L,\infty)}^2) < 0$
and remains a negative eigenvalue if $\partial_{\omega} (\| u \|_{L^2(-L,L)}^2 + \| v \|_{L^2(L,\infty)}^2) > 0$,
where $(u,v)$ is the stationary solution along the primary branch.
The latter condition is sometimes referred to as the {\em slope condition}.

Therefore, we compute the slope condition for the primary branch in Theorem \ref{theorem-primary}:
\begin{eqnarray*}
\| u \|_{L^2(-L,L)}^2 = 2 L \epsilon^{\frac{2}{p}}(1 + \mathcal{O}(\epsilon^2))
\end{eqnarray*}
and
\begin{eqnarray*}
\| v \|_{L^2(L,\infty)}^2 = \epsilon^{\frac{2}{p}-1} \| \phi_0 \|^2_{L^2(a,\infty)},
\end{eqnarray*}
where $a = 2L \epsilon + \mathcal{O}(\epsilon^3)$ and $\phi_0$ is $\epsilon$-independent.

If $p \in (0,2)$, then $\partial_{\epsilon} (\| u \|_{L^2(-L,L)}^2 + \| v \|_{L^2(L,\infty)}^2) > 0$
for $\epsilon > 0$ sufficiently small, which implies that
$\partial_{\omega} (\| u \|_{L^2(-L,L)}^2 + \| v \|_{L^2(L,\infty)}^2) < 0$. This computation
yields the assertion on the orbital stability of the primary branch for $p \in (0,2)$.

If $p \in (2,\infty)$, then $\partial_{\epsilon} (\| u \|_{L^2(-L,L)}^2 + \| v \|_{L^2(L,\infty)}^2) < 0$
for $\epsilon > 0$ sufficiently small, which implies that
$\partial_{\omega} (\| u \|_{L^2(-L,L)}^2 + \| v \|_{L^2(L,\infty)}^2) > 0$. This
computation yields the assertion on the orbital instability of the primary branch if $p \in (2,\infty)$.
\end{proof1}

\begin{remark}
If $p = 2$, then we have the critical case with $\partial_{\epsilon} \| u \|_{L^2(-L,L)}^2 = 2L + \mathcal{O}(\epsilon^2)$
and $\partial_{\epsilon} \| v \|_{L^2(L,\infty)}^2 = -2L + \mathcal{O}(\epsilon^2)$.
Therefore, the test for orbital stability is inconclusive without computations
of the $\mathcal{O}(\epsilon^2)$ corrections in these expansions.
\end{remark}

\section{Stability of the higher branches}

Here we consider the linearized stability of the higher branches,
the existence of which is given by Theorem \ref{theorem-higher-order}
for $\omega = -\epsilon^2$ with $\epsilon > 0$ sufficiently small.

We linearize the NLS equation (\ref{eq0}) around the standing wave $e^{i\omega t} \Phi$, where
$\Phi=(u,v)$ is a suitable solution of the stationary NLS equation (\ref{eq}).
We write $\Psi = e^{i \omega t}(\Phi + U + i W)$, where
real-valued functions $U$ and $W$ are defined on the tadpole graph subject to
the same Kirchhoff boundary conditions. This yields the linearized evolution problem in the form
$$
\frac{d}{dt} \left[ \begin{array}{c} U \\ W \end{array} \right] = \left[ \begin{array}{cc} 0 & 1 \\ -1 & 0 \end{array} \right] \;
\left[ \begin{array}{cc} L_+ & 0 \\ 0 & L_- \end{array} \right] \left[ \begin{array}{c} U \\ W \end{array} \right]
= \left[  \begin{array}{cc}   0 & L_{-} \\   -L_{+}  & 0  \end{array}\right] \left[ \begin{array}{c} U \\ W \end{array} \right],
$$
where the operators $L_{\pm}$ are the same linear self-adjoint operators as before.
The spectral stability problem can be written as the coupled vector system
\begin{equation}
\label{spect-prob}
L_+ U = -\lambda W, \quad L_- W = \lambda U, \quad U, W \in \mathcal{D}(\Delta).
\end{equation}

The stationary solution $\Phi=(u,v)$ is said to be spectrally unstable if
there exist an isolated eigenvalue $\lambda$ with  ${\rm Re}(\lambda) > 0$
for the spectral problem (\ref{spect-prob}), in which case the eigenvalue is
referred to as unstable. The stationary solution $\Phi=(u,v)$ is said to be
weakly spectrally stable if the spectrum of the spectral problem (\ref{spect-prob}) is contained within the imaginary axis.
We note that isolated eigenvalues of the spectral problem (\ref{spect-prob}) are symmetric about the real
and imaginary axes.

The spectral problem (\ref{spect-prob}) is not self-adjoint because of the symplectic
matrix relating components $U$ and $W$. This is a well known source of difficulty but
important information on the unstable eigenvalues in the spectral problem (\ref{spect-prob})
can be derived from the spectral properties of operators
$L_-$ and $L_+$ (see, e.g., Chapter 4 in \cite{Pel}).
To proceed with this analysis, we count the number of negative eigenvalues of the operators $L_-$ and $L_+$
given by the spectral problems (\ref{L-minus-nls}) and (\ref{L-plus-nls}), where $(u,v)$ is the solution of
the boundary-value problem (\ref{stat-nls}) along the higher branches. After
counting of the number of negative eigenvalues, we apply the spectral instability theory from \cite{Gr}
to study eigenvalues of the spectral stability problem (\ref{spect-prob}).
The main result of this section is formulated in the following theorem.

\begin{theorem}
\label{theorem-stability-higher-order}
For $\omega = -\epsilon^2$ with $\epsilon > 0$ sufficiently small,
all higher branches of Theorem \ref{theorem-higher-order}
are spectrally unstable with at least one pair (two pairs) of real eigenvalues $\lambda$
in the spectral stability problem (\ref{spect-prob}) for $p \in (0,2]$ (respectively, $p \in (2,\infty)$).
\end{theorem}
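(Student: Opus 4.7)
The plan is to apply the Hamiltonian--Krein index theorem of Grillakis \cite{Gr} (see also Chapter 4 in \cite{Pel}), which bounds the number of real positive eigenvalues of the spectral stability problem (\ref{spect-prob}) below by $n(L_+) - n(L_-) - n_0$, where $n_0 \in \{0,1\}$ is supplied by the slope condition $\sigma := \partial_\omega(\|u\|_{L^2(-L,L)}^2 + \|v\|_{L^2(L,\infty)}^2)$. The two ingredients I need to assemble are therefore (i) a careful count of negative and zero eigenvalues of $L_\pm$ for the higher-branch solutions of Theorem \ref{theorem-higher-order}, and (ii) the sign of $\sigma$ along each branch, analogously to Section 5.

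For $L_-$, by construction $(u,v)$ lies in $\ker L_-$, so $z(L_-)\geq 1$. On the ring component the reduced operator $M_\epsilon = -\partial_x^2 + \epsilon^2 - (p+1)|u_\epsilon|^{2p}$ has $u_\epsilon$ as a zero mode; since $u_\epsilon$ is odd, $(2L)$-periodic and possesses $2n-1$ interior zeros on $(-L,L)$, Sturm's nodal theorem yields $2n-1$ negative Dirichlet eigenvalues on the ring. On the half-line tail, after the rescaling $z = \epsilon(x-L)$, $\lambda = \epsilon^2 \Lambda$, the limit problem is $-V'' + V - (p+1)\phi_0^{2p} V = \Lambda V$, whose ground state is $\phi_0 > 0$ with no negative spectrum. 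To couple the two, I would eliminate the tail by an Evans-function reduction in the spirit of Lemma \ref{lemma-Sturm}, replacing the Kirchhoff condition at the vertex by a $\Lambda$-dependent Robin condition at $x=L$, and then invoke perturbation around $\epsilon=0$ and the simplicity of the zero mode $(u,v)$ to conclude $n(L_-) = 2n-1$ and $z(L_-)=1$. For $L_+$ the recipe is identical but with the larger potential $(2p+1)(p+1)|u|^{2p}$: $u'_\epsilon$ is a ring zero mode, it is even and possesses $2n$ zeros in $(-L,L)$, giving $2n$ negative eigenvalues, and the rescaled tail problem is exactly that of Lemma \ref{lemma-Sturm}, contributing one additional simple negative eigenvalue near $\Lambda_0$. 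The expected outcome is $n(L_+) = 2n+1$ and $z(L_+)=0$, hence $n(L_+) - n(L_-) = 2$.

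The slope condition is computed exactly as in the proof of Theorem \ref{theorem-stability-primary}. The ring contribution $\|u_\epsilon\|_{L^2(-L,L)}^2$ is $C^1$ in $\omega=-\epsilon^2$ and of order $1$, while the tail contribution, using the scaling (\ref{scaling-v}), equals $\epsilon^{2/p-1}\|\phi_0\|_{L^2(a,\infty)}^2$ with $a = \mathcal{O}(\epsilon^{1/p})$. For $p \in (0,2)$ the tail dominates and $\partial_\epsilon\sigma > 0$, so $n_0 = 1$ in the Krein index, yielding $k_r + 2k_c + 2k_{i-} \geq 1$ and therefore at least one pair of real eigenvalues. For $p \in (2,\infty)$ the ring dominates the slope with the opposite sign, so $n_0 = 0$ and the index gives $k_r + 2k_c + 2k_{i-} \geq 2$; since the presence of an odd-number strictly real eigenvalue pair is what forces $k_r \geq 1$, the reality of the extra pair follows from the Krein-signature analysis inherited from the $\epsilon=0$ limit, where both negative directions of $L_+$ are uncoupled to negative directions of $L_-$.

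The main obstacle is the eigenvalue count for $L_\pm$ on the coupled graph. The shift $u_\epsilon(x+b)$ with $b = \mathcal{O}(\epsilon^{1/p})$ breaks the reflection symmetry that on the decoupled ring would allow a clean decomposition into even and odd sectors, and the Kirchhoff condition forbids a trivial separation of ring and tail. To close the count, I would track, as $\epsilon$ grows from $0$, the eigenvalues of the $\Lambda$-dependent ring problem obtained by Evans-function elimination of the tail, verifying that (a) the ring eigenvalues predicted by Sturm theory persist with the predicted signs, (b) the negative tail eigenvalue of $L_+$ near $\Lambda_0$ survives coupling (as in Lemma \ref{lemma-L-plus}), and (c) no accidental zero eigenvalue is produced by the coupling, which is the most delicate point and will require a Fredholm-type non-degeneracy argument based on the non-vanishing of $u'_\epsilon(L)$ and of the Wronskian $W_\epsilon$ in (\ref{Wronskian}).
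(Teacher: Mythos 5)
Your overall strategy (count $n(L_\pm)$ and $z(L_\pm)$ for the higher branch, then combine the slope condition with Grillakis' index theory) coincides with the paper's, and your slope computation and final appeal to \cite{Gr} are essentially the ones used there. The genuine gap is in the eigenvalue count itself. You obtain the ring contribution ($2n-1$ for $L_-$, $2n$ for $L_+$) by adding a Sturm--Dirichlet count on the ring to a separate count on the half-line and then ``invoking perturbation around $\epsilon=0$''. This additivity is not a theorem: the Kirchhoff vertex couples the two edges, and Dirichlet--Neumann bracketing yields only inequalities, not the exact count. More importantly, the reference problem at $\epsilon=0$ --- the decoupled solution $(u^{\pm}_{n,0},0)$ at $\omega=0$ --- is precisely where the count is hardest, because the edge of the continuous spectrum sits at $\lambda=0$ and the eigenfunctions with nonzero tail satisfy a $\lambda$-dependent Robin condition $U'(L)-U'(-L)=-U(L)\sqrt{|\omega+\lambda|}$ that cannot be resolved by Sturm theory on the ring alone. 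The paper shows that this sector contributes exactly $n$ negative eigenvalues for each of $L_\pm$ by a global homotopy in $\omega$ from $\omega_n$ (where the potential vanishes and one checks explicitly that no such eigenvalues exist) down to $\omega=0$, tracking the transversal crossings of the periodic ring eigenvalues $\mu(\omega)$ with the anti-diagonal $\lambda=-\omega$ and the bifurcation of a new isolated eigenvalue at each crossing (Proposition \ref{proposition-second-group} and Table \ref{TableEigenvalue}). Nothing in your proposal replaces this step; ``tracking eigenvalues as $\epsilon$ grows from $0$'' presupposes the $\epsilon=0$ count that is exactly what is missing. Likewise, for $L_+$ the Floquet--Sturm count based on the $2n$ zeros of $u'_\epsilon$ only locates the zero eigenvalue between positions $2n$ and $2n+1$ of the periodic spectrum; deciding which side requires the splitting argument at $\omega=\omega_n$ carried out in Proposition \ref{proposition-stability}. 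Only after these counts are in place does the perturbation to the coupled higher branch (the content of Lemma \ref{lemma-L-minus-plus}) become the routine persistence argument you describe.

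Two smaller points. The theorem covers $p=2$, for which the tail norm $\|v\|^2_{L^2(L,\infty)}=\epsilon^{\frac{2}{p}-1}\|\phi_0\|^2_{L^2(0,\infty)}$ is $\epsilon$-independent and your ``tail dominates'' argument gives nothing; the paper settles this case by invoking the strict slope condition for the ring component alone (Proposition \ref{prop-slope}). Also, the inequality $k_r+2k_c+2k_i^-\geq 1$ you write does not by itself force $k_r\geq 1$; one needs either the parity of the exact index identity or Grillakis' lower bound $k_r\geq n(L_+)-n(L_-)$ on the difference of the constrained negative indices, which is also what delivers two real pairs when the difference equals two for $p>2$. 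Your closing appeal to Krein signatures ``inherited from the $\epsilon=0$ limit'' is unnecessary once the correct statement from \cite{Gr} is cited, and as written it is too vague to substitute for it.
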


To develop the count of negative eigenvalues associated to the higher branches $(u,v)$
(see Lemma \ref{lemma-L-minus-plus} below), we need to obtain the analogous count of negative eigenvalues
associated to the stationary solutions $(u^{\pm}_{n,\omega},0)$ described in Proposition
\ref{proposition-countable}. Recall that the higher branches $(u,v)$ of Theorem \ref{theorem-higher-order} bifurcate
from the standing wave solutions $(u_{n,\omega}^{\pm},0)$ of Proposition \ref{proposition-countable}.

In the following Propositions \ref{proposition-stability} and \ref{proposition-second-group},
we also count the number of negative eigenvalues in the operators $L_-$ and $L_+$
given by the spectral problems (\ref{L-minus-nls}) and (\ref{L-plus-nls}), where $(u,v) = (u^{\pm}_{n,\omega},0)$.
Unfortunately, the count does not give a conclusive stability result for these stationary states
(see Remark \ref{remark-branches} below), therefore, we only
formulate the conjecture on their instability for small negative $\omega$.
Motivations for posing this conjecture are explained in the end of this section.

\begin{conjecture}
\label{conjecture-stability-higher-order}
For $\omega = -\epsilon^2$ with $\epsilon > 0$ sufficiently small,
the branch $(u,v) = (u_{n,\omega}^{\pm},0)$ is spectrally unstable for any $n \in \mathbb{N}$
with at least $n$ quartets of complex eigenvalues $\lambda$
in the spectral stability problem (\ref{spect-prob}).
\end{conjecture}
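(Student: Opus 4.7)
The plan is to exploit the decoupling of the linearization at $v\equiv 0$, perform a Sturm--Liouville oscillation count for the ring operators thereby produced, and then invoke the Krein bifurcation principle. Because $u^{\pm}_{n,\omega}(\pm L)=0$ and $v\equiv 0$, the subspace $\{(U,V)\in\mathcal{D}(\Delta):U(\pm L)=0,\ V\equiv 0\}$ is invariant under $L_\pm$, and on it $L_\pm$ reduce to the ring operators $M_\pm^{\rm ring}:=-\partial_x^2+\epsilon^2-k_\pm(p+1)|u^{\pm}_{n,\omega}|^{2p}$, with $k_-=1$ and $k_+=2p+1$, acting on $\{U\in H^2(-L,L):U(\pm L)=0,\ U'(L)=U'(-L)\}$. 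Its orthogonal complement sees only the half-line part $-\partial_z^2+\epsilon^2$, whose spectrum is purely continuous and equals $[\epsilon^2,\infty)$, so that the essential spectrum of $J\mathcal{L}$ fills $\{i\mu:|\mu|\geq\epsilon^2\}$.

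The next step is an oscillation count for $M_\pm^{\rm ring}$. I would continue smoothly in $\omega$ along the branch of Proposition~\ref{proposition-countable}, starting from the bifurcation point $\omega=\omega_n$ where $u_n=\sin(n\pi x/L)$ makes $M_\pm^{\rm ring}$ explicit and the spectra are computable by Fourier series. Sturm nodal-theorem arguments, combined with the monotonicity of the period-to-energy map (Lemma~\ref{lemma-period-2}), should rule out accidental zero crossings as $\omega$ decreases to $-\epsilon^2$ and yield $n(M_-^{\rm ring})=n-1$ with a simple zero eigenfunction $u^{\pm}_{n,\omega}$, and $n(M_+^{\rm ring})=n$ with a simple zero eigenfunction $\partial_x u^{\pm}_{n,\omega}$. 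For $\epsilon$ small, these negative eigenvalues lie well below $\epsilon^2$, producing isolated negative eigenvalues of $L_\pm$ on the full graph whose eigenfunctions are spatially localized on the ring.

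Feeding this count into the stability problem (\ref{spect-prob}) via $L_-L_+U=-\lambda^2 U$, the joint presence of $n$ negative directions of $L_+$ and $n-1$ negative directions of $L_-$ on the ring-localized subspace is expected, through a Grillakis--Jones-type index theorem, to produce at least $n$ simple purely imaginary pairs $\lambda=\pm i\mu$ of $J\mathcal{L}$ whose Krein signature $\mathrm{sign}\langle\mathcal{L}\mathbf{u},\mathbf{u}\rangle$ is negative. Because these imaginary pairs sit at $|\mu|\geq\epsilon^2$ for small $\epsilon$, they are embedded in the essential spectrum. The heuristic Krein bifurcation principle then predicts that each such embedded imaginary eigenvalue of negative Krein signature splits off the axis into a complex quartet $\{\pm\alpha\pm i\beta\}$ under any generic coupling to the continuum; here the coupling is provided by the Kirchhoff vertex, which mixes the decoupled ring mode with the half-line radiation continuum when the embedded mode is regarded as an eigenfunction of the full tadpole operator $L_\pm$ rather than of the reduced ring operator $M_\pm^{\rm ring}$.

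The main obstacle, and the reason the statement is left as a conjecture rather than a theorem, is that this splitting is governed by a Fermi-golden-rule coupling coefficient, namely an overlap integral between the embedded ring eigenmode and the Jost solutions of $-\partial_z^2+\epsilon^2-\mu^2$ on the half-line, matched through the Kirchhoff conditions. Proving this coefficient to be nonzero for every $n\in\mathbb{N}$ and every small $\epsilon>0$ requires ruling out hidden cancellations between the oscillatory ring mode and the matching term at the vertex, and does not appear accessible by elementary means; in principle a resonance could dissolve back into the continuum rather than separate as a complex quartet. Consequently the prediction of $n$ complex quartets, which would follow at once from nonvanishing of the Fermi coefficient, is corroborated only numerically in Section~7.
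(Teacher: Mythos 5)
This statement is a conjecture, and the paper itself only offers a heuristic motivation (end of Section 6) plus numerics, so neither you nor the paper proves it; the question is whether your heuristic matches the structure of the problem, and it does not. Your decomposition starts correctly: eigenfunctions of $L_\pm$ with $V\equiv 0$ must satisfy $U(\pm L)=0$, $U'(L)=U'(-L)$, i.e.\ they are the odd $(2L)$-periodic ring modes, and Sturm theory gives $n-1$ negative eigenvalues for $L_-$ and $n$ for $L_+$ in that sector (this is Proposition \ref{proposition-stability}). But your claim that the orthogonal complement ``sees only the half-line part $-\partial_z^2+\epsilon^2$'' and hence contributes only continuous spectrum is false. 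The complementary eigenfunctions have $U$ even with $U(\pm L)\neq 0$ and a nontrivial $V$ matched through the Kirchhoff conditions, leading to the $\lambda$-nonlinear boundary condition $U'(L)-U'(-L)=-U(L)\sqrt{|\omega+\lambda|}$; this sector carries exactly $n$ further negative eigenvalues for each of $L_\pm$ when $\omega=-\epsilon^2$ is small (Proposition \ref{proposition-second-group}). The correct totals are $2n-1$ for $L_-$ and $2n$ for $L_+$, not $n-1$ and $n$.

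This miscount is not cosmetic: it places the instability mechanism in the wrong sector. The odd, ring-localized subspace you single out is genuinely invariant and completely decoupled from the half-line, so its reduced stability problem (\ref{L-vector-1}) is posed on a bounded interval and has purely discrete spectrum --- there is no radiation continuum there, hence no embedded eigenvalues and no Fermi-golden-rule splitting for those modes. The $n$ conjectured quartets come instead from the \emph{even} sector (\ref{L-vector-2}): its $n$ negative directions of $L_\pm$ force, by the index count, either unstable eigenvalues outright or $n$ imaginary pairs of negative Krein signature, and the latter are necessarily embedded in the continuous spectrum $\pm i[\epsilon^2,\infty)$ for small $\epsilon$, whence they generically bifurcate into complex quartets. (Note also that the Grillakis--Jones count only \emph{bounds} the number of unstable eigenvalues plus negative-signature imaginary pairs by the negative index; it does not produce imaginary pairs, so the logic must run ``either already unstable, or embedded negative-signature pairs which are structurally unstable.'') Your closing point --- that the nonvanishing of the coupling coefficient to the continuum is what prevents this from being a theorem --- is the right obstruction, but it has to be applied to the even modes of Proposition \ref{proposition-second-group}, which your decomposition discarded.
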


Associated with the solution $(u,v) = (u_{n,\omega}^{\pm},0)$, we study negative and zero eigenvalues
of operators $L_-$ and $L_+$ given by the spectral problems
\begin{equation}
\label{L-scalar}
\left\{ \begin{array}{l} - U''(x) - \omega U - (p+1) |u^{\pm}_{n,\omega}|^{2p} U = \lambda U, \quad x \in (-L,L), \\
- V''(x) - \omega V = \lambda V, \quad \quad \quad \quad \quad \quad \quad \quad \quad \; x\in (L,\infty), \\
U(L) = U(-L) = V(L), \\
U'(L) - U'(-L) = V'(L),
\end{array} \right.
\end{equation}
and
\begin{equation}
\label{L-scalar-2}
\left\{ \begin{array}{l} - U''(x) - \omega U - (2p+1)(p+1) |u^{\pm}_{n,\omega}|^{2p} U = \lambda U, \quad x \in (-L,L), \\
- V''(x) - \omega V = \lambda V, \quad \quad \quad \quad \quad \quad \quad \quad \quad \quad \quad \quad \quad x\in (L,\infty),  \\
U(L) = U(-L) = V(L), \\
U'(L) - U'(-L) = V'(L).
\end{array} \right.
\end{equation}

Analysis of eigenvalues of the spectral problems (\ref{L-scalar}) and (\ref{L-scalar-2}) relies
on the analysis of Schr\"{o}dinger operators with $2L$-periodic coefficients:
\begin{equation}
\label{Sturm-operators}
M_- := -\partial_x^2 -\omega - (p+1) |u^{\pm}_{n,\omega}|^{2p} : \quad H^2_{\rm per}(-L,L) \to L^2_{\rm per}(-L,L).
\end{equation}
and
\begin{equation}
\label{Sturm-operators-2}
M_+ := -\partial_x^2 -\omega - (2p+1)(p+1) |u^{\pm}_{n,\omega}|^{2p} : \quad H^2_{\rm per}(-L,L) \to L^2_{\rm per}(-L,L).
\end{equation}

For the operator $M_-$, there exist two fundamental solutions of the second-order differential equation $M_- \psi = 0$.
One solution in the form $\psi(x) = u_{n,\omega}^{\pm}(x)$ is $2L$-periodic in $x$ and the other solution
is available in the explicit form
\begin{equation}
\label{psi-minus}
\psi(x) = u_{n,\omega}^{\pm \prime}(x) + p u_{n,\omega}^{\pm}(x) \int_0^x | u_{n,\omega}^{\pm}(y)|^{2p} dy.
\end{equation}
It is then obvious that the second solution (\ref{psi-minus}) is not $2L$-periodic for every $\omega \in (-\infty,\omega_n)$.

For the operator $M_+$ (which coincides with the operator $M_{\epsilon}$ in Section 4 if $\omega = -\epsilon^2$),
there exist again two fundamental solutions of the second-order differential equation $M_+ \psi = 0$.
One solution in the form $\psi(x) = u_{n,\omega}^{\pm \prime}(x)$ is $2L$-periodic in $x$ and the other solution
is available in the implicit form
\begin{equation}
\label{psi-plus}
\psi(x) = \partial_E U_{n,\omega}^{\pm}(x;E_{\omega}),
\end{equation}
where  $U_{n,\omega}^{\pm}(x;E)$ is the continuation of $u_{n,\omega}^{\pm}$ as the odd solution of the boundary--value
problem (\ref{stat-nls-scalar}) with respect to
the energy invariant (\ref{energy-invariant}) for fixed $\omega \in (-\infty,\omega_n)$
and $E_{\omega}$ is defined by the root of $T(E_{\omega}) = L$ with $T(E)$ being the half-period
of the odd solution $U_{n,\omega}^{\pm}(x;E)$.
Recall from Lemmas \ref{lemma-period} and \ref{lemma-period-2}
that the period-to-energy map is a $C^1$ diffeomorphism with
$T'(E) < 0$ for every fixed $\omega \in (-\infty,\omega_n)$. It follows from the boundary conditions (\ref{bc-solutions})
that the second linear independent solution (\ref{psi-plus})
is not $2L$-periodic for every $\omega \in (-\infty,\omega_n)$.

Equipped with these preliminary facts, we analyze the spectral problems (\ref{L-scalar}) and (\ref{L-scalar-2}).
Eigenvalues of these spectral problems can be divided into two groups.
The first group is characterized by the reduction $V(x) = 0$ for all $x \geq L$
and the other group has eigenfunctions with nonzero $V$. The following two propositions
give the relevant counts of negative and zero eigenvalues in these two groups.

\begin{proposition}
For any $\omega < \omega_n := \frac{\pi^2 n^2}{L^2}$, there exist
exactly $(n-1)$ negative simple eigenvalues and a zero simple eigenvalue
in the spectral problem (\ref{L-scalar}) with $V \equiv 0$ and
exactly $n$ negative simple eigenvalues and no zero eigenvalue
in the spectral problem (\ref{L-scalar-2}) with $V \equiv 0$.
The corresponding eigenfunctions $U \in H^2(-L,L)$ are odd and $(2L)$-periodic, that is,
$U \in H^2_{\rm per, odd}(-L,L)$.
\label{proposition-stability}
\end{proposition}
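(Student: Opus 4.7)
The plan is to reduce both spectral problems, restricted to $V \equiv 0$, to scalar Dirichlet problems on $(0,L)$ and then use Sturm oscillation together with the known kernel elements. First I will show that every eigenfunction is necessarily odd and $(2L)$-periodic. Indeed, if $U$ satisfies the reduced boundary conditions $U(\pm L)=0$ and $U'(L)=U'(-L)$, the even part $U_e = \tfrac{1}{2}(U(x)+U(-x))$ satisfies $U_e(L)=0$ from $U(L)+U(-L)=0$, while the condition $U'(L)-U'(-L)=0$ forces $U_e'(L)=0$; Cauchy uniqueness for the ODE then gives $U_e\equiv 0$. Since $U$ is odd with $U(\pm L)=0$, its $(2L)$-periodic extension belongs to $H^2_{\rm per,odd}(-L,L)$, and the spectral problems for $V\equiv 0$ become Dirichlet problems for $M_-$, respectively $M_+$, on $(0,L)$.

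For $M_-$, the function $u_{n,\omega}^{\pm}$ itself satisfies $M_- u_{n,\omega}^{\pm}=0$ from the stationary equation (\ref{stat-nls-scalar}), and by the half-period $L/n$ it has exactly $n-1$ interior zeros in $(0,L)$ (at $kL/n$, $k=1,\dots,n-1$). By Sturm oscillation it is the $n$-th Dirichlet eigenfunction of $M_-$ on $(0,L)$, so $0$ is a simple eigenvalue and there are exactly $n-1$ simple negative eigenvalues below it. For $M_+$, the two linearly independent solutions of $M_+\psi=0$ are $u_{n,\omega}^{\pm\prime}$ (even) and $\partial_E U_{n,\omega}^{\pm}|_{E=E_\omega}$ (odd), with boundary values $u_{n,\omega}^{\pm\prime}(0)\neq 0$, $\partial_E U_{n,\omega}^{\pm}(0;E_\omega)=0$ and $\partial_E U_{n,\omega}^{\pm}(\pm L;E_\omega)\neq 0$ by (\ref{bc-solutions}). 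Hence no nonzero element of $\ker M_+$ satisfies Dirichlet conditions at both endpoints of $(0,L)$, so $0$ is not an eigenvalue.

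To count the negative eigenvalues of $M_+$ on $(0,L)$, I will use the Sturm--Pr\"ufer characterization: this number equals the number of zeros in $(0,L)$ of the solution $\varphi$ of $M_+\varphi=0$ with $\varphi(0)=0$, which we may take to be $\partial_E U_{n,\omega}^{\pm}|_{E=E_\omega}$. Strict interlacing with the linearly independent even solution $u_{n,\omega}^{\pm\prime}$, which has $n$ zeros in $(0,L)$ (the critical points of the $n$ arches of $u_{n,\omega}^{\pm}$), forces $n-1$ zeros of $\varphi$ strictly between consecutive zeros of $u_{n,\omega}^{\pm\prime}$. The decisive question is whether a further zero of $\varphi$ lies beyond the last zero of $u_{n,\omega}^{\pm\prime}$ but still before $L$. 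Differentiating the identity $U_{n,\omega}^{\pm}(nT(E);E)=0$ in $E$ at $E=E_\omega$ yields $\varphi(L)=-nT'(E_\omega)U_{n,\omega}^{\pm\prime}(L)$, which combined with $T'(E_\omega)<0$ from Lemma \ref{lemma-period-2} and the alternation $U_{n,\omega}^{\pm\prime}(L)=(-1)^n\sqrt{E_\omega}$ determines the sign of $\varphi(L)$. Comparing with the sign of $\varphi$ between its $(n-1)$-th and expected $n$-th zero, which alternates starting from $\varphi'(0)>0$, shows that $L$ must lie beyond the $n$-th zero of $\varphi$. This yields exactly $n$ zeros of $\varphi$ in $(0,L)$ and therefore $n$ simple negative eigenvalues of $M_+$.

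The step I expect to be most delicate is the sign bookkeeping in the last paragraph: computing the sign of $\varphi(L)$ from the $E$-derivative of the half-period relation, correctly identifying the sign of $U_{n,\omega}^{\pm\prime}(L)$ after $n$ half-periods, and matching this against the alternating sign pattern of $\varphi$ between consecutive zeros. The remaining ingredients, namely the parity reduction, Sturm oscillation for $M_-$, and absence of a zero eigenvalue of $M_+$, are essentially immediate consequences of what has already been established.
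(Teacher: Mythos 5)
Your proof is correct, but it follows a genuinely different route from the paper's. The paper keeps the operators $M_\mp$ on the periodic space $H^2_{\rm per}(-L,L)$ and invokes Floquet--Sturm theory (Theorem 1.3.4 in \cite{Eastham}) to count negative eigenvalues with odd versus even eigenfunctions; the ambiguous case for $M_+$ (whether the odd count is $n-1$ or $n$) is then resolved by a continuation argument in $\omega$: at $\omega=\omega_n$ the zero eigenvalue of $M_+$ is double, it splits for $\omega\lesssim\omega_n$ with the odd member moving to the left, and since $\ker M_+$ in $H^2_{\rm per}$ stays one-dimensional for all $\omega<\omega_n$ (a consequence of $T'(E)\neq 0$), that odd eigenvalue can never recross zero. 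You instead reduce to Dirichlet problems on $(0,L)$ and use classical Sturm oscillation: for $M_-$ the kernel element $u^{\pm}_{n,\omega}$ with its $n-1$ interior zeros immediately identifies $0$ as the $n$-th Dirichlet eigenvalue, and for $M_+$ you count the zeros of $\varphi=\partial_E U^{\pm}_{n,\omega}|_{E=E_\omega}$ in $(0,L)$, pinning down the decisive $n$-th zero by the sign identity $\varphi(L)=-nT'(E_\omega)\,u^{\pm\prime}_{n,\omega}(L)$ together with $T'(E_\omega)<0$ and interlacing with the $n$ zeros of $u^{\pm\prime}_{n,\omega}$. What your approach buys is that it works pointwise in $\omega$ and replaces the paper's somewhat tersely justified splitting-direction claim at the bifurcation point by an explicit computation; what it costs is the sign bookkeeping you flag, which does close as claimed: with $\varphi'(0)>0$ the sign of $\varphi$ on the interval between its $(n-1)$-th and $n$-th positive zeros is $(-1)^{n-1}$, whereas $\operatorname{sign}\varphi(L)=\operatorname{sign} u^{\pm\prime}_{n,\omega}(L)=(-1)^n$ (for the $+$ branch), so $L$ must lie beyond the $n$-th zero of $\varphi$, and interlacing with $u^{\pm\prime}_{n,\omega}$ (whose next zero past $z_n$ lies beyond $L$) rules out an $(n+1)$-th zero before $L$. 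Both arguments ultimately rest on the same two ingredients --- the explicit kernel elements $u^{\pm}_{n,\omega}$ of $M_-$ and $u^{\pm\prime}_{n,\omega}$ of $M_+$, and the monotonicity of the period-to-energy map from Lemmas \ref{lemma-period} and \ref{lemma-period-2}. One cosmetic point: for the $-$ branch one should replace $\varphi$ by $-\varphi$ to keep the normalization $\varphi'(0)>0$, but since $M_\pm$ depend only on $|u^{\pm}_{n,\omega}|^{2p}$ the spectra coincide and nothing changes.
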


\begin{proof}
We use the Sturm theory to identify negative eigenvalues of the operators $M_-$ and $M_+$
given by (\ref{Sturm-operators}) and (\ref{Sturm-operators-2}).
Note that since $u_{n,\omega}^- = -u_{n,\omega}^+$, the spectra of these operators are identical for the
two members of the double set. Since $u^{\pm}_{n,\omega} \in H^2_{\rm per, odd}(-L,L)$,
the operators $M_-$ and $M_+$ are invariant under the change $x \to -x$, therefore,
their $(2L)$-periodic eigenfunctions are either even or odd.
If $V \equiv 0$, then $U$ satisfies the boundary conditions
$U(-L) = U(L) = 0$ and $U'(-L) = U'(L) \neq 0$ if and only if $U$ is odd and $(2L)$-periodic,
that is, if $U \in H^2_{\rm per, odd}(-L,L)$. Also note that $|u^{\pm}_{n,\omega}|^{2p}$
is actually $L$-periodic, therefore, the $2L$-periodic eigenfunctions
are either $L$-periodic or $L$-antiperiodic.

Regarding operator $M_-$, we have $M_- u^{\pm}_{n,\omega} = 0$, where $u_{n,\omega}^{\pm}$ is $(2L)$-periodic, odd, and
has $2n-1$ zeros on $(-L,L)$. By Floquet--Sturm's theory (see Theorem 1.3.4 in \cite{Eastham}),
there exist at least $2n-1$ and at most $2n$ negative eigenvalues of the operator $M_-$
corresponding to $(2L)$-periodic eigenfunctions. The lowest negative eigenvalue corresponds to an even
positive state. The other negative eigenvalues occur in pairs and each pair corresponds to eigenfunctions
of different parity (one even and one odd). Since $u_{n,\omega}^{\pm}$ is odd, exactly $n-1$ negative eigenvalues correspond
to odd eigenfunctions, whereas the other (either $n$ or $n+1$)
negative eigenvalues correspond to even eigenfunctions.
The assertion of the proposition about the negative and zero eigenvalues
of the spectral problem (\ref{L-scalar}) with $V \equiv 0$ is proved.

Regarding operator $M_+$, we have $M_+ u_{n,\omega}^{\pm \prime} = 0$, where
$u^{\pm \prime}_{n,\omega}$ is $(2L)$-periodic,
even, and has $2n$ zeros on $(-L,L)$. By the same Floquet--Sturm's theory
(see Theorem 1.3.4 in \cite{Eastham}), there exist exactly $n$ negative eigenvalues with even eigenfunctions and
either $n-1$ or $n$ negative eigenvalues with odd eigenfunctions.
Therefore, we only need to check in the last pair of eigenvalues
if the eigenvalue for the odd eigenfunction is located to the left or
to the right of the zero eigenvalue for the even eigenfunction $u^{\pm \prime}_{n,\omega}$.

Recall here from Proposition \ref{proposition-countable}
that the branch $(u_{n,\omega}^{\pm},0)$ originates
from the local bifurcation at $\omega = \omega_n$
with the limiting solution (\ref{solution-bifurcation}).
Because
$$
M_+ = M_- - 2p (p+1) |u^{\pm}_{n,\omega}|^{2p}
$$
and $M_- u_{n,\omega}^{\pm} = 0$ with odd $u_{n,\omega}^{\pm}$, it
is clear that the zero eigenvalue is double at $\omega = \omega_n$
but it splits for $\omega \lesssim \omega_n$ in such a way that
the eigenvalue for an odd eigenfunction of $M_+$ in the corresponding pair is located
on the left from the zero eigenvalue for the even eigenfunction $u^{\pm \prime}_{n,\omega}$.

As argued above, the operator $M_+$ given by (\ref{Sturm-operators-2}) admits exactly one
$2L$-periodic eigenfunction for the zero eigenvalue for every $\omega \in (-\infty,\omega_n)$.
Therefore, once the splitting happens for
$\omega \lesssim \omega_n$, the negative eigenvalue for an odd eigenfunction of $M_+$
cannot cross the zero eigenvalue and is hence located
on the left from the zero eigenvalue for the even eigenfunction $u^{\pm \prime}_{n,\omega}$
for every $\omega < \omega_n$.
As a result, exactly $n$ negative eigenvalues  of $M_+$ correspond to
odd eigenfunctions. The assertion of the proposition about the negative and zero eigenvalues
of the spectral problem (\ref{L-scalar-2}) with $V \equiv 0$ is proved.
\end{proof}

\begin{remark}
Recall that the operator $M_-$ given by (\ref{Sturm-operators}) has also exactly
one $(2L)$-periodic eigenfunction  for the zero eigenvalue for every $\omega \in (-\infty,\omega_n)$.
Also recall that
$$
M_- = M_+ + 2p (p+1) |u^{\pm}_{n,\omega}|^{2p}
$$
with $M_+ u_{n,\omega}^{\pm \prime} = 0$, where $u^{\pm \prime}_{n,\omega}$ is even.
Repeating the last argument in the proof of Proposition \ref{proposition-stability} for the operator $M_-$,
we also conclude that the double zero eigenvalue of $M_-$ at $\omega = \omega_n$
splits in such a way that the eigenvalue for an even eigenfunction of $M_-$ is located
on the right from the zero eigenvalue for the odd eigenfunction $u^{\pm}_{n,\omega}$ for every $\omega \in (-\infty,\omega_n)$.
Therefore, exactly $n$ negative eigenvalues of the operator $M_-$ correspond to even
eigenfunctions of the spectral problem (\ref{L-scalar}) with $V \equiv 0$. \label{remark-splitting}
\end{remark}

\begin{proposition}
For every $\omega < \omega_n$, isolated eigenvalues of the spectral problems (\ref{L-scalar}) and (\ref{L-scalar-2}) with
nonzero $V$ correspond to the real eigenvalues $\lambda$ such that $\lambda + \omega < 0$
of the following spectral problems
\begin{equation}
\label{L-scalar-second-group}
\left\{ \begin{array}{l} - U''(x) - \omega U - (p+1) |u^{\pm}_{n,\omega}|^{2p} U = \lambda U, \quad x \in (-L,L), \\
U(L) = U(-L), \\
U'(L) - U'(-L) = - U(L) \sqrt{|\omega + \lambda|}
\end{array} \right.
\end{equation}
and
\begin{equation}
\label{L-scalar-second-group-2}
\left\{ \begin{array}{l} - U''(x) - \omega U - (2p+1) (p+1) |u^{\pm}_{n,\omega}|^{2p} U = \lambda U, \quad x \in (-L,L), \\
U(L) = U(-L), \\
U'(L) - U'(-L) = - U(L) \sqrt{|\omega + \lambda|}.
\end{array} \right.
\end{equation}
The corresponding eigenfunctions $U$ are even but not $(2L)$-periodic, that is,
$$U \notin H^2_{\rm per, even}(-L,L).$$ Furthermore, for $\omega = -\epsilon^2$ with $\epsilon > 0$ sufficiently small,
there exist exactly $n$ negative eigenvalues in the spectral problems (\ref{L-scalar-second-group})
and (\ref{L-scalar-second-group-2}). No zero eigenvalues exist in either spectral problem.
\label{proposition-second-group}
\end{proposition}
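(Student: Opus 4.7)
The plan is to reduce each spectral problem on the tadpole to an equivalent nonlinear-in-$\lambda$ Sturm--Liouville problem on $(-L,L)$ and then count its negative eigenvalues via a monotone family of linear Robin problems parameterized by $\mu=\sqrt{|\omega+\lambda|}$. First I would eliminate the half-line: on $[L,\infty)$ the equation $-V''-\omega V=\lambda V$ admits a nontrivial $L^2$ solution precisely when $\omega+\lambda<0$, in which case $V(x)=V(L)\,e^{-\sqrt{|\omega+\lambda|}(x-L)}$, yielding $V'(L)=-\sqrt{|\omega+\lambda|}\,V(L)$. Substituting into the Kirchhoff conditions of (\ref{L-scalar}) and (\ref{L-scalar-2}) gives exactly the reduced boundary conditions (\ref{L-scalar-second-group}) and (\ref{L-scalar-second-group-2}) with $U(L)=V(L)$. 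Since $u_{n,\omega}^{\pm}$ is odd, the potential is even in $x$, so writing $U=U_e+U_o$ the ODE decouples, and the condition $U(L)=U(-L)$ forces $U_o(L)=0$ while $U'(L)-U'(-L)=2U_e'(L)$. Thus $U_o$ obeys a pure Dirichlet problem that reproduces the $V\equiv 0$ spectrum of Proposition \ref{proposition-stability}, whereas the nonzero-$V$ spectrum corresponds to purely even $U$ with $U_e(L)\neq 0$. Such $U$ cannot be $(2L)$-periodic, since periodicity plus evenness would give $U_e'(L)=0$, forcing $\sqrt{|\omega+\lambda|}\,U_e(L)=0$ and contradicting $\omega+\lambda<0$.

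For the reduced problem I introduce, for each $\mu\geq 0$, the linear Robin problem on $(0,L)$ with Neumann at $0$ and $2U'(L)+\mu U(L)=0$ at $L$, built from $M_-$ or $M_+$. The ordered eigenvalues $\lambda_k^{\pm}(\mu)$ are smooth in $\mu$, and by min-max applied to the quadratic form $Q_\mu(U)=\int_0^L[|U'|^2-(\omega+c_{\pm}(x))U^2]\,dx+\frac{\mu}{2}U(L)^2$ (obtained by integration by parts) they are non-decreasing in $\mu$. Any Robin eigenfunction satisfies $U(L)\neq 0$ (otherwise $U(L)=U'(L)=0$ together with the ODE would force $U\equiv 0$), so the Hellmann--Feynman formula yields $\partial_\mu\lambda_k^{\pm}(\mu)=U_k(L)^2/(2\|U_k\|^2)>0$, giving strict monotonicity. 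Eigenvalues of (\ref{L-scalar-second-group}) and (\ref{L-scalar-second-group-2}) with nonzero $V$ then correspond exactly to self-consistent intersections $\lambda_k^{\pm}(\mu)=-\omega-\mu^2=\epsilon^2-\mu^2$.

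It remains to count these intersections. At $\mu=0$ the Robin problem reduces to the even $(2L)$-periodic restriction of $M_{\pm}$; by Proposition \ref{proposition-stability} and Remark \ref{remark-splitting}, $\lambda_k^{\pm}(0)<0$ for $1\leq k\leq n$, with $\lambda_{n+1}^{+}(0)=0$ (the even kernel element $u_{n,\omega}^{\pm\prime}$ of $M_+$) and $\lambda_{n+1}^{-}(0)>0$ (the kernel of $M_-$ being spanned by the odd $u_{n,\omega}^{\pm}$ and the non-periodic solution in (\ref{psi-minus})). The parabola $\lambda=\epsilon^2-\mu^2$ decreases from $\epsilon^2$ at $\mu=0$ through $0$ at $\mu=\epsilon$ to $-\infty$. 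For $1\leq k\leq n$ and $\epsilon$ small, $\lambda_k^{\pm}(\epsilon)$ is close to $\lambda_k^{\pm}(0)<0$, hence strictly below the parabola at $\mu=\epsilon$; strict monotonicity then gives a unique intersection at some $\mu>\epsilon$, producing $\lambda<0$. For $k=n+1$ in the $M_+$ problem, $\lambda_{n+1}^{+}(\mu)$ starts at $0$ while the parabola starts at $\epsilon^2>0$, so their intersection lies in $(0,\epsilon)$ with $\lambda\in(0,\epsilon^2)$, strictly positive. For $k=n+1$ in the $M_-$ problem and for all $k\geq n+2$ in both, $\lambda_k^{\pm}(0)$ is bounded away from zero uniformly in small $\epsilon$, so the Robin curve stays above the parabola and no intersection occurs. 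This yields exactly $n$ negative eigenvalues and no zero eigenvalue (a zero would require $\lambda_k^{\pm}(\epsilon)=0$, ruled out above). The main obstacle is establishing strict monotonicity of $\lambda_k^{\pm}(\mu)$ together with the uniform-in-$\epsilon$ sign control of $\lambda_k^{\pm}(\epsilon)$ (in particular, separating the $k=n+1$ case in $M_+$ from a genuine zero crossing); the parity decoupling and the monotone-curve intersection count are routine once these are secured.
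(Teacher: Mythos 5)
Your reduction to the half-ring (eliminating $V$ via the explicit decaying exponential, the even/odd decoupling, and the non-periodicity of the even eigenfunctions) is exactly the paper's first step, but your eigenvalue count proceeds by a genuinely different homotopy. The paper fixes the boundary condition and deforms in $\omega$: it starts at $\omega=\omega_n$, where the explicit solution $U=U(0)\cosh(\sqrt{|\omega_n+\lambda|}\,x)$ shows there are no isolated eigenvalues, and then tracks each even-periodic eigenvalue $\mu(\omega)$ of $M_\pm$ down to $\omega=0$, using the local bifurcation analysis of Table~\ref{TableEigenvalue} (the expansion $\lambda=\mu(\omega)+\Lambda$ with $\Lambda=-(\omega+\mu)-(\omega+\mu)^2\|U\|^4/U(L)^4+\cdots$) to convert each odd number of transverse crossings of the anti-diagonal $\lambda=-\omega$ into one isolated eigenvalue, and finally rules out a zero eigenvalue via the explicit expansion (\ref{bif-7}). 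You instead fix $\omega=-\epsilon^2$ and deform in the Robin parameter $\mu=\sqrt{|\omega+\lambda|}$: the Hellmann--Feynman identity $\partial_\mu\lambda_k=U_k(L)^2/(2\|U_k\|^2)>0$ makes the Robin curves strictly increasing, the dispersion relation $\lambda=\epsilon^2-\mu^2$ is strictly decreasing, and simplicity of regular Sturm--Liouville eigenvalues makes each intersection a distinct simple eigenvalue; the count then follows from the $\mu=0$ (Neumann) data supplied by Proposition~\ref{proposition-stability} and Remark~\ref{remark-splitting}, with the marginal branch $\lambda_{n+1}^{+}(0)=0$ of $M_+$ handled by noting that the parabola starts at $\epsilon^2>0$, so that intersection lands at $\lambda\in(0,\epsilon^2)$ --- exactly the positive eigenvalue the paper extracts from (\ref{bif-7}). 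What your route buys is a cleaner global statement (exactly one intersection per branch, automatically, with no case analysis of multiple or tangential crossings and no need for the continuation in $\omega$ back to $\omega_n$); what it costs is that it only works verbatim at fixed small $\epsilon$, whereas the paper's $\omega$-continuation argument is set up to discuss the eigenvalue structure for all $\omega\in(\omega_n,0)$ as well. Both arguments rest on the same input (the even-sector Neumann count for $M_\pm$ and the location of their kernels), so I see no gap; your two flagged obstacles (strict monotonicity and uniform-in-$\epsilon$ sign control) are already resolved by the ingredients you supply, since $U_k(L)\neq 0$ by ODE uniqueness and the potentials depend continuously on $\omega$ through Proposition~\ref{proposition-countable}.
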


\begin{proof}
We recall that the operators $L_+$ and $L_-$ are self-adjoint when they are considered
from the domain $\mathcal{D}(\Delta)$ to $L^2(-L,L) \times L^2(L,\infty)$.
Therefore, the values of $\lambda$ are real. The continuous spectrum is located for $\lambda > -\omega$. In what follows,
we shall only consider isolated eigenvalues $\lambda$ such that $\lambda + \omega < 0$.
If $\lambda + \omega < 0$, there exists a one-parameter family of
decaying solutions for the second equation of the spectral problems (\ref{L-scalar}) and (\ref{L-scalar-2})
as $x \to +\infty$, in fact, in the explicit form
$$
V(x) = V(L) e^{-\sqrt{|\omega+\lambda|} (x-L)}, \quad x \geq L.
$$
Using the boundary condition $V(L) = U(L)$, we arrive to the spectral problems (\ref{L-scalar-second-group})
and (\ref{L-scalar-second-group-2}).
Since $|u^{\pm}_{n,\omega}|^{2p}$ is even in $x$, eigenfunctions of these spectral problems are either even or odd.

If $U(L) = 0$, then $V(x) = 0$ for all $x \geq L$ and we are back
to the case considered in Proposition \ref{proposition-stability}.
Therefore, $U(L) \neq 0$. Odd eigenfunctions violate
the boundary condition $U(L) = U(-L) \neq 0$. Therefore, $U$
is even. Because $U(L) \neq 0$, it follows that $U'(L) \neq U'(-L)$,
therefore, the eigenfunction $U$ is not $(2L)$-periodic, that is,
$U \notin H^2_{\rm per, even}(-L,L)$.

Next, we give the precise count of negative and zero eigenvalues of the spectral problems
(\ref{L-scalar-second-group}) and (\ref{L-scalar-second-group-2}) for $\omega = -\epsilon^2$ with
$\epsilon > 0$ sufficiently small.

First, we note that for $\omega = \omega_n$ when $u_{n,\omega_n}^{\pm} = 0$,
there are no eigenvalues of the spectral problems
(\ref{L-scalar-second-group}) and (\ref{L-scalar-second-group-2}) with $\lambda < -\omega_n$.
Indeed, in this case, the even solution the differential equations
is known in the explicit form
\begin{equation}
\label{exact-U}
U(x) = U(0) \cosh\left(\sqrt{|\omega_n + \lambda|} x\right), \quad x \in (-L,L),
\end{equation}
whereas the last boundary condition yields the equation $2 \tanh\left(\sqrt{|\omega_n + \lambda|} L\right) = -1$,
which admits no solutions.

On the other hand, for $\omega = \omega_n$, there exist a certain number of
negative eigenvalues of the self-adjoint operators $M_-$ and $M_+$ in
(\ref{Sturm-operators}) and (\ref{Sturm-operators-2}),
which correspond to even and $(2L)$-periodic eigenfunctions.
From the proof of Proposition \ref{proposition-stability} and Remark \ref{remark-splitting},
we know that the operators $M_-$ and  $M_+$ admit exactly $n$ eigenvalues with even $2L$-periodic
eigenfunctions. Because the nonlinear terms in the spectral problems (\ref{L-scalar-second-group}) and (\ref{L-scalar-second-group-2})
are $C^1$ if $p > 0$, the negative eigenvalues of $M_-$ and $M_+$ are $C^1$ functions
of $\omega$ for $\omega \in (-\infty,\omega_n)$. All $n$ negative eigenvalues are above
the anti-diagonal $\lambda = -\omega$ at $\omega = \omega_n$, according to the previous result based
on the exact solution (\ref{exact-U}). By continuity, each negative eigenvalue intersects transversely
with the anti-diagonal $\lambda = -\omega$
at least once (and, in any case, an odd number of times) when $\omega$ changes from $\omega_n$
to $0$.

Let $\mu(\omega)$ denotes a particular negative eigenvalue of either $M_-$ or $M_+$ with the corresponding
even eigenfunction $U(x;\omega)$. For definiteness, let us consider operator $M_-$.
Let $\omega_0$ be the point of a particular intersection $\mu(\omega_0) = -\omega_0 < 0$
(which does not need to be transverse). According to the location
of $\mu(\omega)$ above or below the diagonal $\lambda = -\omega$, we claim
in the following Table \ref{TableEigenvalue} the location of a new isolated eigenvalue $\lambda$ of
the spectral problem (\ref{L-scalar-second-group}) with $\omega$
in the neighborhood of the intersection point $\omega = \omega_0$.

\begin{table}[htdp]
\begin{center}
\begin{tabular}{|c|c|c|}
\hline
$\omega \lesssim \omega_0$ & $\omega \gtrsim \omega_0$ & location of $\lambda$\\
\hline
$\mu(\omega) < -\omega$ & $\mu(\omega) < -\omega$ & $\omega \lesssim \omega_0$ \mbox{\rm and} $\omega \gtrsim \omega_0$\\
\hline
$\mu(\omega) < -\omega$ & $\mu(\omega) > -\omega$ & $\omega \lesssim \omega_0$ \\
\hline
$\mu(\omega) > -\omega$ & $\mu(\omega) < -\omega$ & $\omega \gtrsim \omega_0$\\
\hline
$\mu(\omega) > -\omega$ & $\mu(\omega) > -\omega$ & \\
\hline
\end{tabular}
\end{center}
\caption{Locations of a new isolated eigenvalue $\lambda$ of
the spectral problem (\ref{L-scalar-second-group}) with
$\lambda < -\omega$ near $\omega = \omega_0$.}
\label{TableEigenvalue}
\end{table}

Table \ref{TableEigenvalue} implies that if a particular eigenvalue $\mu(\omega)$
of $M_-$ has only one transverse intersection with the anti-diagonal $\lambda = -\omega$ when $\omega$
changes from $\omega_n$ to $0$,
then the spectral problem (\ref{L-scalar-second-group}) acquires one
negative eigenvalue $\lambda$ at $\omega = 0$. If $\mu(\omega)$ has an odd number of
transverse intersections  with the anti-diagonal $\lambda = -\omega$,
the spectral problem (\ref{L-scalar-second-group}) still acquires
only one negative eigenvalue $\lambda$ at $\omega = 0$. The other
intermediate intersections lead to an even number of
appearances and disappearances of negative eigenvalues $\lambda$ below the anti-diagonal $\lambda = -\omega$.
The tangential intersections, if they occur, do not change the outcome at $\omega = 0$.
By continuity of eigenvalues $\lambda$ of the spectral problem (\ref{L-scalar-second-group}) with respect to $\omega$,
this argument yields the last assertion of the lemma for negative eigenvalues
of the spectral problem (\ref{L-scalar-second-group}). A similar
count can be developed for the spectral problem (\ref{L-scalar-second-group-2}).\\

{\bf Proof of Table \ref{TableEigenvalue}:}
Let $\mu(\omega)$ be a particular negative eigenvalue of $M_-$  with the corresponding
even eigenfunction $U(\cdot;\omega)$ in $L^2_{\rm per}(-L,L)$.
Using $C^1$ smoothness of solutions of the first equation in system (\ref{L-scalar-second-group})
in $\lambda$, we write
\begin{equation}
\label{bif-1}
\lambda = \mu(\omega) + \Lambda, \quad U(x) = U(x;\omega) + \Lambda \tilde{U}(x;\omega) + {\rm o}(\Lambda),
\end{equation}
where $\tilde{U}$ is an even solution of the inhomogeneous equation
\begin{equation}
\label{bif-2}
M_- \tilde{U}(x;\omega) = U(x;\omega)
\end{equation}
subject to the orthogonality condition $\langle U(\cdot;\omega), \tilde{U}(\cdot;\omega) \rangle_{L^2(-L,L)} = 0$.
Because $U(\cdot;\omega)$ is even in $x$ and $C^1$, we have $U(L;\omega) \neq 0$ and
$U'(L;\omega) = 0$. On the other hand, since $\| U(\cdot;\omega) \|^2_{L^2_{\rm per}} \neq 0$,
the even function $\tilde{U}$ is not $(2L)$-periodic
because
$$
\tilde{U}'(-L;\omega) = -\tilde{U}'(L;\omega) \neq 0.
$$
Indeed, multiplying the
inhomogeneous equation (\ref{bif-2}) by $U(\cdot;\omega)$ and integrating on $(-L,L)$, we obtain
\begin{equation}
\label{bif-3}
-2 U(L;\omega) \tilde{U}'(L;\omega) = \| U(\cdot;\omega) \|^2_{L^2_{\rm per}} \neq 0.
\end{equation}

The first boundary condition $U(L) = U(-L)$ in system (\ref{L-scalar-second-group})
is satisfied by the construction of even functions. The second boundary condition
in system (\ref{L-scalar-second-group}) leads to the algebraic equation
\begin{equation}
\label{bif-4}
2 \Lambda \tilde{U}'(L;\omega) + {\rm o}(\Lambda) = -
\sqrt{|\omega + \mu(\omega) + \Lambda|} \left( U(L;\omega) + \mathcal{O}(\Lambda) \right).
\end{equation}
In view of the previous relation (\ref{bif-3}), equation (\ref{bif-4}) yields
\begin{equation}
\label{bif-5}
\Lambda \| U(\cdot;\omega) \|^2_{L^2_{\rm per}} + {\rm o}(\Lambda) =
\sqrt{|\omega + \mu(\omega) + \Lambda|} \left( U^2(L;\omega) + \mathcal{O}(\Lambda) \right).
\end{equation}

Recall that $\omega_0 + \mu(\omega_0) = 0$ and the mapping $\omega \mapsto \mu$ is $C^1$.
It follows from the algebraic equation (\ref{bif-5}) that $\Lambda = \mathcal{O}(\omega + \mu(\omega))$
is small if $\omega - \omega_0$ is small. Expanding $\Lambda$ in powers of $\omega + \mu(\omega)$
by using the algebraic equation (\ref{bif-5}), we obtain
\begin{equation}
\label{bif-6}
\Lambda = - (\omega + \mu(\omega)) - (\omega + \mu(\omega))^2
\frac{\| U(\cdot;\omega) \|^4_{L^2_{\rm per}}}{U^4(L;\omega)} + {\rm o}((\omega + \mu(\omega))^2).
\end{equation}
It follows from (\ref{bif-5}) that the new eigenvalue exists only if $\Lambda > 0$,
which implies that $\omega + \mu(\omega) < 0$ near $\omega_0 + \mu(\omega_0) = 0$.
Since
$$
\lambda + \omega = \omega + \mu(\omega) + \Lambda = - (\omega + \mu(\omega))^2
\frac{\| U(\cdot;\omega) \|^4_{L^2_{\rm per}}}{U^4(L;\omega)} + {\rm o}((\omega + \mu(\omega))^2) < 0,
$$
the new eigenvalue $\lambda$ is isolated from the continuous spectrum of the spectral problem (\ref{L-scalar-second-group}).
Thus, a new isolated eigenvalue of the spectral problem (\ref{L-scalar-second-group}) bifurcates at the intersection
$\mu(\omega_0) = -\omega_0 < 0$ near $\omega = \omega_0$ in the subset of $\omega$, where $\omega + \mu(\omega) < 0$.
This perturbation argument yields the statement of Table \ref{TableEigenvalue}. \\

It remains to consider the zero eigenvalue of the spectral problems
(\ref{L-scalar-second-group}) and (\ref{L-scalar-second-group-2}) for $\omega = -\epsilon^2$ for $\epsilon > 0$
sufficiently small. Since $M_- u_{n,\omega}^{\pm} = 0$
with the odd $(2L)$-periodic function $u_{n,\omega}^{\pm}$, no bifurcations
of a new isolated eigenvalue $\lambda$ corresponding to an even eigenfunction may occur in the spectral
problem (\ref{L-scalar-second-group}) near $\omega = 0$.
On the other hand, since $M_+ u_{n,\omega}^{\pm \prime} = 0$
with the even $(2L)$-periodic function $u_{n,\omega}^{\pm \prime}$, bifurcations
of new small negative eigenvalues $\lambda$ may occur in the spectral
problem (\ref{L-scalar-second-group-2}) near $\omega = 0$.
This bifurcation can be considered as in the proof of Table \ref{TableEigenvalue} but with
$\mu(\omega) = 0$ and $U(x;\omega) = u_{n,\omega}^{\pm \prime}(x)$.
With parametrization $\omega = -\epsilon^2$, the
algebraic equation (\ref{bif-6}) yields now
\begin{equation}
\label{bif-7}
\Lambda = \epsilon^2 - \frac{\| u_{n,\omega}^{\pm \prime}\|^4_{L^2_{\rm per}}}{u_{n,\omega}^{\pm \prime}(L)} \epsilon^4 + {\rm o}(\epsilon^4),
\end{equation}
which indicates that a new isolated eigenvalue $\lambda = \Lambda > 0$
does exists for $\omega \lesssim 0$. Since the new eigenvalue is positive,
no zero (or additional negative) eigenvalues in the spectral problem  (\ref{L-scalar-second-group-2})
bifurcates near $\omega = 0$. The statement of the lemma is proved.
\end{proof}

Next, we analyze negative and zero eigenvalues of the operators $L_{\pm}$
associated with the higher branches of Theorem \ref{theorem-higher-order}.
As in Theorem \ref{theorem-higher-order}, we only use $u_{\epsilon}$ to denote
the selected solution along the higher branch for $\omega = -\epsilon^2$ with small positive $\epsilon$.
Recall that the stationary solution $(u,v)$ is expressed by
$$
u = u_{\epsilon}(x + b), \quad v = \epsilon^{\frac{1}{p}} \phi_0(\epsilon (x-L)),
$$
where $b = \mathcal{O}(\epsilon^{\frac{1}{p}})$ is a unique root of $u_{\epsilon}(L+b) = \epsilon^{\frac{1}{p}}$
and $\phi_0$ is the normalized (even) solution given by (\ref{soliton-sech}).
The corresponding spectral problems for the operators $L_-$ and $L_+$ are given by
\begin{equation}
\label{L-plus-nls-scale-higher-order}
\left\{ \begin{array}{l} - U''(x) + \epsilon^2 U - (p+1) |u_{\epsilon}(x+b)|^{2p} U = \lambda U, \quad \quad \quad \; x \in (-L,L), \\
- V''(x) + \epsilon^2 V - \epsilon^2 (p+1) |\phi_0(\epsilon(x-L))|^{2p} V = \lambda V, \quad x \in (L,\infty), \\
U(L) = U(-L) = V(L), \\
U'(L)-U'(-L) = V'(L),
\end{array} \right.
\end{equation}
and
\begin{equation}
\label{L-plus-nls-scale-higher-order-2}
\left\{ \begin{array}{l} - U''(x) + \epsilon^2 U - (2p+1) (p+1) |u_{\epsilon}(x+b)|^{2p} U = \lambda U, \quad \quad \quad \; x \in (-L,L), \\
- V''(x) + \epsilon^2 V - \epsilon^2 (2p+1) (p+1) |\phi_0(\epsilon(x-L))|^{2p} V = \lambda V, \quad x \in (L,\infty), \\
U(L) = U(-L) = V(L), \\
U'(L)-U'(-L) = V'(L).
\end{array} \right.
\end{equation}
The following lemma summarizes the main result about the numbers of negative and zero eigenvalues
of operators $L_-$ and $L_+$.

\begin{lemma}
For $\epsilon > 0$ sufficiently small, the negative eigenvalues of
Propositions \ref{proposition-stability} and \ref{proposition-second-group}
persist as the negative eigenvalues of
the spectral problem (\ref{L-plus-nls-scale-higher-order}) and (\ref{L-plus-nls-scale-higher-order-2}).
In addition, the operator $L_+$ admits one more small negative eigenvalue
and the operator $L_-$ admits a simple zero eigenvalue.
\label{lemma-L-minus-plus}
\end{lemma}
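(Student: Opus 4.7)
The plan is to treat \eqref{L-plus-nls-scale-higher-order} and \eqref{L-plus-nls-scale-higher-order-2} as analytic perturbations, of size $O(\epsilon^{1/p})$ on the ring and $O(\epsilon^2)$ on the tail, of the spectral problems \eqref{L-scalar}--\eqref{L-scalar-2} associated with $(u_{n,\omega}^{\pm},0)$ and analyzed in Propositions \ref{proposition-stability} and \ref{proposition-second-group}. Each negative eigenvalue counted there is bounded away from zero uniformly as $\epsilon \to 0$, since the ring potentials are $O(1)$ and the eigenvalues depend continuously on $\omega \in (-\infty,\omega_n]$. By Kato's analytic perturbation theory for isolated simple eigenvalues, each such eigenvalue persists as a negative eigenvalue of the perturbed problem, with eigenfunction concentrated on the ring and exponentially decaying along the tail (because $\sqrt{|\omega+\lambda|}$ remains bounded away from zero in this regime).

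The zero eigenvalue of $L_-$ is produced by the gauge invariance of \eqref{eq0}: differentiating $\Psi = e^{i\theta}\Phi$ at $\theta = 0$ gives $L_-\Phi = 0$, where $\Phi = (u,v)$ is the stationary pair from Theorem \ref{theorem-higher-order} with both components nontrivial. To prove simplicity, I would set up an Evans-function type matching: at $\lambda = 0$ the tail equation has a one-parameter family of exponentially decaying solutions, while on the ring the equation $M_- U = 0$ admits the two-dimensional fundamental system given by $u_\epsilon$ together with the non-periodic solution \eqref{psi-minus}, whose Wronskian is constant and nonzero. Imposing the three Kirchhoff conditions on the four-parameter $(U,V)$-family leaves a one-dimensional kernel at $\lambda = 0$, and differentiating the resulting dispersion determinant in $\lambda$ gives a nonzero value, yielding simplicity.

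The extra small negative eigenvalue of $L_+$ is obtained by adapting the bifurcation argument of Lemma \ref{lemma-L-plus} to the higher branch. After the tail rescaling $z = \epsilon(x-L)$ and $\lambda = \epsilon^2 \Lambda$, the tail part of \eqref{L-plus-nls-scale-higher-order-2} reduces, in the limit $\epsilon \to 0$, to the half-line version of the scalar problem \eqref{L-plus-scalar}, whose unique negative eigenvalue $\Lambda_0 < 0$ with even bound state $V_0$ (hence $V_0'(0) = 0$) is supplied by Lemma \ref{lemma-Sturm}. The corresponding leading-order ring solution is $U = V_0(0)\, u_\epsilon'(x)/u_\epsilon'(L)$, which automatically satisfies $M_+ U = 0$, $U(\pm L) = V_0(0) = V(L)$, and $U'(L) - U'(-L) = 0$, because $u_\epsilon'$ is even and $u_\epsilon''(\pm L) = 0$ (from the stationary equation together with $u_\epsilon(\pm L) = 0$). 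The full Kirchhoff matching condition, combined with the nondegeneracy $F'(\Lambda_0) \neq 0$ from Lemma \ref{lemma-Sturm} and an application of the implicit function theorem, then yields a unique eigenvalue $\lambda = \epsilon^2 \Lambda_0 + o(\epsilon^2) < 0$.

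The main obstacle is the loss of the $x \mapsto -x$ symmetry on the ring once $b \neq 0$: the decomposition into even and odd modes that underlies Propositions \ref{proposition-stability} and \ref{proposition-second-group} is no longer available and must be replaced by pure continuity-in-$\epsilon$ arguments. A related delicate point is ruling out any additional zero mode of $L_-$ beyond $\Phi$, since a ``second-group'' eigenvalue of \eqref{L-scalar-second-group} could in principle cross $\lambda = 0$ simultaneously with the tail operator losing its edge at $\lambda = -\omega$; quantitative control of the Wronskian \eqref{Wronskian} and of the rescaled tail operator is required to preclude this coincidence and to certify that exactly one new small negative eigenvalue appears in $L_+$.
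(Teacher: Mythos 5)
Your overall strategy coincides with the paper's: persistence of the $\mathcal{O}(1)$ negative eigenvalues by perturbation in $\epsilon$, the kernel of $L_-$ generated by the gauge mode $(u,v)$ itself, and the extra small negative eigenvalue of $L_+$ extracted from the rescaled tail problem together with Lemma \ref{lemma-Sturm} and the implicit function theorem applied to the Kirchhoff matching condition. The main technical difference is in the persistence step: you invoke Kato's perturbation theory (the dependence on $\epsilon$ is only continuous, not analytic, but norm--resolvent continuity of the bounded potential perturbation is all that is needed, so this route is legitimate and arguably cleaner), whereas the paper writes the ring eigenfunction as $c_1 U_{\rm odd}(x+b;\lambda)+c_2U_{\rm even}(x+b;\lambda)$, matches with the decaying tail solution, and applies the implicit function theorem to the resulting $2\times 2$ characteristic determinant, whose roots at $\epsilon=0$ split into the $c_2=0$ family (Proposition \ref{proposition-stability}) and the $c_1=0$ family (Proposition \ref{proposition-second-group}). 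Note that the ``loss of parity'' you list as the main obstacle is not really one: the perturbed ring potential $|u_\epsilon(x+b)|^{2p}$ is still even about $x=-b$, so the parity decomposition survives after the shift, and only the Kirchhoff conditions at $x=\pm L$ couple the two parities.

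The one substantive gap is exactly the point you flag but do not close: excluding \emph{additional} small negative or zero eigenvalues, which is indispensable since Theorem \ref{theorem-stability-higher-order} uses the exact counts $2n-1$ for $L_-$ and $2n+1$ for $L_+$. The paper closes this with an ingredient you already have, so no extra ``quantitative Wronskian control'' is required. Any eigenvalue with $\lambda=\epsilon^2\Lambda\to 0$ must solve the reduced matching condition
\begin{equation*}
\frac{V_\infty'(0;\Lambda)}{V_\infty(0;\Lambda)}=\frac{U'(L;\Lambda)-U'(-L;\Lambda)}{\epsilon}=\mathcal{O}(\epsilon\Lambda),
\end{equation*}
where $U(\cdot;\Lambda)$ is the unique ring solution normalized by $U(\pm L)=1$, which to leading order is the periodic function $u_\epsilon'(\cdot+b)/u_\epsilon'(L+b)$ (for $L_+$) or $u_\epsilon(\cdot+b)/u_\epsilon(L+b)$-type data (for $L_-$). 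For the $L_-$ tail problem the half-line operator with Neumann condition at $z=0$ has no negative eigenvalue, hence $V_\infty'(0;\Lambda)\neq 0$ for all $\Lambda<0$ and the matching has no negative root; the zero eigenvalue is attained exactly by $(u_\epsilon(\cdot+b),\epsilon^{1/p}\phi_0(\epsilon(\cdot-L)))$ since $V_\infty'(0;0)=0$, and is simple. For the $L_+$ tail problem, Lemma \ref{lemma-Sturm} gives that $F(\Lambda)=V_\infty'(0;\Lambda)/V_\infty(0;\Lambda)$ has exactly one zero on $(-\infty,0)$, simple and located at $\Lambda_0$, so the implicit function theorem yields exactly one negative root $\Lambda=\Lambda_0+\mathcal{O}(\epsilon)$ and none at $\Lambda=0$ because $V_\infty'(0;0)\neq0$ there. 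Incorporating this uniqueness statement turns your sketch into a complete proof.
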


\begin{proof}
We shall first prove persistence of negative eigenvalues of operators $L_{\pm}$
given by Propositions \ref{proposition-stability} and \ref{proposition-second-group}
for the branch $(u_{\epsilon},0)$. Recall that these negative eigenvalues are bounded away
from zero as $\epsilon \to 0$. We give an argument for $L_-$, the argument for $L_+$ is identical.

For any $\lambda < 0$, the first differential equation
of the system (\ref{L-plus-nls-scale-higher-order}) can be solved with two
linearly independent solutions
\begin{equation}
\label{decom-U-odd-even}
U(x) = c_1 U_{\rm odd}(x+b;\lambda) + c_2 U_{\rm even}(x+b;\lambda),
\end{equation}
where $(c_1,c_2)$ are arbitrary constants, $U_{\rm odd}$ and $U_{\rm even}$ are odd and even functions.
For uniqueness, we add the normalization conditions
$$
U_{\rm odd}'(0;\lambda) = 1, \quad U_{\rm even}(0;\lambda) = 1.
$$

For any fixed $\lambda < 0$ and sufficiently small $\epsilon > 0$,
the second differential equation of the system (\ref{L-plus-nls-scale-higher-order}) has a decaying solution
\begin{equation}
\label{decom-V-odd-even}
V(x) = d_1 V_{\rm dec}(x-L;\lambda),
\end{equation}
where $d_1$ is an arbitrary constant and $V_{\rm dec}$ is uniquely specified
by the decay condition
$$
\lim_{x \to +\infty} e^{\sqrt{\epsilon^2 - \lambda} x} V_{\rm dec}(x;\lambda) = 1.
$$
As $\epsilon = 0$, we have the unique representation $V_{\rm dec}(x;\lambda) = e^{-\sqrt{-\lambda} x}$ for $\lambda < 0$,
so that $V_{\rm dec}(0;\lambda) = 1$ for every $\lambda < 0$.

Substituting the representations (\ref{decom-U-odd-even}) and (\ref{decom-V-odd-even})
to the boundary conditions in the system (\ref{L-plus-nls-scale-higher-order}),
we obtain $d_1 = \frac{U(L)}{V_{\rm dec}(0;\lambda)}$, use symmetry properties for odd and even functions,
and derive the linear algebraic system for parameters $(c_1,c_2)$:
\begin{eqnarray*}
\left\{ \begin{array}{l}
c_1 \left[ U_{\rm odd}(L+b;\lambda) + U_{\rm odd}(L-b;\lambda) \right] +
c_2 \left[ U_{\rm even}(L+b;\lambda) - U_{\rm even}(L-b;\lambda) \right] = 0, \\
c_1 \left[ U'_{\rm odd}(L+b;\lambda) - U'_{\rm odd}(L-b;\lambda) \right] +
c_2 \left[ U'_{\rm even}(L+b;\lambda) + U'_{\rm even}(L-b;\lambda) \right] \\
\phantom{texttext} = \frac{V'_{\rm dec}(0;\lambda)}{V_{\rm dec}(0;\lambda)}
\left[ c_1 U_{\rm odd}(L+b;\lambda) + c_2 U_{\rm even}(L+b;\lambda) \right].
\end{array} \right.
\end{eqnarray*}
Note that $b = 0$ if $\epsilon = 0$.

The homogeneous linear system above at $\epsilon = 0$ has two groups of solutions, which
correspond to negative eigenvalues described in Propositions \ref{proposition-stability} and \ref{proposition-second-group}.
For the first group, $c_2 = 0$ and $U_{\rm odd}(L;\lambda_0) = 0$, where $\lambda_0 < 0$ is
an eigenvalue of Proposition \ref{proposition-stability}. For the second group,
$c_1 = 0$ and $U_{\rm even}'(L;\lambda_0) = -\sqrt{|\lambda_0|} U_{\rm even}(L;\lambda_0)$,
where $\lambda_0 < 0$ is an eigenvalue of Proposition \ref{proposition-second-group}.
Since all solutions used in the construction of the homogeneous linear system above are $C^1$ in $\lambda$ and $\epsilon$ at
$\lambda = \lambda_0 < 0$
and $\epsilon = 0$, persistence of negative eigenvalues follows from persistence of roots of
the characteristic equation associated with the homogeneous linear system for $(c_1,c_2)$.

It remains to consider small negative and zero eigenvalues of operators $L_{\pm}$,
which may bifurcate as $\epsilon \to 0$ from the zero eigenvalues in Propositions
\ref{proposition-stability} and \ref{proposition-second-group}. Here we first consider
the operator $L_+$ and then operator $L_-$.

By using the scaling transformation $z = \epsilon(x-L)$ and $\lambda = \epsilon^2 \Lambda$,
the second equation of the system (\ref{L-plus-nls-scale-higher-order-2}) can be rewritten
in terms of the variable $z$:
\begin{equation}
\label{L-scalar-nls-2}
-V''(z) + V - (2p+1) (p+1) |\phi_0(z)|^{2p} V = \Lambda V, \quad z > 0.
\end{equation}
By Lemma \ref{lemma-Sturm}, there exists a unique $C^{\infty}$ decaying
solution of the differential equation (\ref{L-scalar-nls-2}) for $\Lambda \in (-\infty,1)$
satisfying the decay condition
\begin{equation}
\label{bc-L-scalar-nls}
\lim_{z \to \infty} V(z)  e^{\sqrt{1 - \Lambda} z} = 1.
\end{equation}
Denote this solution by $V_{\infty}(z;\Lambda)$. Recall that $V_{\infty}$ is also $C^{\infty}$
in $\Lambda$ for every $\Lambda \in (-\infty,1)$. Also recall from Lemma \ref{lemma-Sturm}
that the spectral problem (\ref{L-scalar-nls-2}) with the boundary condition $V'(0) = 0$
has one negative eigenvalue $\Lambda_0 < 0$, no zero eigenvalue, and the rest of the spectrum
is bounded from below by a positive number $\Lambda_1 \in (0,1)$. Therefore, the only root of
$V_{\infty}'(0;\Lambda) = 0$ on $(-\infty,\Lambda_1)$
occurs at the negative eigenvalue $\Lambda_0 < 0$.

Given the unique $V$, the rest of the system (\ref{L-plus-nls-scale-higher-order-2}) is given by
\begin{equation}
\label{L-plus-nls-rest-2}
\left\{ \begin{array}{l} - U''(x) + \epsilon^2 U
- (2p+1) (p+1) |u_{\epsilon}(x+b)|^{2p} U = \epsilon^2 \Lambda U, \quad x \in (-L,L), \\
U(L) = U(-L) = V_{\infty}(0;\Lambda), \\
U'(L)-U'(-L) = \epsilon V_{\infty}'(0;\Lambda).
\end{array} \right.
\end{equation}
Small negative and zero eigenvalues of the spectral problem (\ref{L-plus-nls-rest-2}) such that
$\lambda = \epsilon^2 \Lambda \to 0$ as $\epsilon \to 0$
bifurcate from the zero eigenvalue of the linear operator
$$
M_+ := -\partial_x^2 - (2p+1) (p+1) |u_0(x)|^{2p} : \quad H^2_{\rm per}(-L,L) \to L^2_{\rm per}(-L,L),
$$
whose the only eigenfunction is known because of $M_+ u_0' = 0$.
If $\Lambda = 0$, the $(2L)$-periodic eigenfunction $u_0'$ is continued
as $u_{\epsilon}'(\cdot + b)$ but this eigenfunction does not satisfy the system
(\ref{L-plus-nls-rest-2}) because $V_{\infty}'(0;0) \neq 0$. Therefore,
zero is not an eigenvalue of the spectral problem (\ref{L-plus-nls-rest-2}).

Next, for every $\Lambda \in \mathbb{R}$, there exists a unique
solution of the differential equation in system (\ref{L-plus-nls-rest-2})
subject to the normalization $U(-L) = U(L) = 1$. This solution $U$ is $C^1$ in $x$ on $[-L,L]$
and in $\Lambda$ on $\mathbb{R}$.
From the solution for $\Lambda = 0$, the unique function is given by
$$
U(x;\Lambda) = \frac{u'_{\epsilon}(x+b)}{u'_{\epsilon}(L+b)} + \mathcal{O}_{C^1(-L,L)}(\epsilon^2 \Lambda)
\quad \mbox{\rm as} \quad \epsilon \to 0.
$$
Substituting this unique function to the boundary conditions in system (\ref{L-plus-nls-rest-2}),
we obtain the boundary condition
\begin{equation}
\label{boundary-value-L-scalar-nls}
F(\Lambda) := \frac{V_{\infty}'(0;\Lambda)}{V_{\infty}(0;\Lambda)}
= \frac{U'(L;\Lambda) - U'(-L;\Lambda)}{\epsilon} = \mathcal{O}(\epsilon \Lambda), \quad \mbox{\rm as} \quad \epsilon \to 0.
\end{equation}
By Lemma \ref{lemma-Sturm},
the function $F$ is $C^{\infty}$ in $\Lambda$ for every $\Lambda < 0$ and it admits only one simple
zero at $\Lambda = \Lambda_0$. By the implicit function theorem, the simple root persists in
algebraic equation (\ref{boundary-value-L-scalar-nls}) near $\Lambda_0$ with respect to small parameter
$\epsilon$. Therefore, the spectral problem (\ref{L-plus-nls-scale-higher-order-2})
admits no zero eigenvalue and exactly one negative eigenvalue $\Lambda$ near $\Lambda_0$ such that
$\Lambda = \Lambda_0 + \mathcal{O}(\epsilon)$.
The assertion of the lemma about the negative and zero eigenvalues of
the spectral problem (\ref{L-plus-nls-scale-higher-order-2}) is proved.

Now we consider the operator $L_-$. With the help of the same scaling transformation,
the second equation of the system (\ref{L-plus-nls-scale-higher-order}) can be rewritten
for the component $V$ as a function of $z$:
\begin{equation}
\label{L-scalar-nls}
-V''(z) + V - (p+1) |\phi_0(z)|^{2p} V = \Lambda V, \quad z > 0.
\end{equation}
By the same result as in Lemma \ref{lemma-Sturm}, there exists a unique $C^{\infty}$ decaying
solution of the differential equation (\ref{L-scalar-nls}) for $\Lambda \in (-\infty,1)$
satisfying the decay condition (\ref{bc-L-scalar-nls}). Again, we denote this solution
by $V_{\infty}(z;\Lambda)$ and recall that $V_{\infty}$ is also $C^{\infty}$
in $\Lambda$ for every $\Lambda \in (-\infty,1)$. Because the spectral problem (\ref{L-scalar-nls})
with the boundary condition $V'(0) = 0$ has no negative eigenvalues, a simple zero eigenvalue,
and the rest of the spectrum is bounded away from zero by a positive number $\Lambda_1 \in (0,1)$,
the only root of $V_{\infty}'(0;\Lambda)$ on $(-\infty,\Lambda_1)$ occurs at
the zero eigenvalue $\Lambda = 0$.

Given the unique $V$, the rest of the system (\ref{L-plus-nls-scale-higher-order}) is given by
\begin{equation}
\label{L-plus-nls-rest}
\left\{ \begin{array}{l} - U''(x) + \epsilon^2 U
- (p+1) |u_{\epsilon}(x+b)|^{2p} U = \epsilon^2 \Lambda U, \quad x \in (-L,L), \\
U(L) = U(-L) = V(0), \\
U'(L)-U'(-L) = \epsilon V'(0).
\end{array} \right.
\end{equation}
Small negative and zero eigenvalues of the spectral problem (\ref{L-plus-nls-rest})
with $\lambda = \epsilon^2 \Lambda \to 0$ as $\epsilon \to 0$
bifurcate from the zero eigenvalue of the linear operator
$$
M_- := -\partial_x^2 - (p+1) |u_0(x)|^{2p} : \quad H^2_{\rm per}(-L,L) \to L^2_{\rm per}(-L,L),
$$
whose the only $2L$-periodic eigenfunction is known because of $M_- u_0 = 0$.
If $\Lambda = 0$, the $(2L)$-periodic eigenfunction $u_0$ is continued
as $u_{\epsilon}(\cdot + b)$ and it satisfies (\ref{L-plus-nls-rest}) for any $\epsilon > 0$ because
$V_{\infty}'(0;0) = 0$. The simple zero eigenvalue persists in the spectral problem
(\ref{L-plus-nls-scale-higher-order}) with the exact solution
$(U,V) = (u_{\epsilon}(\cdot + b),\epsilon^{\frac{1}{p}} \phi_0(\epsilon(\cdot - L)))$
for $\lambda = 0$. No negative eigenvalues exists in (\ref{L-plus-nls-rest})
because $V_{\infty}'(0;\Lambda) \neq 0$ for $\Lambda \in (-\infty,0)$.
The assertion of the lemma about the negative and zero eigenvalues of
the spectral problem (\ref{L-plus-nls-scale-higher-order}) is proved.
\end{proof}

We can now proceed with the proof of Theorem \ref{theorem-stability-higher-order}. We note
the following result (see, e.g., Proposition 20 and decomposition (12) in \cite{FHK}).

\begin{proposition}
For every $p \in (0,2]$ and every $n \in \mathbb{N}$, the following slope condition is satisfied
for the entire family of the standing wave solutions $u_{n,\omega}^{\pm}$:
\begin{equation}
\label{asumption-negative}
\frac{d}{d \omega} \| u_{n,\omega}^{\pm} \|_{L^2(-L,L)} < 0, \qquad \omega \in (-\infty,\omega_n).
\end{equation}
\label{prop-slope}
\end{proposition}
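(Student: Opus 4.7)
The plan is to reduce the slope condition for $u_{n,\omega}^{\pm}$ to the corresponding condition for a single positive arch, and then to verify the latter either by direct integral manipulations or by invoking Proposition~20 of \cite{FHK}. First, because $u_{n,\omega}^{\pm}$ is odd, $2L$-periodic with minimal period $2L/n$, and has zeros at $x = kL/n$ for $k=-n,-n+1,\ldots,n$, the phase-plane reflection symmetry of the second-order equation $-u'' - (p+1)|u|^{2p}u = \omega u$ forces the restriction of $u_{n,\omega}^{\pm}$ to each sign-preserving subinterval $(kL/n,(k+1)L/n)$ to coincide, up to a sign, with a translate of $v_\omega$, where $v_\omega$ is the unique positive solution of the Dirichlet problem on $(0,\ell)$ with $\ell := L/n$. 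Summing over the $2n$ subintervals gives $\|u_{n,\omega}^{\pm}\|_{L^2(-L,L)}^2 = 2n\|v_\omega\|_{L^2(0,\ell)}^2$, so it suffices to prove $\frac{d}{d\omega}\|v_\omega\|_{L^2(0,\ell)}^2 < 0$ for all $\omega < \pi^2/\ell^2 = \omega_n$ and every $p\in(0,2]$.

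For the single-arch problem I would use the first integral $(v')^2 + \omega v^2 + v^{2p+2} = E$ with $E = \omega v_{\max}^2 + v_{\max}^{2p+2}$ and $v_{\max} := v(\ell/2)$. Setting $A := v_{\max}^{2p}$ and substituting $v = v_{\max} s$, the half-length and squared norm take the form
\begin{equation*}
\frac{\ell}{2} = \int_0^1 \frac{ds}{\sqrt{\omega(1-s^2)+A(1-s^{2p+2})}},\qquad
\|v_\omega\|_{L^2(0,\ell)}^2 = 2A^{1/p}\int_0^1 \frac{s^2\,ds}{\sqrt{\omega(1-s^2)+A(1-s^{2p+2})}}.
\end{equation*}
Since $\ell$ is fixed, the first relation defines $A = A(\omega)$ implicitly with $A'(\omega) < 0$, and differentiation yields a closed rational expression for $\frac{d}{d\omega}\|v_\omega\|^2$ in terms of auxiliary integrals. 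Alternatively, the scaling $v_\omega(x) = (-\omega)^{1/(2p)}\tilde v(\sqrt{-\omega}\,x;\mu)$ with $\mu := \sqrt{-\omega}\,\ell$ rewrites $\|v_\omega\|_{L^2(0,\ell)}^2 = (-\omega)^{1/p-1/2}\tilde N(\mu)$, where $\tilde N(\mu) := \|\tilde v(\cdot;\mu)\|_{L^2(0,\mu)}^2$ is computed from the canonical frequency-free single-arch solution on $(0,\mu)$; the slope condition is then equivalent to the monotonicity statement $\frac{d}{d\mu}[\mu^{2/p-1}\tilde N(\mu)] > 0$.

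The main obstacle is verifying the strict sign uniformly in $\omega$ and in $p\in(0,2]$. For $p\in(0,2)$, the soliton limit $\tilde v(\cdot;\mu)\to\phi_0$ as $\mu\to\infty$ together with the positive exponent $2/p-1 > 0$ makes the monotonicity robust, while the upper limit $\omega\to\omega_n^-$ is harmless as it corresponds to the trivial Lyapunov--Schmidt bifurcation at which both $A$ and $\|v_\omega\|^2$ admit convergent expansions in $(\omega_n - \omega)$. The critical case $p=2$ is the delicate one, since the scaling prefactor $(-\omega)^{1/p-1/2}$ degenerates to a constant and the strict negativity must be produced entirely by the monotone deficit coming from the truncation of the exponential soliton tail at the finite boundary. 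This is the borderline case of a Vakhitov--Kolokolov-type criterion adapted to a bounded domain, established in \cite{FHK} through a careful rearrangement of the integral expressions above into a manifestly non-positive quadratic form; the argument carries through up to and including $p=2$ and fails beyond it, consistent with the $L^2$-critical nature of the exponent.
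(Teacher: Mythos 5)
Your proposal is correct and follows essentially the same route as the paper: the paper gives no self-contained proof but simply quotes the result from \cite{FHK} (``Proposition 20 and decomposition (12)''), and your argument makes explicit exactly those two ingredients --- the decomposition of the odd $2L$-periodic solution into $2n$ congruent positive Dirichlet arches on intervals of length $L/n$, followed by the single-arch slope condition of Proposition~20 in \cite{FHK}. Since the analytically delicate step (the strict sign up to the critical exponent $p=2$, uniformly in $\omega$) is deferred to \cite{FHK} in both cases, your write-up is at the same level of completeness as the paper's.
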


\begin{proof1}{\em of Theorem \ref{theorem-stability-higher-order}.}
By Lemma \ref{lemma-L-minus-plus},
we have $2n-1$ negative eigenvalues of the
operator $L_-$ and $2n+1$ negative eigenvalues of the operator $L_+$,
associated with the higher-order branch of Theorem \ref{theorem-higher-order}.
Furthermore, it follows from Proposition \ref{proposition-countable} and Theorem \ref{theorem-higher-order} that
\begin{eqnarray*}
\| u \|_{L^2(-L,L)}^2 = \| u_{\epsilon}(\cdot + b) \|^2_{L^2(-L,L)} = \| u_{\epsilon} \|^2_{L^2(-L,L)} =
\| u_0 \|^2_{L^2(-L,L)} + \mathcal{O}(\epsilon^2)
\end{eqnarray*}
and
\begin{eqnarray*}
\| v \|_{L^2(L,\infty)}^2 = \epsilon^{\frac{2}{p}-1} \| \phi_0 \|^2_{L^2(0,\infty)}.
\end{eqnarray*}
In the constrained $L^2$-subspace, where $(U,V) \perp (u,v)$, the additional small negative eigenvalue of the operator $L_+$
in Lemma \ref{lemma-L-minus-plus} disappears
if
$$
\partial_{\epsilon} \left( \| u \|_{L^2(-L,L)}^2 + \| v \|_{L^2(L,\infty)} \right) > 0,
$$
which is only true for $p \in (0,2)$. In this case, the difference between the number of
negative eigenvalues of the operators $L_+$ and $L_-$ is exactly one and the stationary solution is spectrally unstable
with at least one pair of real eigenvalues in the spectral stability problem (\ref{spect-prob}),
according to the spectral instability theory in \cite{GSS2,Gr}.

If $p \in (2,\infty)$, we have $\partial_{\epsilon} \left( \| u \|_{L^2(-L,L)}^2 + \| v \|_{L^2(L,\infty)} \right) < 0$,
so that the difference  between the number of
negative eigenvalues of the operators $L_+$ and $L_-$
is exactly two. In this case, the stationary solution is spectrally unstable with
at least two pairs of real eigenvalues in the spectral stability problem (\ref{spect-prob}),
according to the spectral instability theory in \cite{Gr}.

Finally, if $p = 2$, then $\| v \|_{L^2(L,\infty)}$ is independent of $\epsilon$,
whereas $\| u_{\epsilon}\|^2_{L^2(-L,L)}$ satisfies the slope condition
(\ref{asumption-negative}) in Proposition \ref{prop-slope}. Therefore,
the difference between the number of
negative eigenvalues of the operators $L_+$ and $L_-$
is exactly one and the stationary solution is spectrally unstable
with exactly one pair of real eigenvalues in the spectral stability problem (\ref{spect-prob}).
\end{proof1}

\begin{remark}
By the count of negative eigenvalues in Propositions \ref{proposition-stability}
and \ref{proposition-second-group}, we have $2n-1$ negative eigenvalues of the
operator $L_-$ and $2n$ negative eigenvalues of the operator $L_+$, associated with
the family of the standing wave solutions $(u,v) = (u_{n,\omega}^{\pm},0)$ near $\omega = 0$.
Using the slope condition (\ref{asumption-negative}), we obtain that in the constrained $L^2$-subspace,
where $U \perp u_{n,\omega}$, the operator $L_+$ has $2n-1$ negative eigenvalues.
Hence, the difference between negative eigenvalues of $L_+$ and $L_-$ is exactly zero,
and the instability test for the branch $(u,v) = (u_{n,\omega}^{\pm},0)$ is inconclusive if $p \in (0,2]$.

If $p \in (2,\infty)$, the instability test is even more undecided because the slope condition (\ref{asumption-negative})
only holds for $\omega$ near $\omega_n$ and is definitely violated if $\omega \to -\infty$ \cite{FHK}.
When the slope condition (\ref{asumption-negative}) is violated, there exists at least one pair of real eigenvalues
in the spectral problem (\ref{spect-prob}).
\label{remark-branches}
\end{remark}

In the rest of this section, we motivate why the unperturbed branch $(u,v) = (u_{n,\omega}^{\pm},0)$
is expected to be unstable for small negative $\omega$, according to Conjecture \ref{conjecture-stability-higher-order}.
If the slope condition (\ref{asumption-negative}) is satisfied (which is the case for every $n \in \mathbb{N}$
if $p \in (0,2]$ \cite{FHK}), there exist exactly $2n-1$ negative eigenvalues of the operators $L_-$ and
$L_+$ (the latter operator is considered in the constrained space, where $U \perp u_{n,\omega}$).
Nevertheless, the stationary solution can be spectrally stable if there exist $2n-1$ pairs of
purely imaginary eigenvalues $\lambda$ of negative Krein signature in the spectral stability
problem (\ref{spect-prob}) (see, e.g., Theorem 4.5 in \cite{Pel}). Hence, in order to claim the spectral
instability of the stationary solutions, we shall rule out this possibility.

Thanks to the symmetry decompositions of eigenvectors
to the even and odd parts on the interval $[-L,L]$ (the same as in
Propositions \ref{proposition-stability} and \ref{proposition-second-group}),
one can divide all eigenvalues of the spectral stability problem (\ref{spect-prob})
into two groups corresponding to even and odd eigenfunctions. The odd eigenvectors
satisfy the following spectral stability problem
\begin{equation}
\label{L-vector-1}
\left\{ \begin{array}{l} - U''(x) - \omega U -  (2p+1) (p+1) |u^{\pm}_{n,\omega}|^{2p} U = -\lambda W, \quad x \in (0,L), \\
- W''(x) - \omega W -  (p+1) |u^{\pm}_{n,\omega}|^{2p} W = \lambda U, \quad \quad \quad \quad \;\; x \in (0,L), \\
U(0) = U(L) = 0, \\
W(0) = W(L) = 0. \end{array} \right.
\end{equation}
If the slope condition (\ref{asumption-negative}) is satisfied, the spectral problem
(\ref{L-vector-1}) can have at most $2n-2$ unstable eigenvalues, depending on the parameter $\omega$
in $(-\infty,\omega_n)$,
according to the count of $n-1$ negative eigenvalues of the operators $L_-$ and
$L_+$ (the latter operator is considered in the constrained space, where $U \perp u_{n,\omega}^{\pm}$),
see Proposition \ref{proposition-stability} and Remark \ref{remark-branches}.
For $n = 1$, the spectral stability problem (\ref{L-vector-1}) produces no unstable eigenvalues.
For $n \geq 2$, unstable eigenvalues may appear or disappear for various values of $\omega$
and the exact count of unstable eigenvalues for small negative values of $\omega$ is more difficult.

The even eigenvectors satisfy the following spectral stability problem
\begin{equation}
\label{L-vector-2}
\left\{ \begin{array}{l} - U''(x) - \omega U -  (2p+1)(p+1) |u^{\pm}_{n,\omega}|^{2p} U = -\lambda W, \quad x \in (0,L), \\
- W''(x) - \omega W - (p+1) |u^{\pm}_{n,\omega}|^{2p} W = \lambda U, \quad \quad \quad \quad \;\;  x \in (0,L), \\
- V''(x) - \omega V = -\lambda Z, \quad \quad \quad \quad \quad \quad \quad \quad \quad \quad \quad \quad \;\;  x \in (L,\infty), \\
- Z''(x) - \omega Z = \lambda W, \quad \quad \quad \quad \quad \quad \quad \quad \quad \quad \quad \quad \quad \;  x \in (L,\infty),\\
U'(0) = 0, \;\; W'(0) = 0,\\
U(L) = V(L), \;\; W(L) = Z(L),\\
2U'(L) = V'(L), \;\; 2 W'(L) = Z'(L).\end{array} \right.
\end{equation}
The spectral problem (\ref{L-vector-2}) can have at most $2n$ unstable eigenvalues for small negative $\omega$, according to the
count of $n$ negative eigenvalues in Proposition \ref{proposition-second-group}, unless $n$ pairs
of purely imaginary eigenvalues $\lambda$ of negative Krein signature occurs in the spectral stability
problem (\ref{L-vector-2}).
However, the continuous spectrum of the spectral stability problem (\ref{L-vector-2})
is located for two symmetric segments $\pm i \Sigma_c$, where
$$
\Sigma_c = \{ -\omega + k^2, \quad k \in \mathbb{R} \}.
$$
Therefore, for $\omega \geq 0$, $\Sigma_c \cup (-\Sigma_c) = \mathbb{R}$ contains no gap, so that
all purely imaginary eigenvalues $\lambda$, if they exist, are embedded into the continuous spectrum.
Since pairs of embedded eigenvalues of negative Krein signature are structurally unstable and bifurcate
into quartets of complex eigenvalues according to the spectral instability
theory in \cite{Gr}, the spectral stability problem (\ref{L-vector-2}) is expected to
have generically $n$ quartets of complex eigenvalues for small negative $\omega$.
This argument is reflected in Conjecture \ref{conjecture-stability-higher-order}.

\begin{remark}
Since the higher branch of Theorem \ref{theorem-higher-order} bifurcates off the branch
$(u,v) = (u^{\pm}_{n,\omega},0)$ for $\omega \lesssim 0$, the spectral problem (\ref{spect-prob}) is expected to have
as many quartets of complex eigenvalues or pairs of purely imaginary
eigenvalues of negative Krein signatures as the branch $(u,v) = (u^{\pm}_{n,\omega},0)$ does near $\omega = 0$.
In addition, Theorem \ref{theorem-stability-higher-order}
guarantees that the higher branch of Theorem \ref{theorem-higher-order} has one (two) pairs of real unstable eigenvalues
$\lambda$ in the spectral stability problem (\ref{spect-prob}) for $p \in (0,2]$ (respectively,
for $p \in (2,\infty)$).
\end{remark}

\section{Numerical results}

In this final section, we study existence and stability of standing waves on the tadpole graph numerically.
We shall confirm the results of Theorems \ref{theorem-stability-primary} and \ref{theorem-stability-higher-order}.
In addition, we shall illustrate
the validity of Conjecture \ref{conjecture-stability-higher-order}. For the NLS model (\ref{eq0}),
we consider the case $p = 1$, which corresponds to the cubic NLS equation on the tadpole.

For $p = 1$, the second-order differential equation
\begin{equation}
\label{num-1}
u''(x) + \omega u + 2 u^3 = 0
\end{equation}
can be solved analytically. Indeed, using the transformation
\begin{equation}
\label{num-0}
u(x) = k a v(\xi), \quad \xi = a x, \quad a = \sqrt{\frac{\omega}{1 - 2 k^2}},
\end{equation}
where $k \in (0,1)$ is a parameter, we transform equation (\ref{num-1}) to the form
\begin{equation}
\label{num-2}
v''(\xi) + (1-2k^2) v + 2k^2 v^3 = 0.
\end{equation}
The second-order equation (\ref{num-2}) is satisfied by the Jacobi elliptic function $v(\xi) = {\rm cn}(\xi;k)$ associated
with the parameter $k$ \cite{OLBC10}.
The Jacobi elliptic function ${\rm cn}(\cdot;k)$ is $4K(k)$-periodic, where
$K(k)$ is the complete elliptic integral of the first kind. If $u$ is $2L$-periodic, then
parameters $k$, $\omega$, and $L$ satisfy the relationship
\begin{equation}
\label{num-3}
4 n K(k) = 2 a L \quad \Rightarrow \quad \omega L^2 = 4 n^2 (1-2k^2) K(k)^2,
\end{equation}
where $n \in \mathbb{N}$ is the index for the corresponding branch of the $2L$-periodic solution.
As $k \to 0$, we have $K(k) \to \frac{\pi}{2}$, hence $\omega \to \omega_n = \frac{\pi^2 n^2}{L^2}$, according to the
result of Proposition \ref{proposition-countable}. As $k \to 1$, we have $K(k) \to \infty$,
hence $\omega \to -\infty$. At $k = \frac{1}{\sqrt{2}}$, we have $\omega = 0$.

Using the relation (\ref{num-3}) and translating the Jacobi elliptic function ${\rm cn}(\xi;k)$ in $\xi$
to satisfy the boundary condition $u(0) = 0$, we obtain the exact $2L$-periodic solution of
the second-order differential equation (\ref{num-1}) in the form
\begin{equation}
\label{num-4}
u^{\pm}_{n,\omega}(x) = \pm \frac{2n k K(k)}{L} \; {\rm cn}\left(\frac{2 n K(k)}{L} \; x + K(k);k\right).
\end{equation}

The exact solution (\ref{num-4}) is used as the seed solution for the Newton iterative algorithm to
approximate standing wave solutions of the boundary-value problem (\ref{stat-nls}) with $p = 1$.
We discretize the second-order differential equations with a second-order central difference method
and incorporate the Kirchhoff boundary conditions into the discretization method.
Figure \ref{fig0} shows the numerical approximations of the standing wave solutions $(u_{n,\omega}^+,0)$
with $n = 1$ (a) and $n = 2$ (b) corresponding to $\omega = -1$.
We have set $L = \pi$ and used $N = 100$ grid points on the interval $[-L,L]$.
For the same value $\omega = -1$, Figure \ref{fig-sol} shows the standing wave solutions $(u,v)$ with nonzero $v$
along the primary branch (a) and two representatives of the higher branches with $n = 1$ (b,c)
and $n = 2$ (d,e) bifurcating from the standing wave solutions $(u_{n,\omega}^{\pm},0)$ at $\omega = 0$.
To truncate the semi-infinite line $[L,\infty)$ on the finite interval $[L,L_{\infty}]$,
we have used $L_{\infty} = 2\pi$ and the Dirichlet boundary condition at $L_{\infty}$. The grid spacing
is uniform between $[-L,L]$ and $[L,L_{\infty}]$.

\begin{figure}[htbp]
\begin{center}
\includegraphics[scale=0.47]{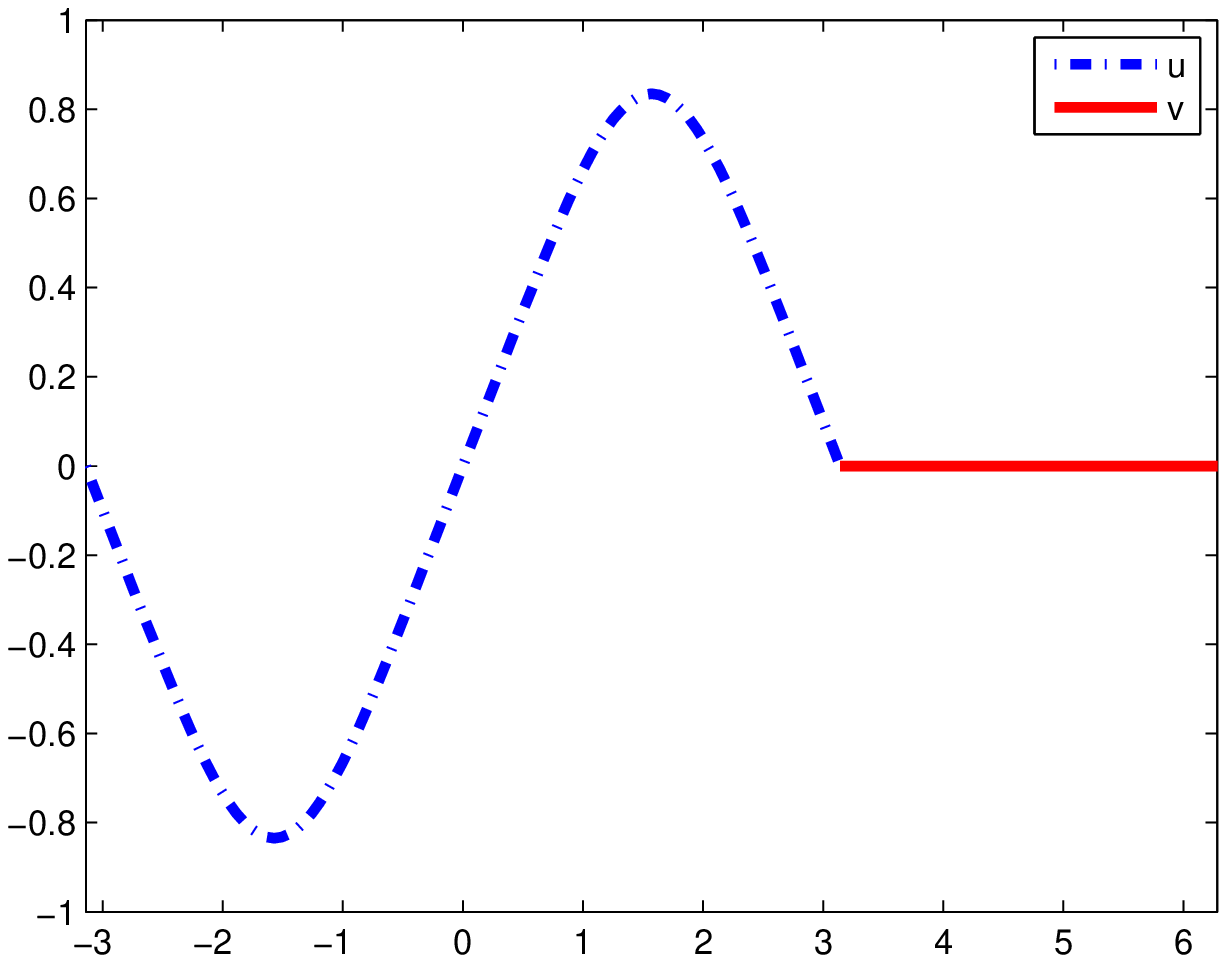}
\includegraphics[scale=0.47]{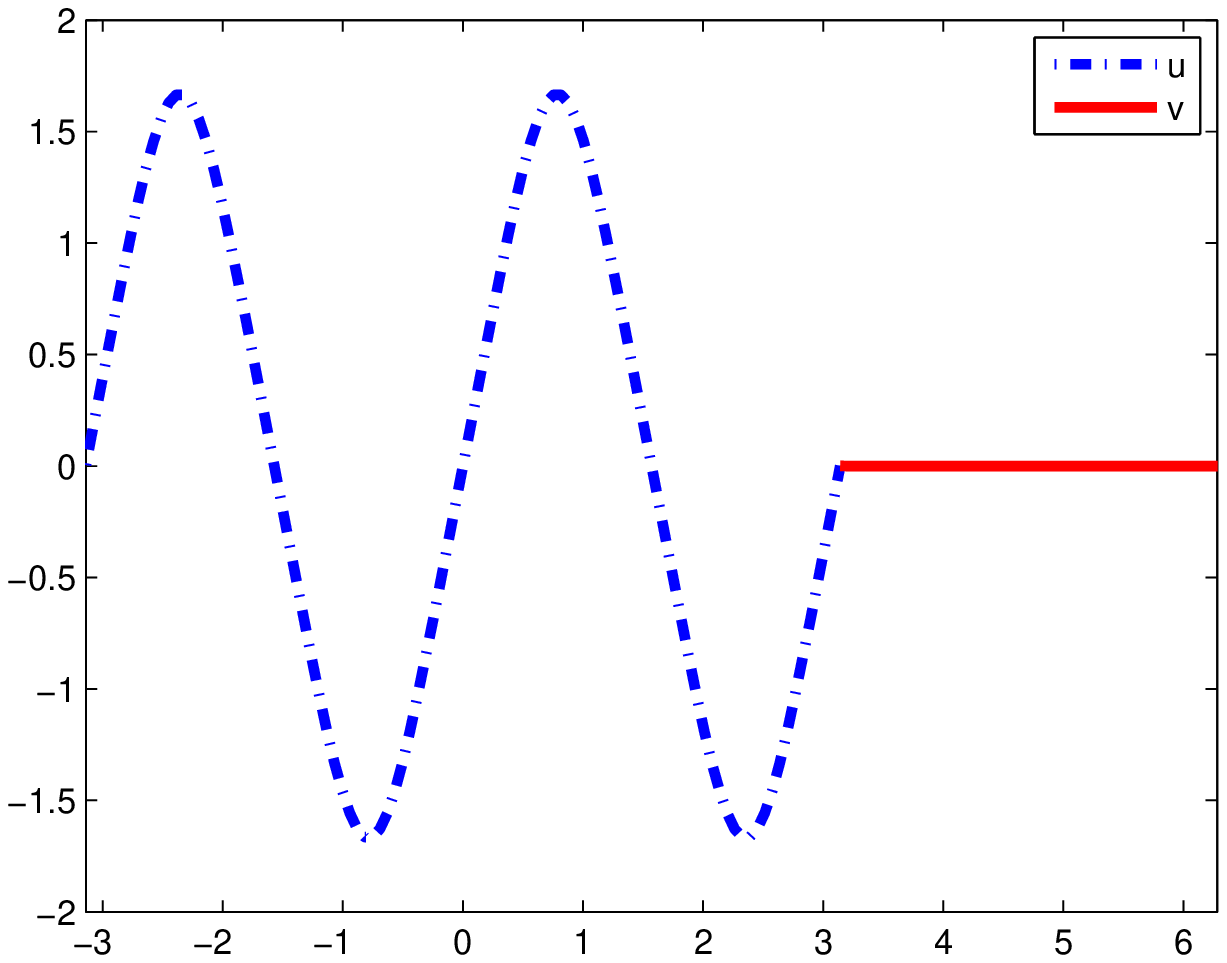}
\end{center}
\caption{Standing wave solutions $(u_{n,\omega}^+,0)$ versus $x$ for $n = 1$ (a) and $n = 2$ (b)
corresponding to $\omega = -1$.}
\label{fig0}
\end{figure}

\begin{figure}[htbp]
\begin{center}
\includegraphics[scale=0.53]{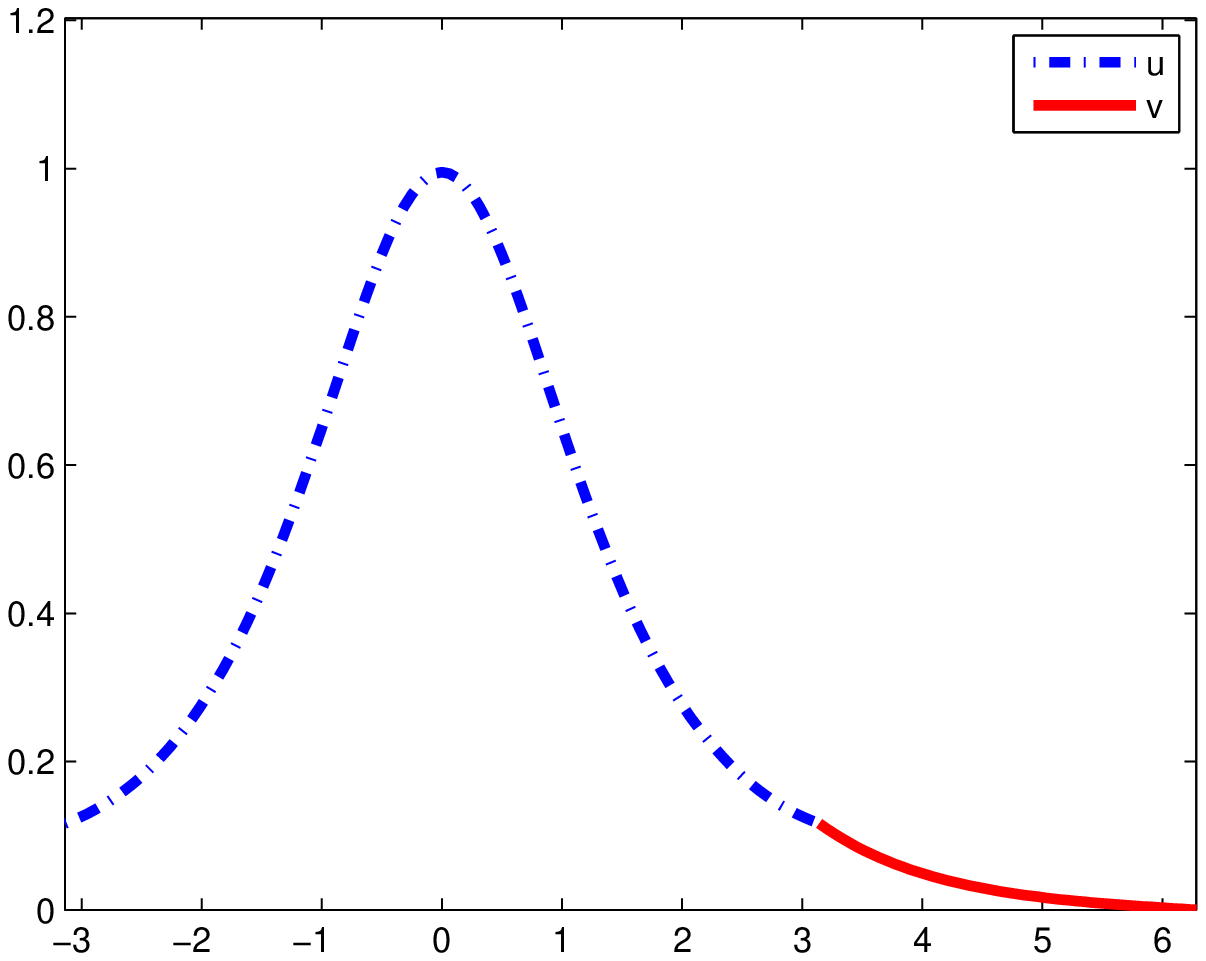}\\
\includegraphics[scale=0.47]{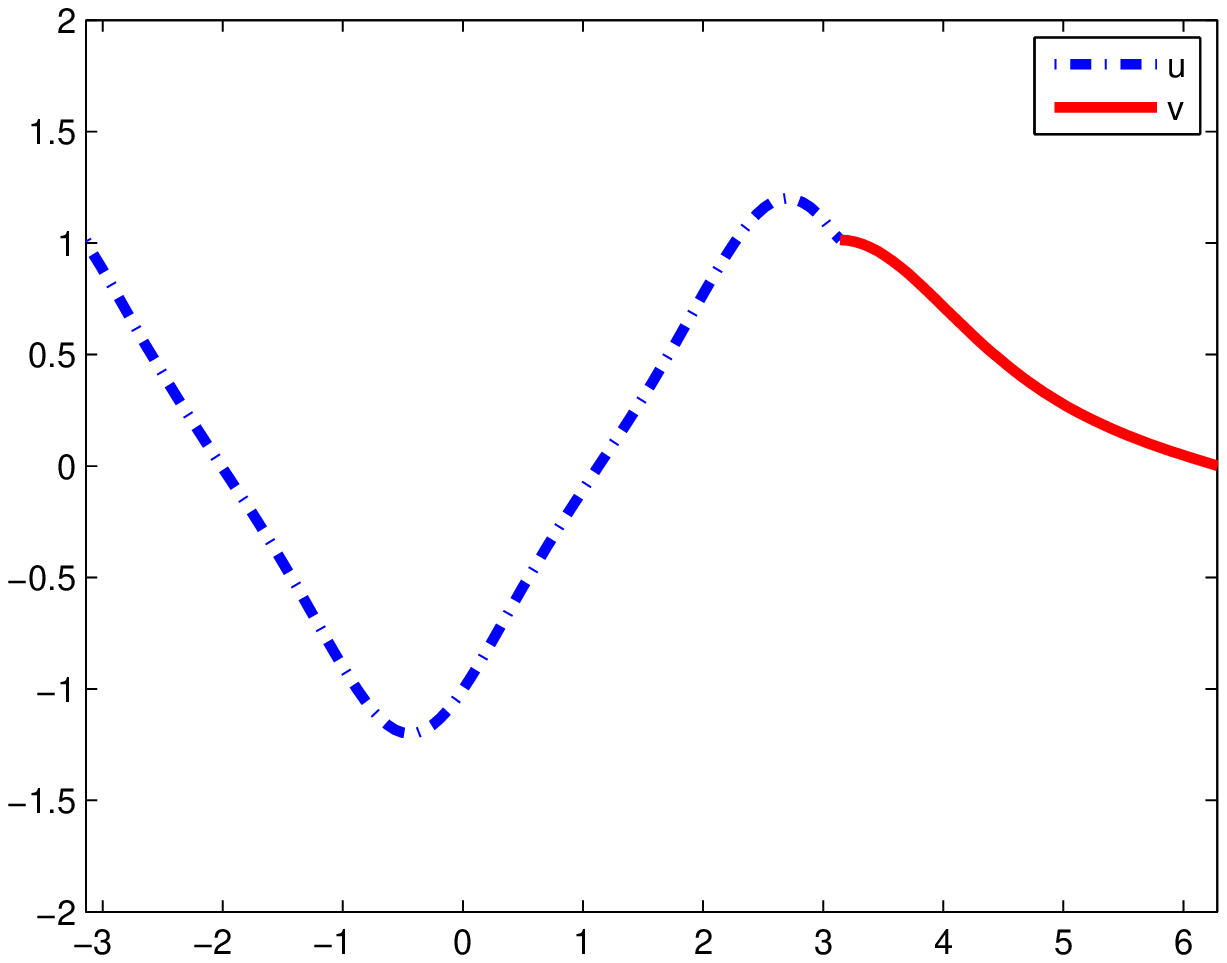}
\includegraphics[scale=0.47]{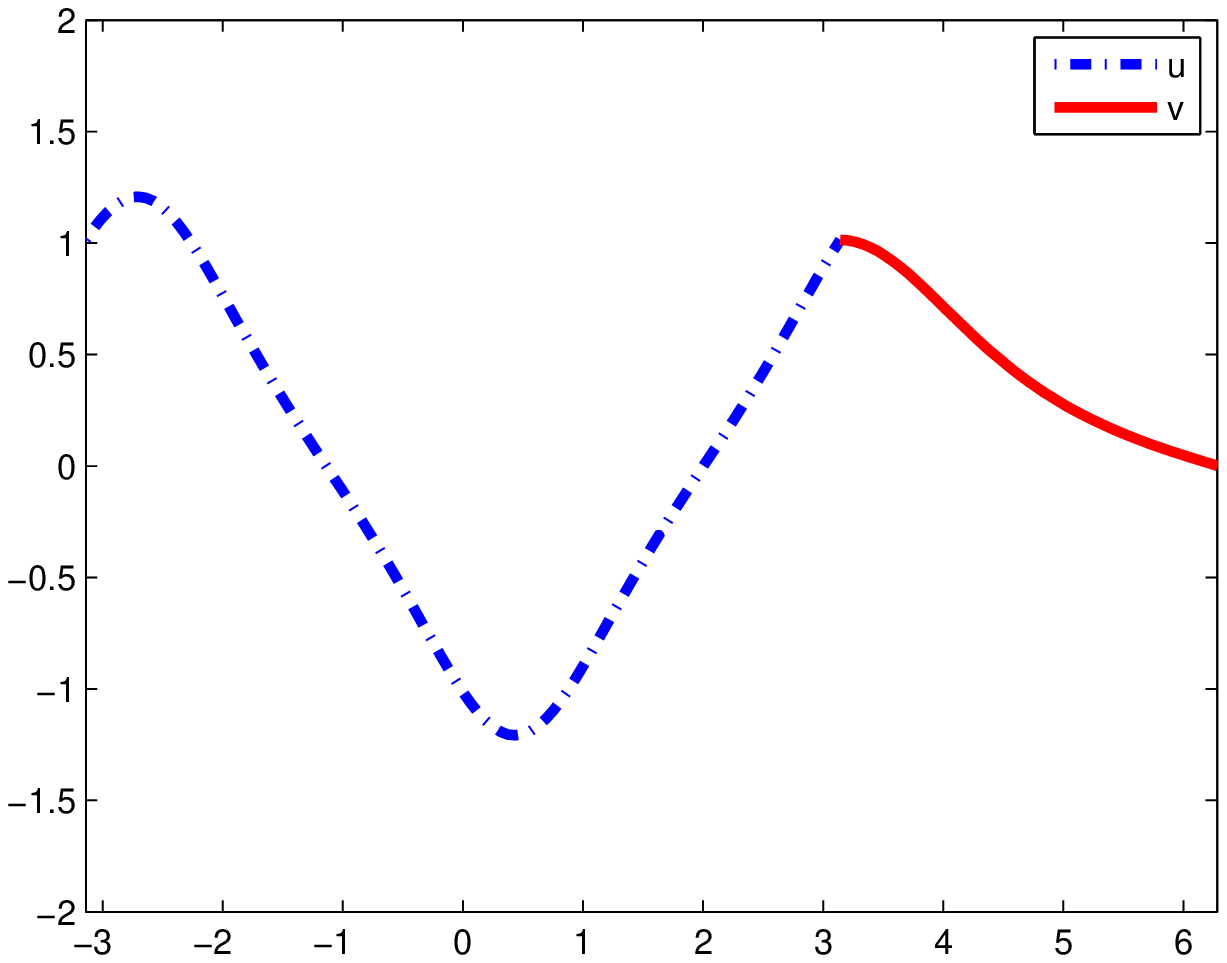}\\
\includegraphics[scale=0.47]{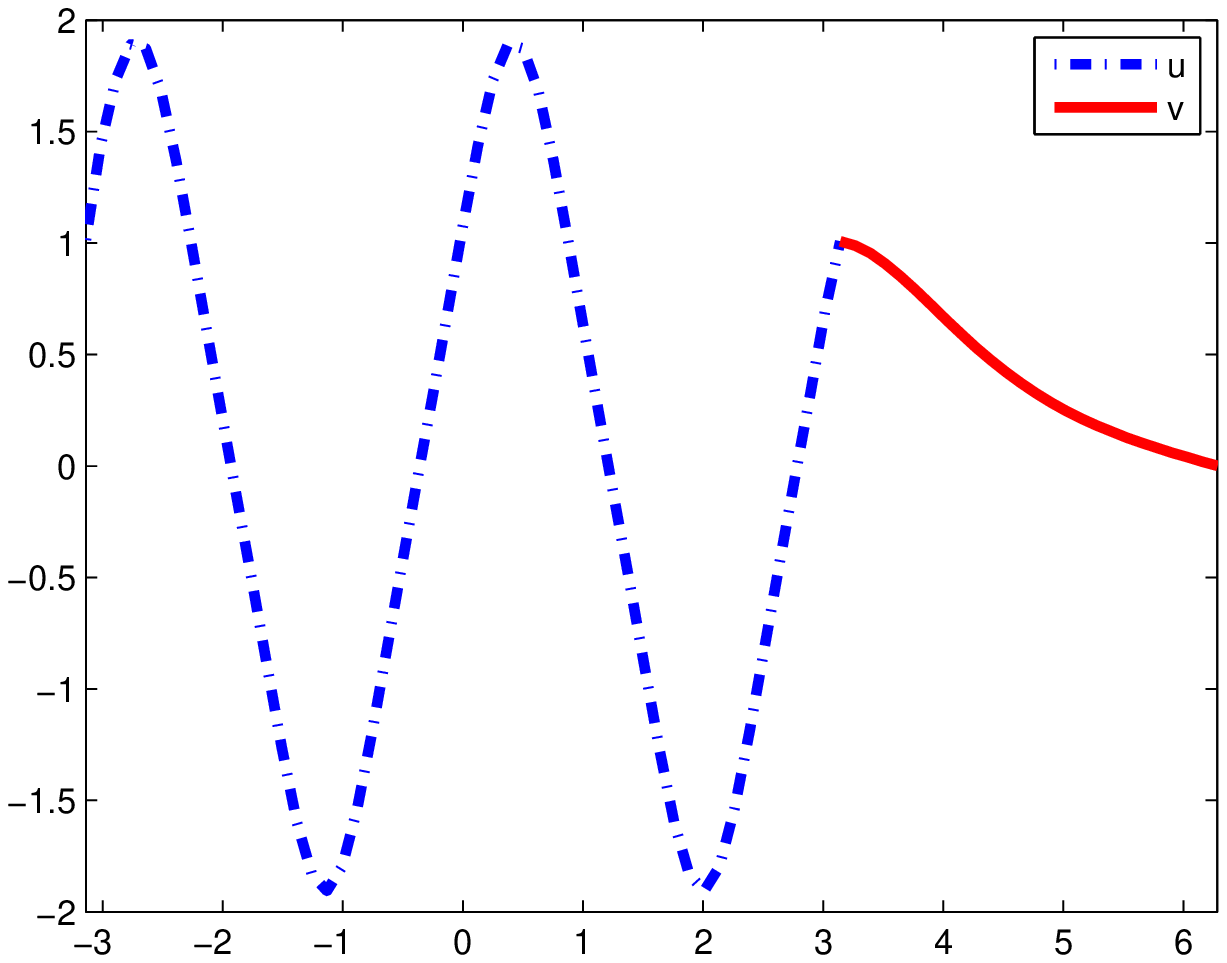}
\includegraphics[scale=0.47]{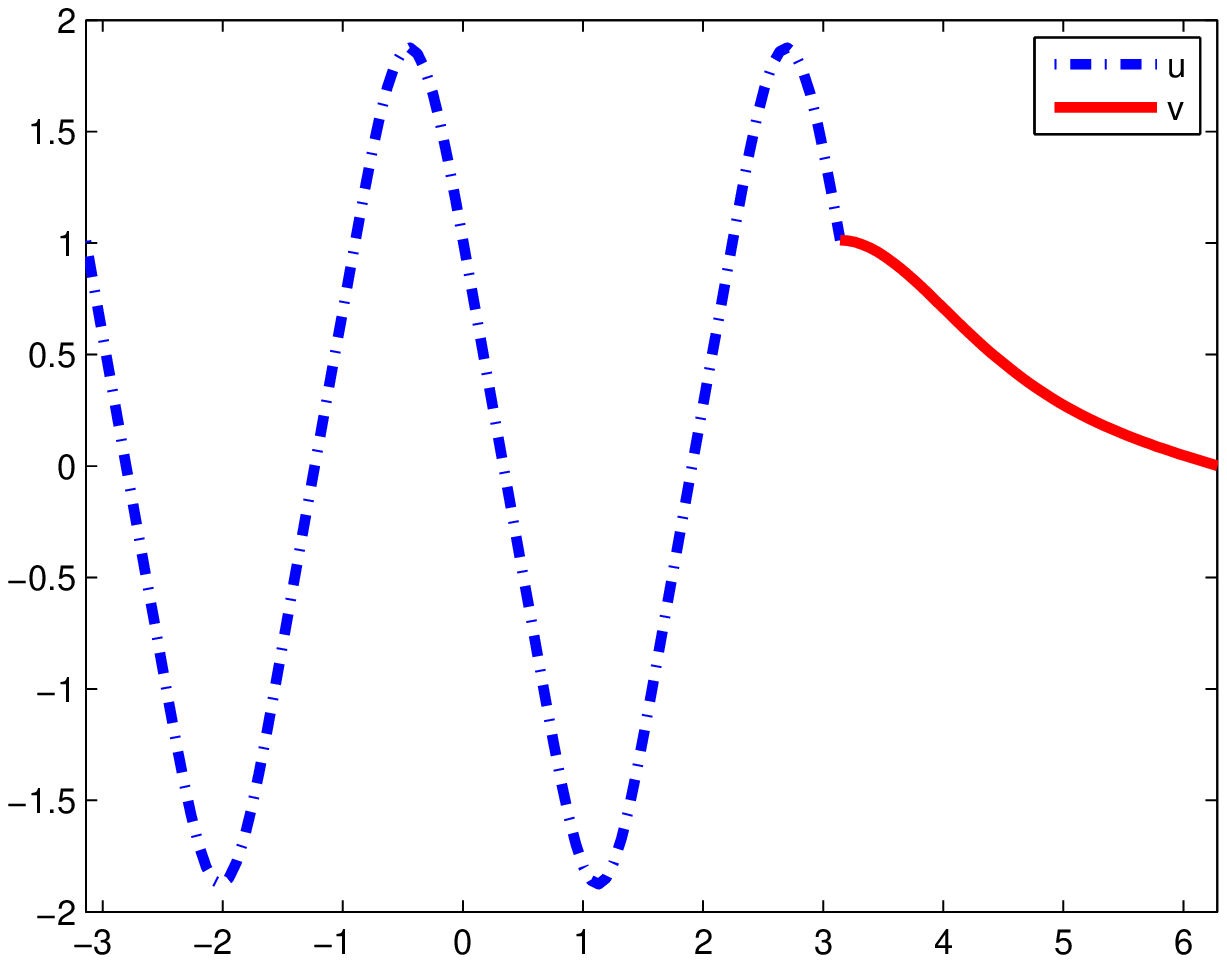}
\end{center}
\caption{Standing wave solutions $(u,v)$ versus $x$ for $\omega = -1$ along the primary branch (a) and two representatives
of the higher branches for $n = 1$ (b,c)
and $n = 2$ (d,e).}
\label{fig-sol}
\end{figure}

We then discretize the spectral problems (\ref{L-minus-nls}) and (\ref{L-plus-nls}) with the same second-order difference method
to obtain the negative and zero eigenvalues of the self-adjoint operators $L_-$ and $L_+$
for $\omega < 0$.
For eigenvalue computations, we use the MATLAB eigenvalue solver.
Figure \ref{fig1} shows the lowest six eigenvalues of these operators versus parameter $\omega$
for the standing wave solutions $(u_{n,\omega}^+,0)$ with $n = 1$ (a) and $n = 2$ (b).
The results are also identical for the standing wave solutions $(u_{n,\omega}^-,0)$.
In agreement with Propositions \ref{proposition-stability} and \ref{proposition-second-group},
we count $2n-1$ negative eigenvalues for the operator $L_-$ and $2n$ negative eigenvalues for the operator $L_+$
for every $\omega < 0$. In addition, the operator $L_-$ has a simple zero eigenvalue
and the operator $L_+$ admits no zero eigenvalue for $\omega < 0$. Note that the operator $L_+$ has a small positive eigenvalue,
which stays above zero for every $\omega < 0$.

Figure \ref{fig2} shows similar results for the lowest six eigenvalues of operators $L_-$ and $L_+$
versus parameter $\omega$ for the standing wave solutions $(u,v)$ along the higher branches
bifurcating from the solutions $(u_{n,\omega}^+,0)$ with $n = 1$ (a) and $n = 2$ (b).
The results are not identical but very similar for the standing wave solutions $(u,v)$
bifurcating from the solutions $(u_{n,\omega}^-,0)$ (not shown). In agreement with Lemma \ref{lemma-L-minus-plus},
we count $2n-1$ negative eigenvalues and one simple zero eigenvalue for the operator $L_-$ and $2n+1$ negative eigenvalues
for the operator $L_+$ for every $\omega < 0$. We also checked (not shown) that
the operator $L_-$ for the standing wave solution ($u,v$)
along the primary branch has no negative eigenvalues and a simple zero eigenvalue,
whereas the operator $L_+$ has a simple negative eigenvalue and no zero eigenvalues,
in accordance with Lemmas \ref{lemma-L-minus} and \ref{lemma-L-plus}.

\begin{figure}[htbp]
\begin{center}
\includegraphics[scale=0.47]{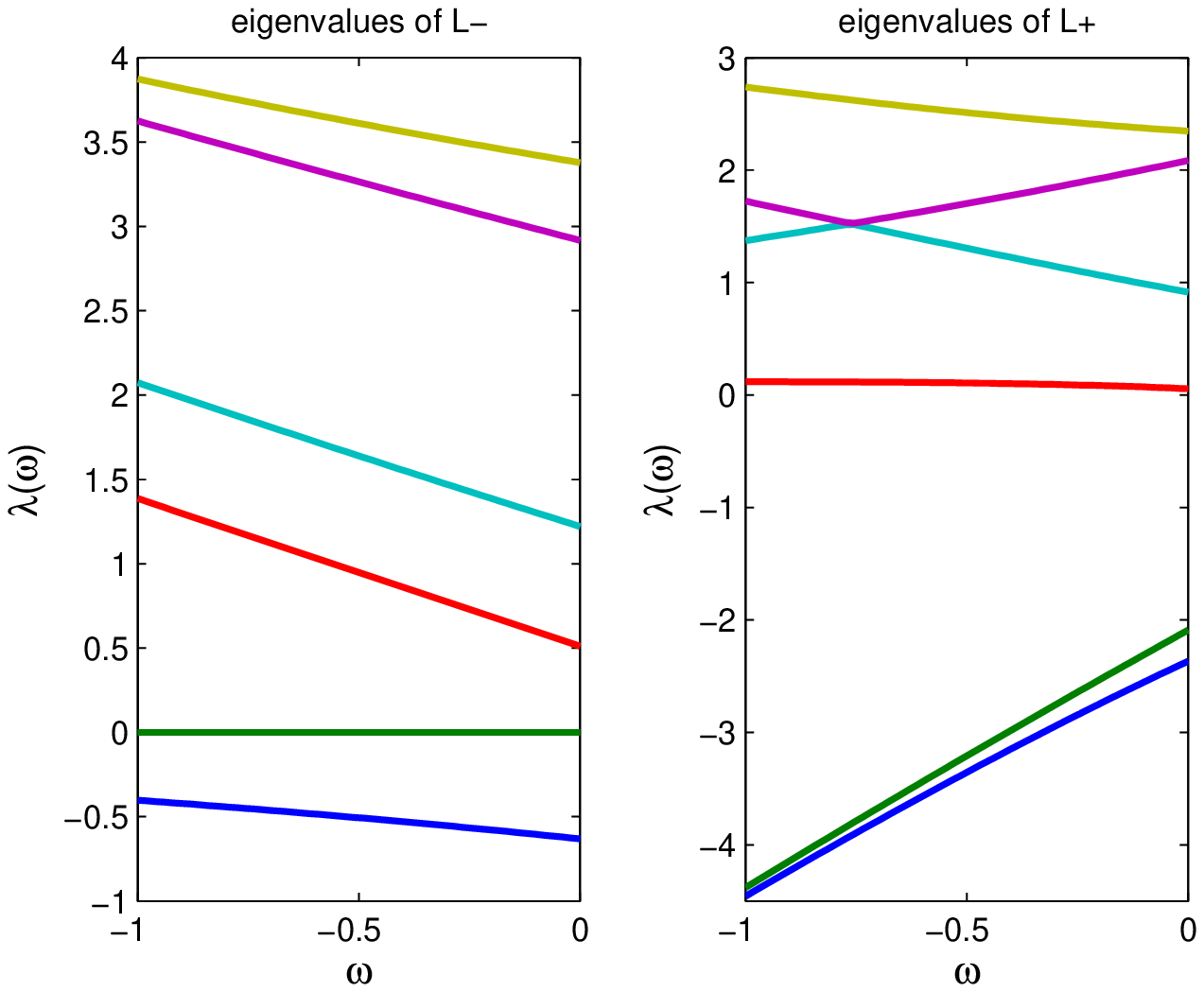}
\includegraphics[scale=0.47]{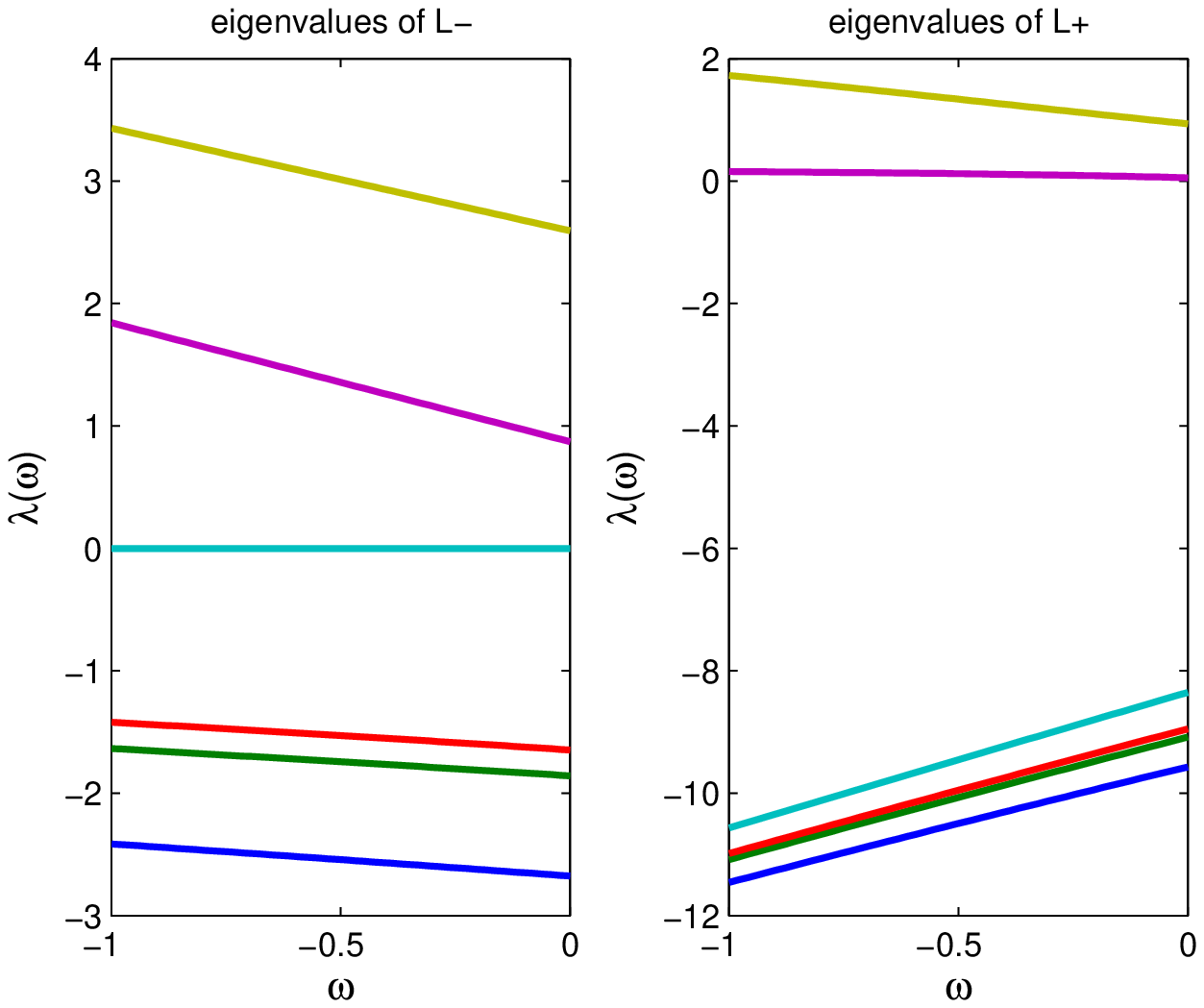}
\end{center}
\caption{Lowest six eigenvalues of operators $L_-$ and $L_+$ versus parameter $\omega$ for the
 standing wave solution $(u_{n,\omega}^+,0)$ with $n = 1$ (a) and $n = 2$ (b).}
\label{fig1}
\end{figure}

\begin{figure}[htbp]
\begin{center}
\includegraphics[scale=0.47]{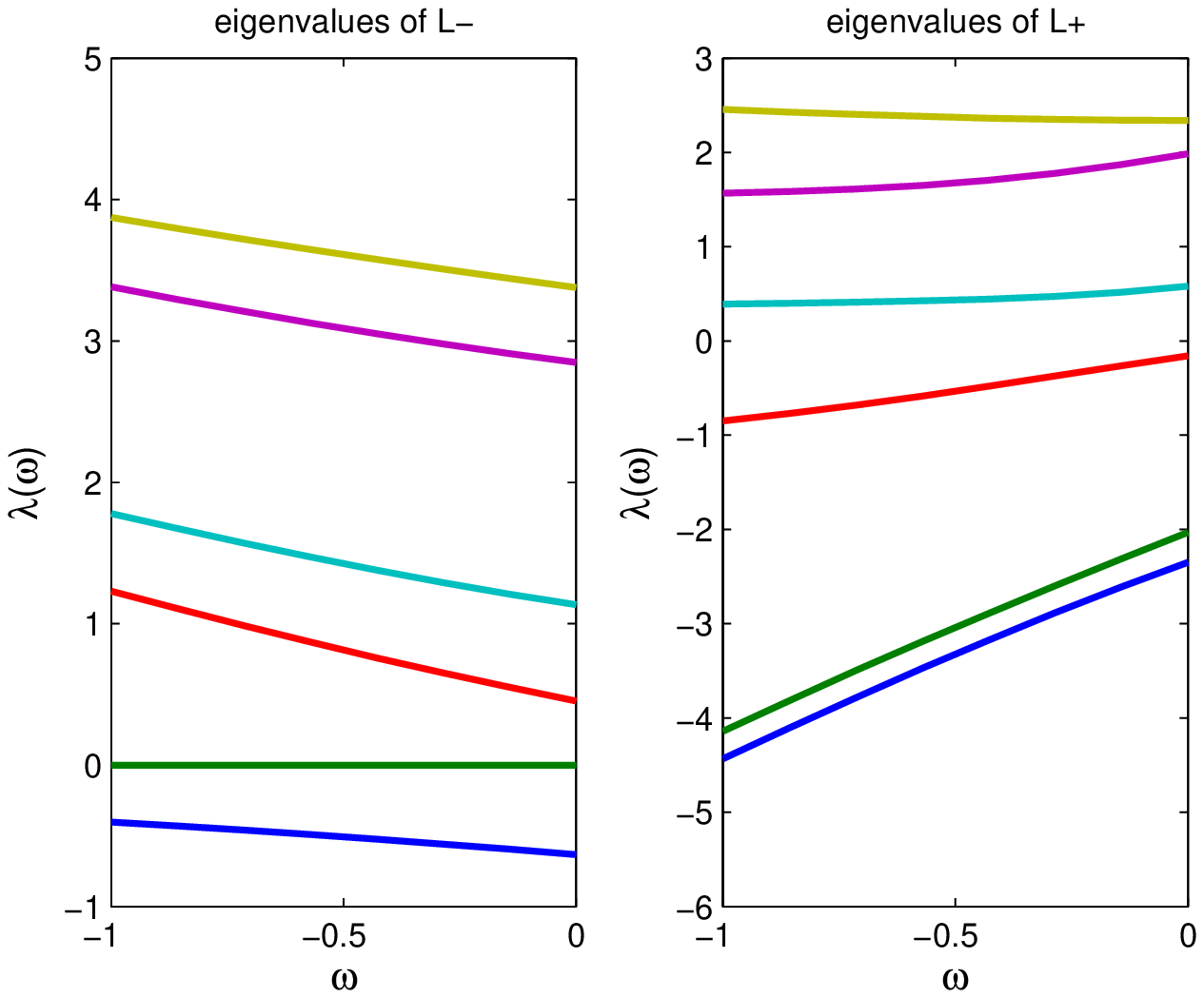}
\includegraphics[scale=0.47]{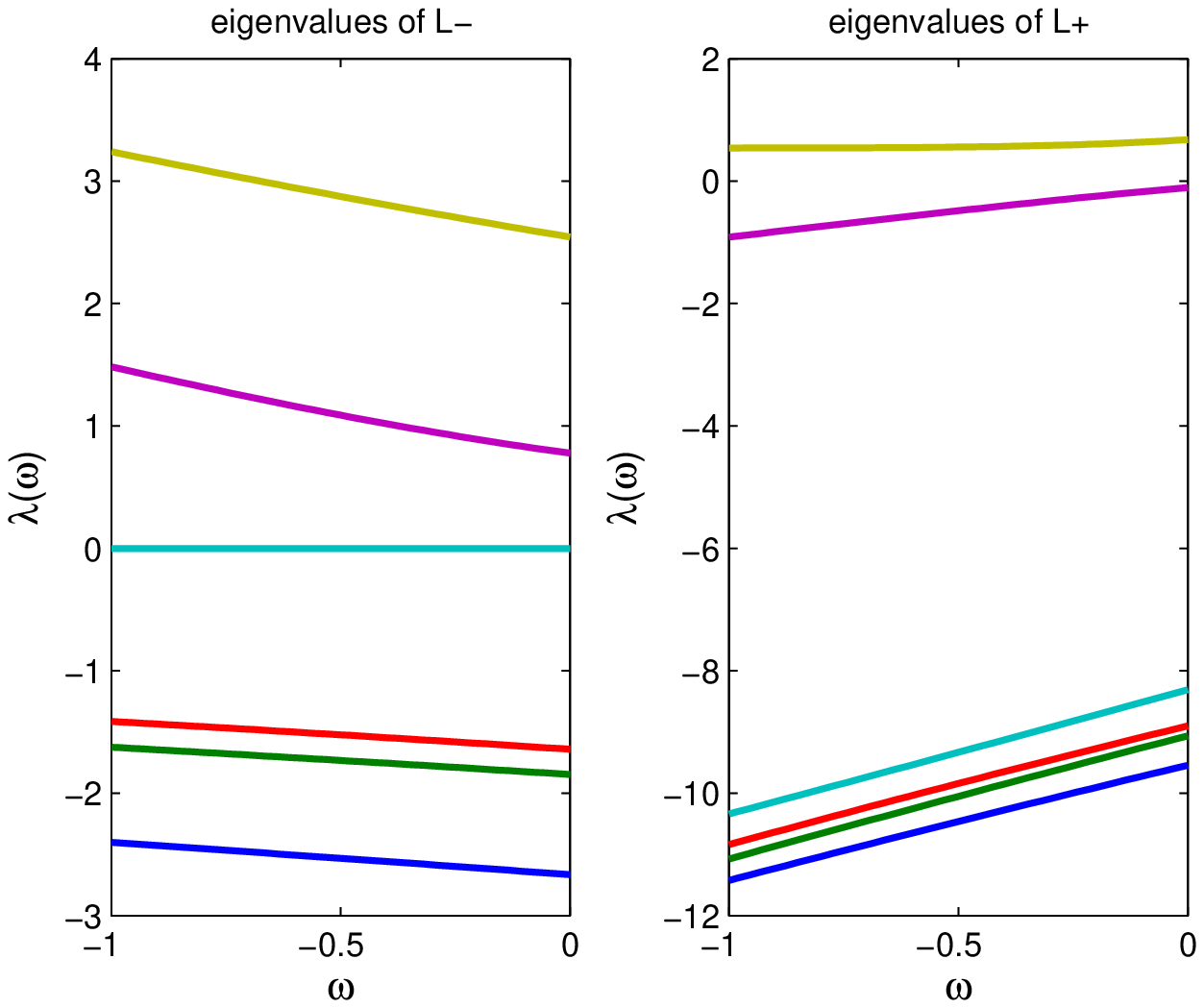}
\end{center}
\caption{Lowest six eigenvalues of operators $L_-$ and $L_+$ versus parameter $\omega$ for the
 standing wave solution $(u,v)$ along the higher branches with $n = 1$ (a) and $n = 2$ (b).}
\label{fig2}
\end{figure}

Finally, we discretize the spectral stability problem (\ref{spect-prob}) with the same second-order difference method
to obtain the unstable eigenvalues associated with the standing wave solutions for $\omega < 0$.
Figure \ref{fig3} shows all eigenvalues on the complex plane for
the standing wave solutions $(u_{n,\omega}^+,0)$ with $n = 1$ (a) and $n = 2$ (b) corresponding to $\omega = -1$.
Figure \ref{fig4} shows real and imaginary parts of the corresponding unstable eigenvalues versus parameter $\omega$.
For small negative $\omega$, we observe $2n-1$ quartets of complex eigenvalues, which means that
in addition to $n$ complex quartets predicted from the spectral problem (\ref{L-vector-2}),
there exists $n-1$ complex quartets in the spectral problem (\ref{L-vector-1}). Indeed, this does not
contradict to the theory and indicates that Conjecture \ref{conjecture-stability-higher-order} is true but
the actual number of quartets of complex eigenvalues exceed $n$ for $n \geq 2$.

We also note from Figure \ref{fig4} that all quartets of complex eigenvalues disappear for larger negative values
of $\omega$. The complex eigenvalues coalesce at the imaginary parts in the spectral gap away from the continuous
spectrum and then split into two pairs of purely imaginary eigenvalues of opposite Krein signatures.
This can be explained from the fact that the standing wave solution $(u_{n,\omega}^+,0)$ shown on Figure \ref{fig0}
look like a sequence of $2n$ NLS solitary waves of opposite polarity as $\omega \to -\infty$.
As is well known from the qualitative theory of soliton interactions \cite{NLS-sol},
pairs of NLS solitary waves of opposite polarity repel each other, so that the standing wave solution
$(u_{n,\omega}^+,0)$ represents an equilibrium configuration under a balance of repulsive force between solitary waves of
opposite polarity (which include $2n$ solitary waves on the interval $[-L,L]$ and additional
solitary waves outside $[-L,L]$, which are reflected
anti-symmetrically by the Dirichlet boundary conditions at the boundaries). Such equilibrium
configurations are spectrally stable, which explains qualitatively
disappearance of the complex unstable eigenvalues
in the limit $\omega \to -\infty$.

\begin{figure}[htbp]
\begin{center}
\includegraphics[scale=0.47]{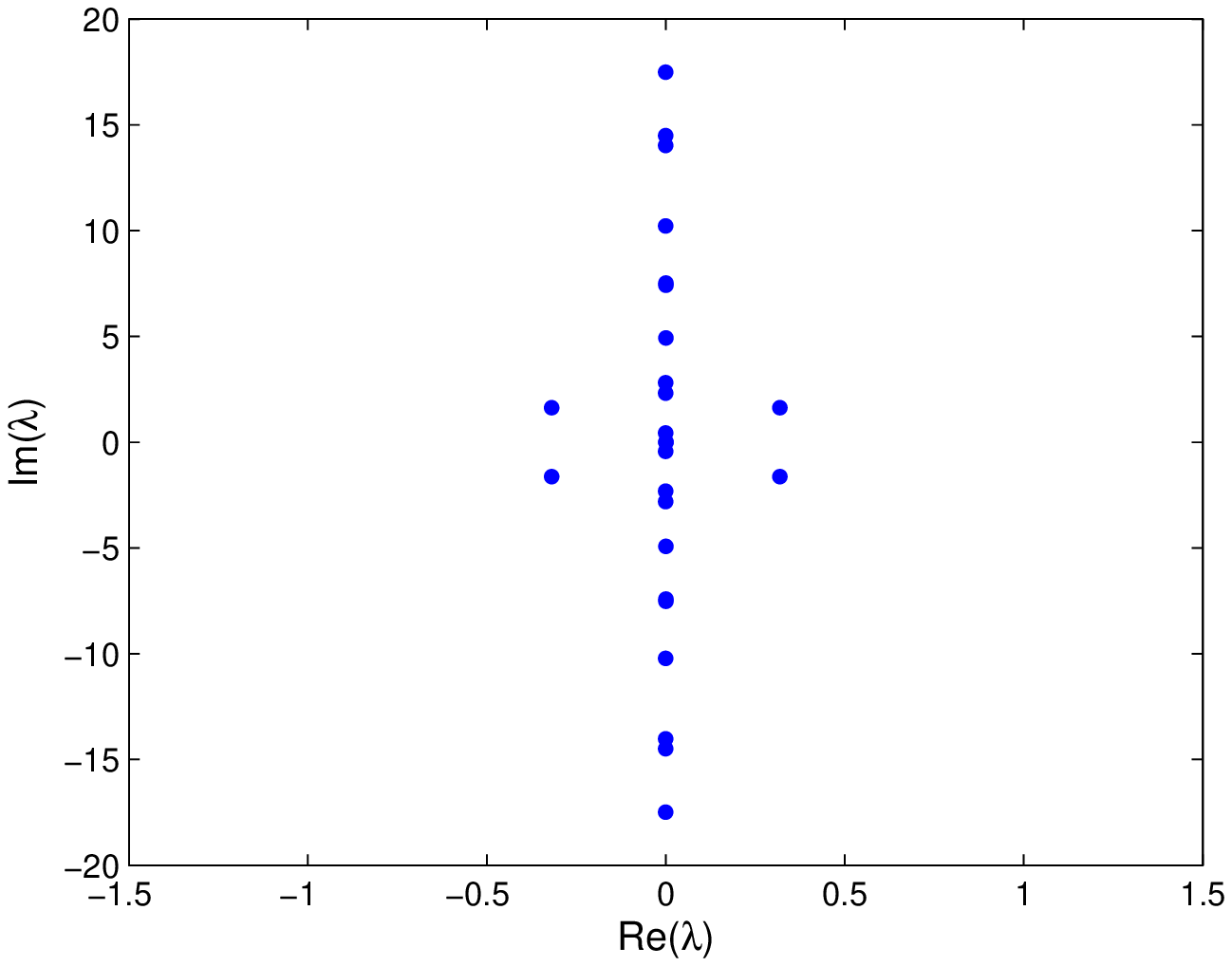}
\includegraphics[scale=0.47]{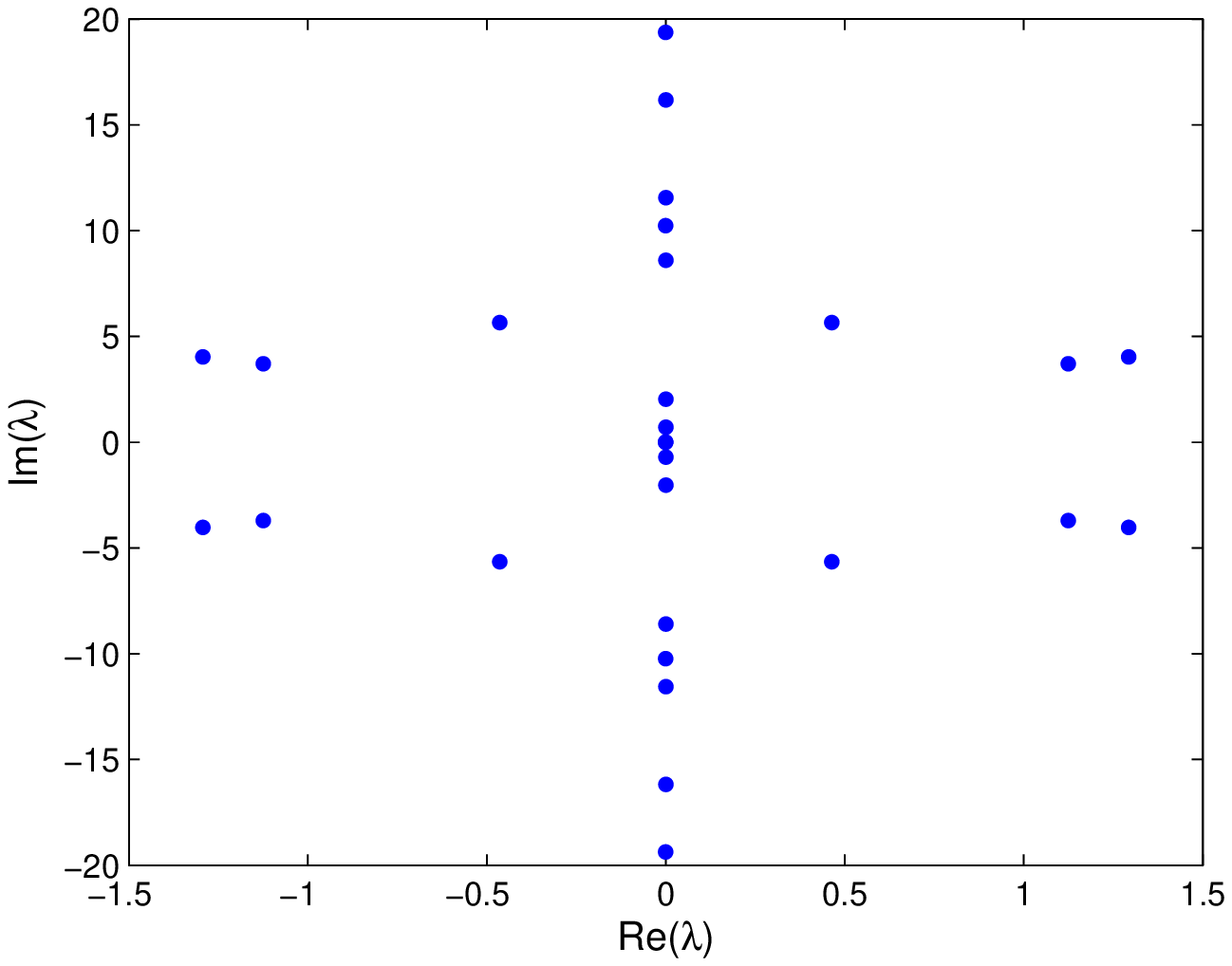}
\end{center}
\caption{Eigenvalues $\lambda$ of the spectral problem (\ref{spect-prob}) on the complex plane
for the standing wave solutions $(u_{n,\omega}^+,0)$ with $n = 1$ (a) and $n = 2$ (b) corresponding to $\omega = -1$.}
\label{fig3}
\end{figure}

\begin{figure}[htbp]
\begin{center}
\includegraphics[scale=0.45]{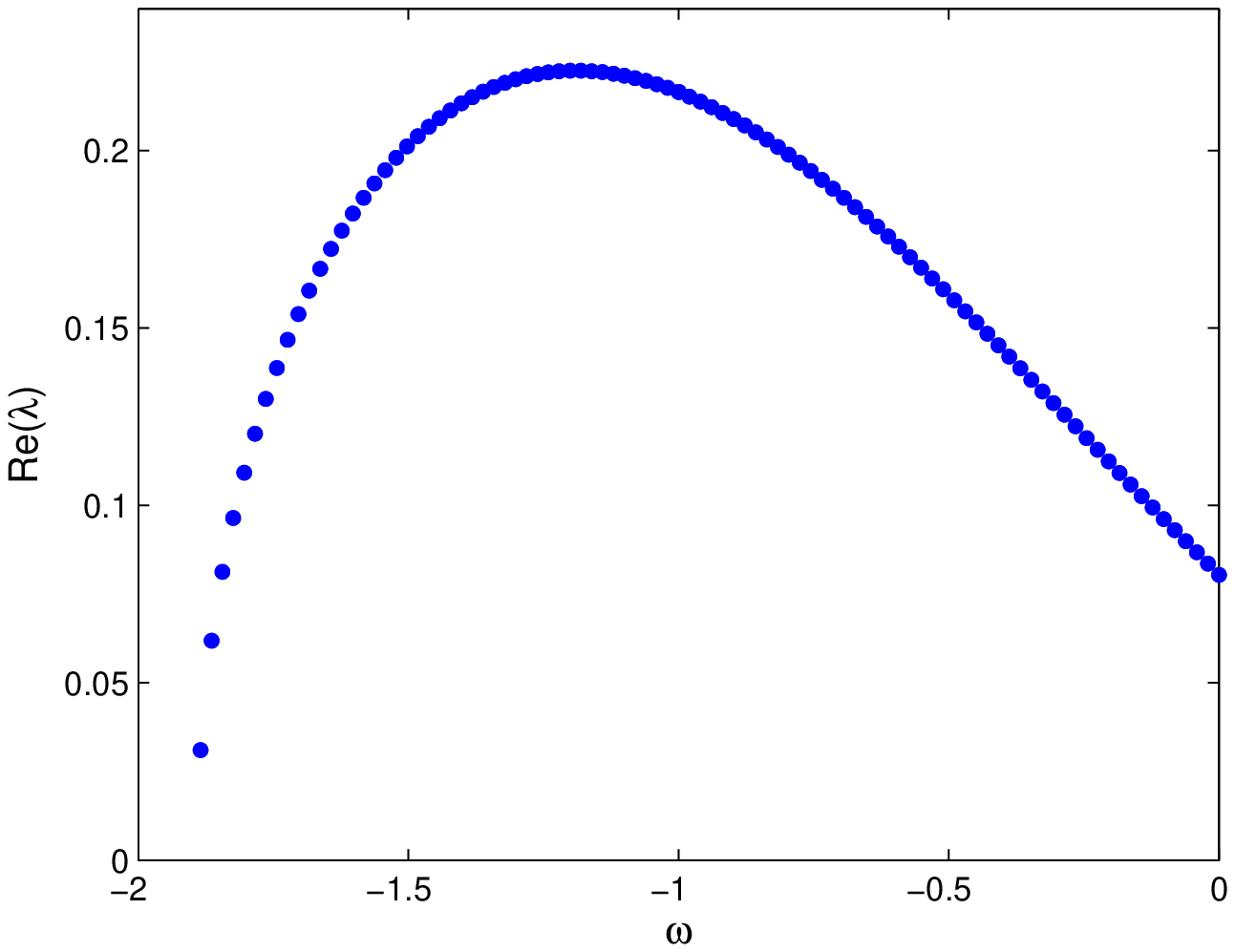}
\includegraphics[scale=0.45]{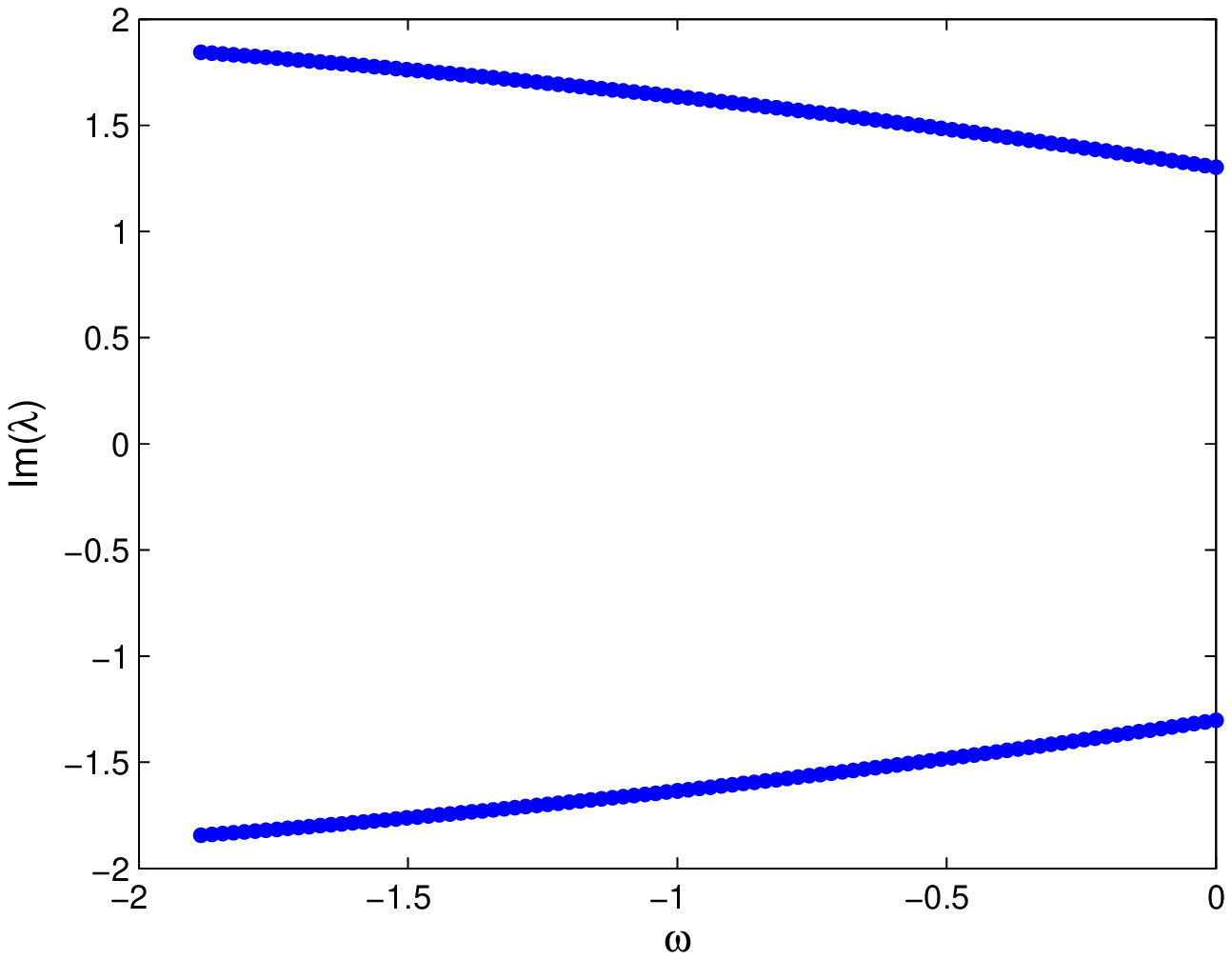}\\
\includegraphics[scale=0.45]{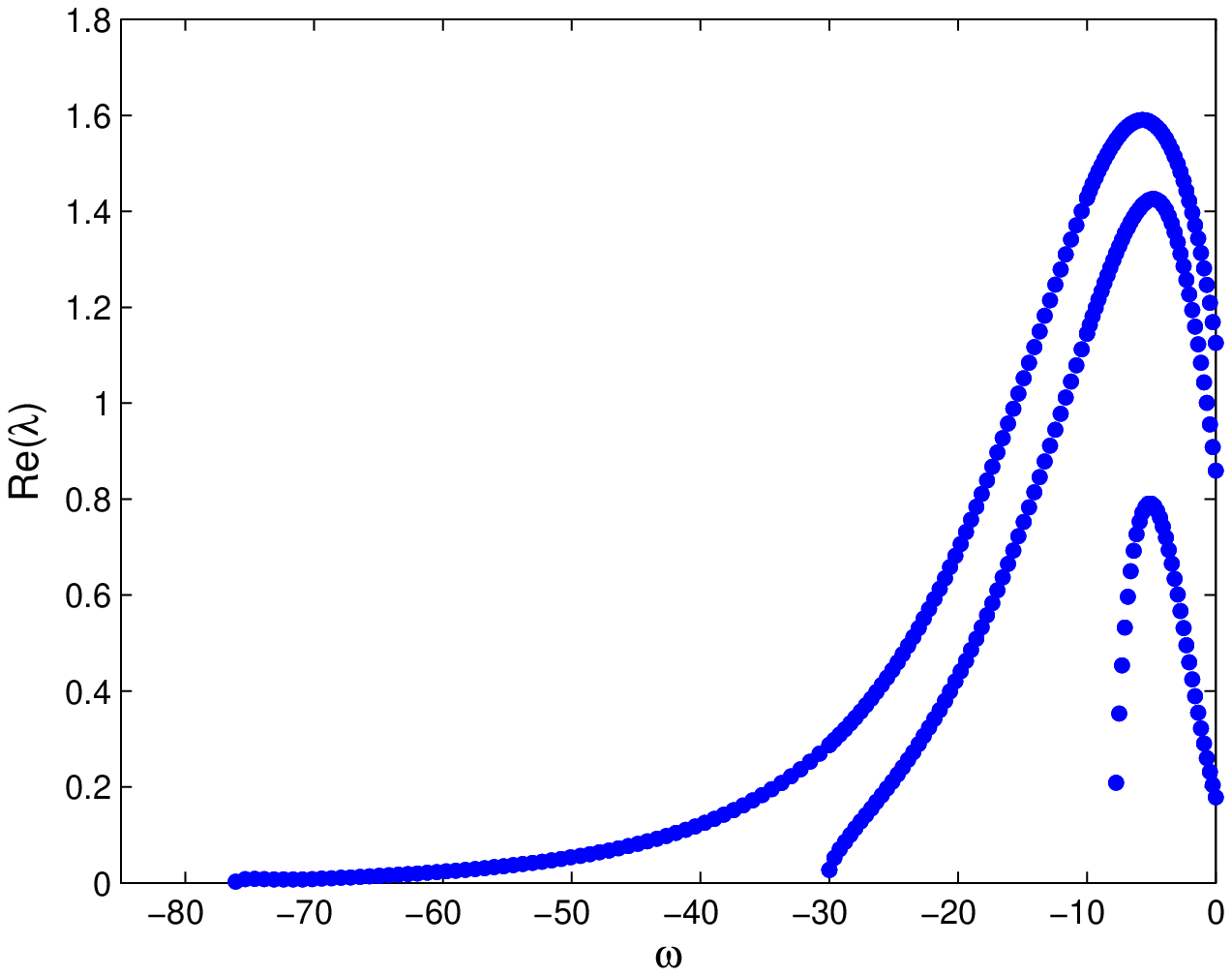}
\includegraphics[scale=0.45]{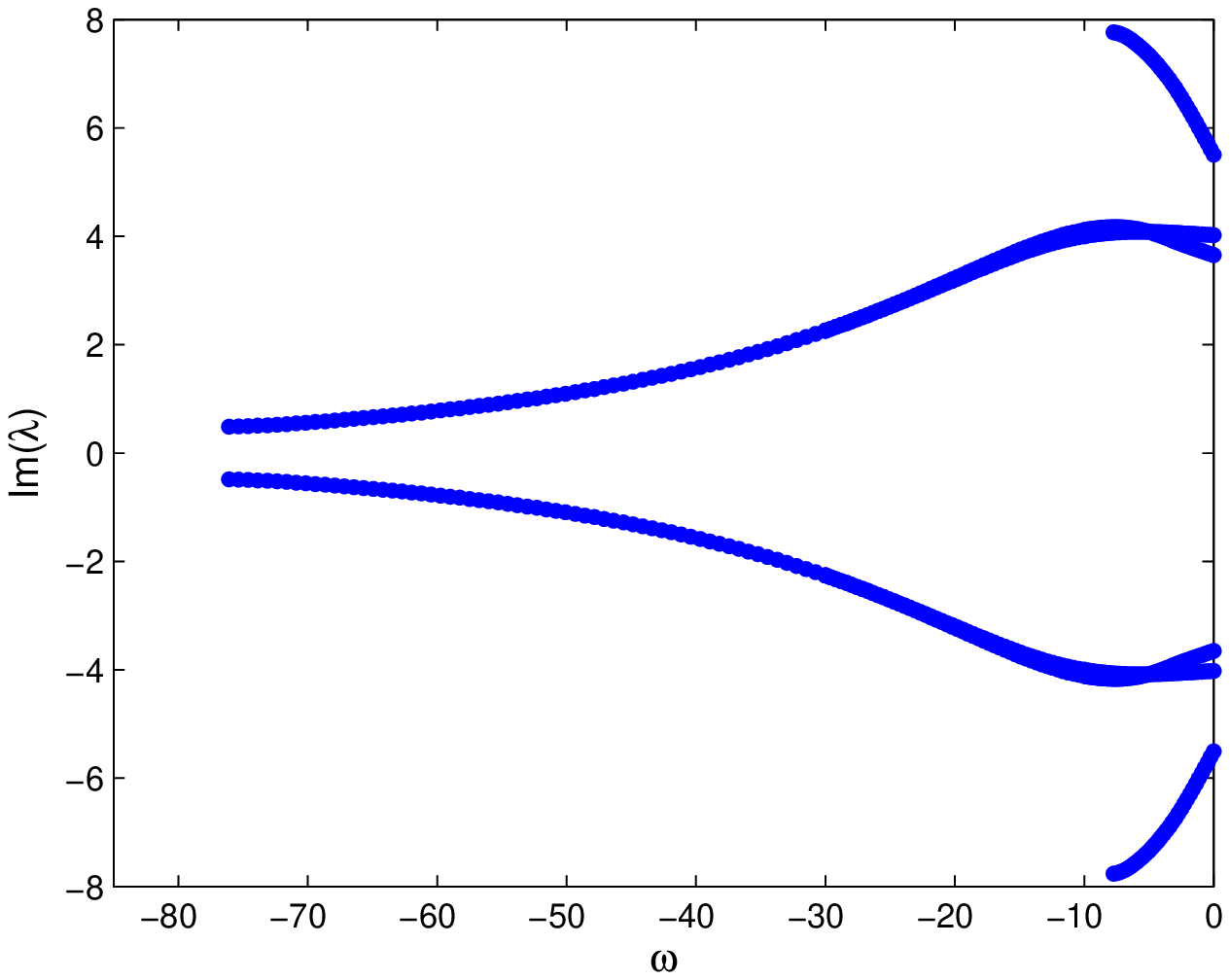}
\end{center}
\caption{Real and imaginary parts of the unstable eigenvalues $\lambda$ of the spectral problem (\ref{spect-prob})
 versus parameter $\omega$ for the standing wave solutions $(u_{n,\omega}^+,0)$ with $n = 1$ (a) and $n = 2$ (b).}
\label{fig4}
\end{figure}

Regarding the standing wave solutions along the primary and higher branches, we checked numerically
the validity of Theorems \ref{theorem-stability-primary} and \ref{theorem-stability-higher-order} (the corresponding
numerical results are not shown).
The standing wave solution $(u,v)$ along the primary branch is stable for every $\omega < 0$, in agreement
with the number of negative eigenvalues and
the orbital stability theory in \cite{GSS1}. The standing wave solutions $(u,v)$ along the higher
branches have a pair of real unstable eigenvalues in addition to the $(2n-1$) quartets of complex
eigenvalues, which are inherited from the standing wave solutions $(u_{n,\omega}^+,0)$,
from which the higher branches bifurcate off. The pair of real unstable eigenvalues
persists for every $\omega < 0$, whereas the complex quartets split into pairs of purely imaginary
eigenvalues for large negative values of $\omega$, similar to what is observed on
Fig. \ref{fig4}. These numerical results are in agreement with the number of
negative eigenvalues on Fig. \ref{fig2} and the spectral instability theory in \cite{GSS2,Gr}.

\newpage

{\bf Acknowledgements:} The work of D.N. is supported by the Ministry of University and
Research of Italian Republic, FIRB project 2012 (code RBFR12MXPO),
The work of D.P. is supported by the Ministry of Education
and Science of Russian Federation (the base part of the state task No. 2014/133).
The work of G.S. is supported by the Bolashak visiting fellowship of the government of Kazakhstan.

\end{document}